\newtheorem{assumption}{Assumption}
\newtheorem{problem}{Problem}
\newtheorem{theorem}{Theorem}
\newtheorem{algorithm}{Algorithm}
\newtheorem{lemma}{Lemma}
\newtheorem{remark}{Remark}
\newtheorem{corollary}{Corollary}
\def\BState{\State\hskip-\ALG@thistlm}
\newtheoremstyle{theoremdd}
{\topsep}
{\topsep}
{\itshape}
{0pt}
{\bfseries}
{:}
{ }
{\thmname{#1}\thmnumber{ #2} \thmnote{(#3)}}
\theoremstyle{theoremdd}
\newtheorem{definition}{Definition}
\newcommand{\RNum}[1]{\uppercase\expandafter{\romannumeral #1\relax}}
\newcommand{\sgmin}{\underline{\sigma}}
\newcommand{\sgmax}{\bar{\sigma}}
\newcommand\norm[1]{\lVert#1\rVert}
\DeclareMathOperator{\Tr}{Tr}
\DeclareMathOperator{\vecs}{vecs}
\DeclareMathOperator{\vecv}{vecv}
\DeclareMathOperator{\vect}{vec}
\begin{document}

\title{Robust Reinforcement Learning for Risk-Sensitive Linear Quadratic Gaussian Control}

\author{Leilei Cui, 
         Tamer Başar, ~\IEEEmembership{Life Fellow,~IEEE,} %
         and Zhong-Ping Jiang,~\IEEEmembership{Fellow,~IEEE}
\thanks{*This work has been supported in part by the NSF grants EPCN-1903781 and ECCS-2210320.}
\thanks{L. Cui and Z.~P.~Jiang are with the Control and Networks Lab, Department of Electrical and Computer Engineering, Tandon School of Engineering, New York University, Brooklyn, NY 11201, USA (e-mail: l.cui@nyu.edu;  zjiang@nyu.edu).}
\thanks{T. Başar is with the Coordinated Science Laboratory, University of Illinois Urbana-Champaign, Urbana, IL 61801 USA (e-mail: basar1@illinois.edu).}
}


\maketitle
\begin{abstract}
This paper proposes a novel robust reinforcement learning framework for discrete-time linear systems with model mismatch that may arise from the sim-to-real gap. A key strategy is to invoke advanced techniques from control theory. Using the formulation of the classical risk-sensitive linear quadratic Gaussian control, a dual-loop policy optimization algorithm is proposed to generate a robust optimal controller. The dual-loop policy optimization algorithm is shown to be globally and uniformly convergent, and robust against disturbances during the learning process. This robustness property is called small-disturbance input-to-state stability and guarantees that the proposed policy optimization algorithm converges to a small neighborhood of the optimal controller as long as the disturbance at each learning step is relatively small. In addition, when the system dynamics is unknown, a novel model-free off-policy policy optimization algorithm is proposed. Finally, numerical examples are provided to illustrate the proposed algorithm.
\end{abstract}
\begin{IEEEkeywords}
Robust reinforcement learning, policy optimization, risk-sensitive LQG.
\end{IEEEkeywords}

\section{Introduction}

By optimizing a specified accumulated performance index, reinforcement learning (RL), as a branch of machine learning, is aimed at learning optimal decisions from data in the absence of model knowledge. Policy optimization (PO) plays a pivotal role in the development of RL algorithms \cite[Chapter 13]{book_sutton}. The key idea of PO is to parameterize the policy and update the policy parameters along the gradient ascent direction of the performance index for maximization, or gradient descent for minimization. Since the system model is unknown, the policy gradient should be estimated by data-driven methods through sampling and experimentation. Consequently, accurate policy gradient can hardly be obtained because of various errors that may be induced by function approximation, measurement noise, and external disturbance. Therefore, both convergence and robustness properties of PO should be theoretically studied in the presence of gradient estimation error. 

Since linear quadratic regulator (LQR) is theoretically tractable and widely applied in various engineering fields, it stands out as providing an ideal benchmark for studying RL problems. For the LQR problem, the control policy is parameterized as a linear function of the state, i.e. $u_t = -Kx_t$. The corresponding performance index is $\mathcal{J}_{LQR}(K) = \sum_{t=0}^{\infty} \mathbb{E}( x_t^T Q x_t + u_t^T R u_t)$. PO for the LQR problem aims at solving the constrained optimization problem $\min_{K \in \mathcal{W}} \mathcal{J}_{LQR}(K)$, where $\mathcal{W}$ is the admissible set of stabilizing control policies. Since $\mathcal{J}_{LQR}(K)$ can be expressed in terms of a Lyapunov equation, which depends on $K$, $\mathcal{J}_{LQR}(K)$ is differentiable in $K$. Based on this result, standard gradient descent, natural policy gradient, and Newton gradient algorithms have been developed to minimize the performance index $\mathcal{J}_{LQR}(K)$ \cite{Bu2020_lqr,Bu2019_lqrdiscrete,Mohammadi2022,Fazel_2018,Gravell2021,Li2021}. Interestingly, the Newton gradient algorithm with a step size of $\frac{1}{2}$ is equivalent to the celebrated Kleinman's policy iteration algorithm \cite{Hewer1971,Kleinman1968, book_Jiang, Book_Lewis}. By the coercive property of the performance index $\mathcal{J}_{LQR}(K)$ ($\mathcal{J}_{LQR}(K) \to  \infty$ as $K \to \partial \mathcal{W}$), the stability of the updated control policy is maintained during PO. Furthermore, the global linear convergence rate of the algorithms is theoretically demonstrated by the gradient dominance property which is shown in \cite[Remark 2]{Mohammadi2022} and \cite[Lemma 3]{Fazel_2018}.  One of the reasons for using PO in these model-based approaches (perhaps the most important one) is that it provides a natural pathway to model-free analysis, where the RL techniques come into play. For example, when the system model is unknown, zeroth-order methods are applied to approximate the gradient of the performance index, supported by a sample complexity analysis \cite{Fazel_2018,Li2021,Mohammadi2022}. Since the estimation error of policy gradient is inevitable at each iteration, whether the error will accumulate and whether the adopted algorithms still converge in the presence of estimation error should be further studied.  By considering the PO algorithm as a nonlinear discrete-time system and invoking input-to-state stability \cite{Sontag2008}, the authors of \cite{Pang2021,Pang2022} show that the Kleinman's policy iteration algorithm can still find a near-optimal control policy even under the influence of estimation error. A similar robustness property is investigated for the steepest gradient descent algorithm \cite{Sontag2022}.

The aforementioned PO for the LQR problem cannot guarantee the robustness of the closed-loop system. For example, the obtained controller may fail to stabilize the system in the presence of model mismatch that may be induced by the sim-to-real gap and parameter variation. Risk-sensitive linear quadratic Gaussian (LQG) control was first proposed by \cite{Jacobson1973,Whittle1981}, which generalizes the risk-neutral optimal control (i.e. LQR) by minimizing the expectation of the accumulative quadratic cost transformed by the exponential function. It was shown in \cite{glover1988state} and \cite{book_Basar} that the risk-sensitive LQG control is equivalent to the mixed $\mathcal{H}_2$/$\mathcal{H}_\infty$ control and the linear quadratic zero-sum dynamic game. Therefore, it can guarantee the stability of the closed-loop system even under model mismatch. The authors of \cite{BORKAR2001339,Borkar2002,mihatsch2002risk} proposed RL algorithms for solving the model-free risk-sensitive control, but the methods are only applicable to Markov decision processes whose state and action spaces are finite. In \cite{Tamimi2007,Yufeng2021}, the authors proposed Q-learning algorithms for linear quadratic zero-sum dynamic games. The paper \cite{ZhangSICON} proposed PO methods for mixed $\mathcal{H}_2/\mathcal{H}_\infty$ control to guarantee robust stability of the closed-loop system. Through the concept of implicit regularization, it was shown that the proposed PO algorithms can find the globally optimal solution of mixed $\mathcal{H}_2/\mathcal{H}_\infty$ control at globally sublinear and locally superlinear rates. As there is a fundamental connection between mixed $\mathcal{H}_2/\mathcal{H}_\infty$ control and linear-quadratic zero-sum dynamic games (LQ ZSDGs), the natural policy gradient and Newton algorithms have been equivalently transformed into provably convergent dual-loop PO algorithms for the LQ ZSDG \cite{Zhang2019_game,Bu2019_game,ZhangNeuIps2020}. The outer loop is to learn a protagonist under the worst-case adversary while the inner loop is to learn a worst-case adversary. In this way, the protagonist can robustly perform the control tasks under the disturbances created by the adversary. Interestingly, the Newton algorithm with a step size of $\frac{1}{2}$ is equivalent to the policy iteration algorithm for ZSDG \cite{Lewis2007,Lewis_2006,Kyriakos2012}. In the aforementioned papers, the convergence of the learning algorithms for the risk-sensitive control has been analyzed under the ideal noise-free case. Besides the convergence, a useful learning algorithm should be robust and is capable of finding a near-optimal solution even in the face of noise that may be induced by noisy experimental data, rounding errors of numerical computation, or the finite-time stopping of the inner loop in the dual-loop learning setup. However, the issues of  uniform convergence and robustness of the dual-loop learning algorithm are still unsolved.

\subsection{Our Contributions in this Paper}
 A fundamental challenge of the convergence of the dual-loop PO algorithm is to address the uniform convergence issue tied to the inner loop. Specifically, the required number of inner-loop iterations should be independent of the outer-loop iteration. Otherwise, as the outer-loop iteration increases, the required number of inner-loop iterations may grow explosively, thus making the dual-loop algorithm not practically implementable. To the best of our knowledge, the uniform convergence of the dual-loop algorithm has not been theoretically analyzed heretofore.

In addition, PO algorithm cannot be implemented accurately in practical applications, due to the influence of various errors arising from gradient estimation error, sensor noise, external disturbance, and modeling error. Hence, a fundamental question arises: Is the  PO algorithm robust to the errors? In particular, does the PO algorithm still converge to a neighbourhood of the optimal solution in the presence of various errors and, if yes, what is the size of the neighborhood? For both the outer and inner loops, the iterative process is nonlinear, and the robustness of the PO algorithm has not been fully understood in the present literature. 

In this paper, we investigate uniform convergence and robustness of the dual-loop iterative algorithm for solving the problem of risk-sensitive linear quadratic Gaussian control. Even though the convergence of the dual-loop iterative algorithm is analyzed separately in \cite{ZhangSICON,ZhangNeuIps2020,Fazel_2018}, uniform convergence and robustness of the overall algorithm are still open problems. To analyze the uniform convergence of the dual-loop algorithm, the key idea is to demonstrate global linear convergence of the inner-loop iteration and find an upperbound on the linear convergence rate. To address the robustness issue, a key strategy of the paper is to invoke techniques from advanced control theory, such as input-to-state stability (ISS) \cite{Sontag2008} and its latest variant called ``small-disturbance ISS'' \cite{Pang2021} to analyze the robustness of the proposed discrete-time iterative algorithm. In the presence of noise during the learning process, it is demonstrated that the PO algorithm still converges to a small neighbourhood of the optimal solution, as long as the noise is relatively  small. Furthermore, based on these results and the technique of approximate dynamic programming \cite{tutorial_Jiang,book_Bertsekas_Neu}, an off-policy data-driven RL algorithm is proposed when the system is disturbed by an immeasurable Gaussian noise. Several numerical examples are given to validate the efficacy of our theoretical results. 

To sum up, our main contributions in this paper are three-fold: 1) the uniform convergence of the dual-loop iterative algorithm is theoretically analyzed; 2) under the framework of the small-disturbance ISS, the robustness of both the outer and inner loops is theoretically demonstrated; 3) a novel learning-based off-policy policy optimization algorithm is proposed.

A shorter and preliminary version of this paper was presented at the conference L4DC 2023 \cite{CuiL4DC2023}. Compared with the conference paper, in this paper, we provide rigorous proofs for all the theoretical results. In addition, in Section V-A, we propose a method to learn an initial admissible controller. Finally, a benchmark example known as cart-pole system is provided to demonstrate the effectiveness of the proposed dual-loop algorithm.

\subsection{Organization of the Paper}
Following this Introduction section, Section \RNum{2} provides some preliminaries on linear exponential quadratic Gaussian (LEQG) control problem, LQG zero-sum dynamic game, and robustness analysis. Section \RNum{3} introduces the model-based dual-loop iterative algorithm to optimize the policy for LEQG control, and the convergence of the algorithm is analyzed. Section \RNum{4} analyzes robustness of the dual-loop iterative algorithm to various errors in the learning process within the framework of ISS. Section \RNum{5} presents a learning-based policy optimization algorithm for LEQG control. Section \RNum{6} provides two numerical examples to illustrate the proposed algorithm. The paper ends with the concluding remarks of Section \RNum{7} and six appendices which include proofs of the main results in the main body of the paper.

\subsection{Notations}
$\mathbb{R}$ and $\mathbb{C}$ are the sets of real and complex numbers, respectively. $\mathbb{Z}$ ($\mathbb{Z}_+$) is the set of (positive) integers. $\mathbb{S}^{n}$ is the set of $n$-dimensional real symmetric matrices. $|a|$ denotes the Euclidean norm of a vector $a$. $\norm{\cdot}$ and $\norm{\cdot}_F$ denote the spectral norm and the Frobenius norm of a matrix. $\ell_2$ is the space of square-summable sequences equipped with the norm $\norm{\cdot}_{2}$. $\ell_\infty$ is the space of bounded sequences equipped with the norm $\norm{\cdot}_\infty$. $\sgmax(\cdot)$ and $\sgmin(\cdot)$ are respectively the maximum and minimum singular values of a given matrix.  For a transfer function $G(z)$, its $\mathcal{H}_\infty$ norm is defined as $\norm{G}_{\mathcal{H}_\infty} := \sup_{\omega \in [0,2\pi]}\sgmax(G(e^{j\omega}))$, which is equivalent to $\norm{G}_{\mathcal{H}_\infty} := \sup_{ u \in \ell_2}\frac{\norm{Gu}_{2}}{\norm{u}_{_{2}}}$.

For a matrix $X \in \mathbb{R}^{m \times n}$, $\vect(X) := [x_1^T,\cdots,x_n^T]^T$, where $x_i$ is the $i$th column of $X$. For a matrix $P \in \mathbb{S}^n$, $\vecs(P) := [p_{1,1}, p_{1,2},\cdots,p_{1,n}, p_{2,2}, p_{2,3},\cdots,p_{n,n}]^T$, where $p_{i,j}$ is the $i$th row and $j$th column entry of the matrix $P$. $[X]_{i}$ dentotes the $i$th row of $X$. $[X]_{i,j}$ denotes the submatrix of the matrix $X$ that is comprised of the rows between the $i$th and $j$th rows of $X$.  For a vector $a \in \mathbb{R}^{n}$, $\vecv(a) := [a_1^2, 2a_1a_2,\cdots,2a_1a_n,a_2^2,2a_2a_3,\cdots,a_n^2]^T$. $I_n$ denotes the $n$-dimensional identity matrix.

\section{Preliminaries}
\label{sec:background}

In this section, we begin with the problem formulation
of LEQG control.
Then, we discuss its relation to linear quadratic zero-sum dynamic games (DG) and its robustness analysis. 

\subsection{Linear Exponential Quadratic Guassian Control}
Consider the discrete-time linear time-invariant system
\begin{subequations}\label{eq:LTI}
\begin{align}
    x_{t+1} &= Ax_{t} + Bu_t +Dw_t \quad x_0 \sim \mathcal{N}(0,I_n),\label{eq:LTIsystem}\\
    y_t &= Cx_t + Eu_t, \label{eq:output}
\end{align} 
\end{subequations}%
where $x_t \in \mathbb{R}^n$ is the state of the system; $u_t \in \mathbb{R}^m$ is the control input; $x_{0}$ is the initial state; $w_t \in \mathbb{R}^q \sim \mathcal{N}(0,I_q)$ is independent and identically distributed random variable; $y_t \in \mathbb{R}^{p}$ is the controlled output. $A, \, B, \, C, \, D, \, E$ are constant matrices with compatible dimensions. 
The LEQG control problem entails finding an input sequence $u := \{u_t\}_{t=0}^{\infty}$, depending on the current value of the state, that is $\{u_t = \mu_t(x_t)\}^\infty_{t=0}$ where $\mu:= \{\mu_t:\mathbb{R}^n \to \mathbb{R}^m\}^\infty_{t=0}$ is a seqence of appropriately defined measurable control policies, such that the following risk-averse exponential quadratic cost is minimized
\begin{align}\label{eq:exponentCost}
    \mathcal{J}_{LEQG}(\mu) := \lim_{\tau \to \infty}\frac{2 \gamma^2}{\tau }\log \left[ \mathbb{E} \exp \left( \frac{1}{2 \gamma^2} \sum_{t=0}^{\tau} y^T_t y_t \right) \right]
\end{align}
where $\gamma$ is a positive constant. For proper formulation of the optimization problem, the following two assumptions are standard.

\begin{assumption}\label{ass:stabilizable}
$(A,B)$ is stabilizable, $C^TC = Q \succ 0$, and $
\gamma > \gamma_\infty$, where $\gamma_\infty > 0$ is the minimal value of $\gamma$ such that for $
\gamma > \gamma_\infty$ the solution to \eqref{eq:GARE} given below exists, or equivalently there exists a control under which \eqref{eq:exponentCost} is finite.
\end{assumption}
\begin{assumption}\label{ass:crossterm}
The matrices in \eqref{eq:output} satisfy $E^TE = R \succ 0$, and $C^T E = 0$.
\end{assumption}
Assumption \ref{ass:stabilizable} ensures the existence of a stabilizing solution to the LEQG control problem. As demonstrated in \cite[Theorem 3.8]{book_Basar}, $\gamma_\infty$ is finite. In addition, the positive definiteness of $Q \succ 0$ can be relaxed to $Q \succeq 0$, as long as $(A,C)$ is taken to be detectable. Assumption \ref{ass:crossterm} has two parts. The first, positive definiteness of the weighting matrix on control, is standard even in LQR. The second one is also a standard condition to simplify the LEQG control problem by eliminating the cross term in the cost \eqref{eq:exponentCost} between the control input $u$ and state $x$. Stabilizability of the pair $(A,B)$ implies that there exists a feedback gain $K \in \mathbb{R}^{m \times n}$ such that the spectral radius $\rho (A-BK) <1$. Henceforth, a matrix is stable if its spectral radius is less than $1$, and $K$ is stabilizing if $A-BK$ is stable. A feedback gain $K$ is called admissible if it belongs to the admissible set $\mathcal{W}$ defined in \eqref{eq:feasibleSet}. Assumptions \ref{ass:stabilizable} and \ref{ass:crossterm} are used throughout the paper. 

As investigated by \cite{Jacobson1973} and \cite[Lemma 2.1]{zhangarxiv2019}, for any admissible linear control policy $\mu_t(x_t) = -Kx_t$, the cost admits the closed-form:
\begin{align}\label{eq:LEQGclosedForm}
    \mathcal{J}_{LEQG}(K) = - \gamma^2 \log \det (I_n - \gamma^{-2}P_KDD^T),
\end{align}
where the matrix $P_K = P_K^T \succ 0$ is the unique solution to
\begin{subequations}\label{eq:AREforKPrelimi}
\begin{align}
     &(A-BK)^T U_{K} (A-BK) - P_K + Q + K^T R K = 0, \label{eq:AREforK}\\
      &U_K := P_K + P_KD( \gamma^2 I_q - D^TP_K D)^{-1}D^TP_K. \label{eq:UKExpre}
\end{align}    
\end{subequations}
Furthermore, the LEQG problem admits a unique optimal controller $u_t^* = -K^* x_t$, where 
\begin{align}\label{eq:Kopt}
    K^* = (R + B^TU^*B)^{-1} B^T U^* A.
\end{align}
with $P^* = (P^*)^T \succ 0$ the unique solution to the generalized algebraic Riccati equation (GARE)
\begin{subequations} \label{eq:GARE}
\begin{align}
    &(A-BK^*)^T U^* (A-BK^*) - P^* + Q + (K^*)^TRK^* = 0,   \label{eq:Popt} \\
    &U^* = P^* + P^*D( \gamma^2 I_q - D^TP^* D)^{-1}D^TP^*. \label{eq:Uopt}
\end{align}
\end{subequations}

\subsection{Linear Quadratic Zero-Sum Dynamic Game}
The dynamic game can be mathematically formulated as 
\begin{align}\label{eq:LQZSDG}
    &\min_{\mu}\max_{\nu}\mathcal{J}_{DG}(\mu, \nu) := \mathbb{E}_{x_0}\left(\sum_{t=0}^{\infty}y_t^Ty_t - \gamma^{2}w_t^Tw_t \right), \nonumber \\
    &\text{subject to} \,\, \eqref{eq:LTI},
\end{align}
where $u:=\{u_t\}_{t=0}^{\infty}$ and $w:=\{w_t\}_{t=0}^{\infty}$ are the input sequences for the minimizer and the maximizer, respectively, generated by state-feedback policies $\mu := \{\mu_t\}^\infty_{t=0}$ and $\nu := \{\nu_t\}^\infty_{t=0}$. Note that here in \eqref{eq:LTIsystem}, $w$ is no longer a Gaussian random sequence, but a second control variable, at the disposal of the maximizer.

For any admissible controller $\mu_t(x_t) = -Kx_t$, and with $\gamma > \gamma_\infty$, the closed-form cost is
\begin{align}\label{eq:costDGforK}
    \mathcal{J}_{DG}(K,\nu^*(K)) = \max_{\nu}\mathcal{J}_{DG}(-Kx_t,\nu) = \Tr(P_K),
\end{align}
where $P_K$ is the solution of \eqref{eq:AREforKPrelimi}. 
From \cite[Equation 3.51]{book_Basar}, it follows that the optimizers for the minimax problem are $\mu_t^*(x_t) = -K^*x_t$ and $\nu_t^*(x_t) = L^*x_t$, where $K^*$ is defined in \eqref{eq:Kopt} and $L^*$ is given by
\begin{align}\label{eq:Lopt}
    & L^* = (\gamma^{2}I_q - D^TP^*D )^{-1} D^T P^* (A-BK^*). 
\end{align}
Furthermore, $(A-BK^*)$ is stable, $I_q - \gamma^{-2}D^T P^*D \succ 0$, and $(A-BK^*+DL^*)$ is stable.

Therefore, the minimizer of DG shares the same optimal controller as the optimizer in the LEQG problem. Also note that, $P_K$ is critical for determining the closed-from costs of $\mathcal{J}_{LEQG}(K)$ and $\mathcal{J}_{DG}(K,\nu^*)$.

\subsection{Robustness Analysis}
With $w$ taken as a deterministic input in \eqref{eq:LTI}, and taking any stabilizing feedback $\mu_t(x_t) = -Kx_t$, the discrete-time transfer function from $w$ to $y$ can be expressed as
\begin{align}
    \mathcal{T}(K) := (C-EK)[zI_n - (A-BK)]^{-1}D.
\end{align}
where $z \in \mathbb{C}$ is the $z$-transform variable. 

Now consider the depiction in Fig. \ref{fig:smallgain}, where $\Delta$ denotes the model mismatch that may be induced by the sim-to-real gap, and satisfying $\norm{\Delta}_{\mathcal{H}_\infty} \leq  \frac{1}{\gamma}$. Thanks to the small-gain theorem \cite{book_zhou,Jiang2018,Zames1966}, when subjected to model mismatch, the system remains stable as long as $\norm{\mathcal{T}(K)}_{\mathcal{H}_\infty} < \gamma$. Consequently, the controller $\mu_t(x_t) = -Kx_t$ is robust to the model mismatch $\Delta$ if $K$ lies within the admissible set $\mathcal{W}$ defined as
\begin{align} \label{eq:feasibleSet}
    \mathcal{W} := \{K \in \mathbb{R}^{m \times n}| \rho(A-BK)<1,  \norm{\mathcal{T}(K)}_{\mathcal{H}_\infty} < \gamma\}.
\end{align}
As investigated in \cite[Theorem 3.8]{book_Basar}, the LEQG control in \eqref{eq:Kopt} satisfies $K^* \in \mathcal{W}$, and therefore, it is optimal with respect to \eqref{eq:exponentCost} and robust to the model mismatch. This motivates us to pose the following problem.

\begin{problem}
Design a learning-based control algorithm such that near-optimal control gains, i.e. approximate values of $K^*$, can be learned from the input-state data.
\end{problem}

We will first introduce the model-based PO algorithm whose convergence and robustness properties are instrumental for the learning-based algorithm.

\begin{figure}
    \centering
    \includegraphics[width=0.7\linewidth]{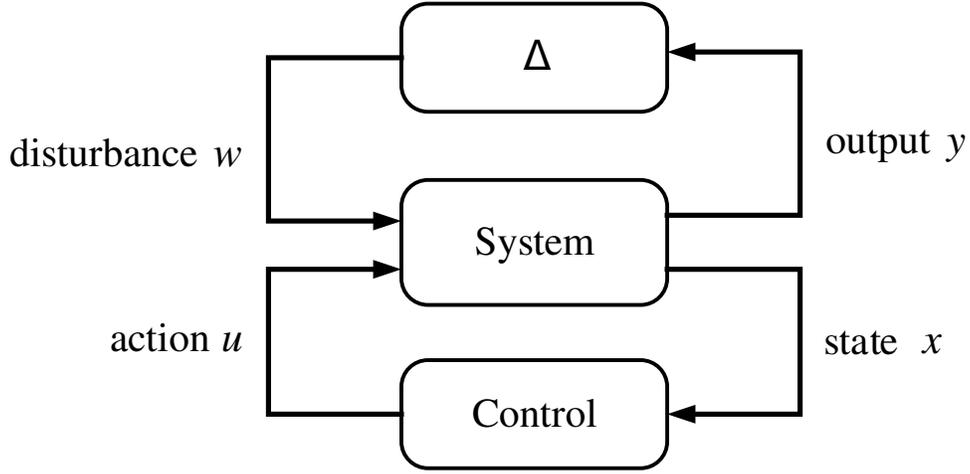}
    \caption{Robust control design with model mismatch $\Delta$.}
    \label{fig:smallgain}
\end{figure}

\section{Model-based Policy Optimization}
\label{sec:twoloop}
In this section, by resorting to the PO method,  a dual-loop iterative algorithm is proposed.

\subsection{Introduction of the Outer Loop}
The outer-loop iteration is developed based on the Newton PO algorithm in \cite{zhangarxiv2019}. For any $K \in \mathcal{W}$, the gradient $ \nabla_K\mathcal{J}(K) = \nabla_K \mathcal{J}_{LEQG}(K) = \nabla_K \mathcal{J}_{DG}(K,\nu^*(K))$ can be computed to be
\begin{align} \label{eq:GradientOuterloop}
\begin{split}
    &\nabla_K\mathcal{J}(K)= 2[(R+B^TU_KB)K - B^TU_KA] \Sigma_K,
\end{split}
\end{align}
where 
\begin{align}
    \Sigma_K &= \sum_{t=0}^{\infty}[(A-BK+DL_{K,*})^T]^tD \\
    &(I_q - \gamma^{-2}D^TP_KD)^{-1}D^T(A-BK+DL_{K,*})^t \nonumber. 
\end{align}
and 
\begin{align}\label{eq:LoptForK}
    L_{K,*} := ( \gamma^2 I_q - D^TP_K D)^{-1}D^TP_K(A-BK),
\end{align}
It is noticed that given $u_t=-Kx_t$, $L_{K,*}$ is considered as a worst-case feedback gain for $w$. \footnote{Henceforth, we write $\mu_t(x_t)=-Kx_t$ simply as $u_t = -Kx_t$, and likewise for $w$, as appropriate.} By the Newton method with a step size of $\frac{1}{2}$, to iteratively minimize $\mathcal{J}(K)$, the updated feedback gain is
\begin{align}\label{eq:GNPOupdate}
\begin{split}
    K' &= K - \frac{1}{2}(R+B^TU_KB)^{-1}\nabla_K \mathcal{J}(K) \Sigma_{K}^{-1} \\
    &= (R+B^TU_KB)^{-1}B^TU_KA.
\end{split}
\end{align}

Let $i$ denote the iteration index for the outer loop and introduce the following variables to simplify the notation
\begin{subequations}
\begin{align}
    & A_i := A-BK_i, \quad  Q_i := Q + K_i^T R K_i \label{eq:QKRK},
\end{align}
\end{subequations}
where $A_i$ is the closed-loop transition matrix with $u_t = -K_ix_t$, and $Q_i$ is the cost weighting matrix. Then, by \eqref{eq:AREforKPrelimi} and \eqref{eq:GNPOupdate}, the outer-loop iteration can be expressed as
\begin{subequations}\label{eq:outerloopIte}
\begin{align}
    & A_i^T U_{i} A_i - P_i + Q_i = 0, \label{eq:outloop_evalu}\\
    &    U_i := P_i + P_iD( \gamma^2 I_q - D^TP_i D)^{-1}D^TP_i, \label{eq:Ui}\\
    & K_{i+1} = (R + B^T U_i B)^{-1}B^T U_i A \label{eq:outerloop_update},
\end{align}
\end{subequations}
In \eqref{eq:outerloopIte}, from policy iteration perspective \cite{book_Bertsekas_Neu}, we consider \eqref{eq:outloop_evalu} as the policy evaluation step under the worst-case disturbance and \eqref{eq:outerloop_update} as the policy improvement step. $P_i$ is the cost matrix of \eqref{eq:LQZSDG} for the controller $u_t = -K_ix_t$ under the worst-case disturbance.

As seen in Lemma \ref{lm:outerloop_converge}, for each iteration, the controller $K_i$ generated by \eqref{eq:outerloopIte} preserves robustness to model mismatch, i.e. $K_i \in \mathcal{W}$. $P_i$ converges to $P^*$ with a globally sublinear and locally quadratic rate \cite[Theorems 4.3 and 4.4]{ZhangSICON}. We further investigate the convergence rate of \eqref{eq:outerloopIte} and rigorously demonstrate that $P_i$ monotonically converges to the optimal solution $P^*$ with a globally linear convergence rate. The proof of the following theorem can be found in Appendix B.

\begin{theorem}\label{thm:outerloopGloballinear}
Let $\mathcal{H} := \{\Tr(P_K)-\Tr(P^*)| K \in \mathcal{W} \}$. For any $h \in \mathcal{H}$ and $K_1 \in \mathcal{G}_h$, where $ \mathcal{G}_h = \{K \in \mathcal{W}| \Tr(P_K) \leq \Tr(P^*)+h \}$, there exists $\alpha(h) \in [0,1)$, such that 
\begin{align}
    \Tr(P_{i+1} - P^*) \leq \alpha(h) \Tr(P_{i} - P^*), \quad \forall i \in \mathbb{Z}_+.
\end{align}
\end{theorem}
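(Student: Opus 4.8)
The plan is to view the outer-loop iteration \eqref{eq:outerloopIte} as the Newton/policy-iteration scheme for the zero-sum game cost $\Tr(P_K)$ and to combine three ingredients: (i) monotone convergence $P_i \downarrow P^*$ (already established in \cite{ZhangSICON,zhangarxiv2019}), (ii) a one-step contraction inequality in terms of the Riccati-type operator, and (iii) a uniform lower bound, over the sublevel set $\mathcal{G}_h$, on the ``stability margin'' quantities that control the contraction rate. First I would fix $h \in \mathcal{H}$ and $K_1 \in \mathcal{G}_h$; since $\Tr(P_i)$ is nonincreasing along the iteration (Lemma~\ref{lm:outerloop_converge} / \cite[Thm~4.3]{ZhangSICON}), every iterate stays in $\mathcal{G}_h$, so it suffices to produce a single rate $\alpha(h)\in[0,1)$ valid on $\mathcal{G}_h$.

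Next I would derive the key one-step identity. Subtracting the GARE \eqref{eq:GARE} from the policy-evaluation equation \eqref{eq:outloop_evalu}, and using that $K_{i+1}$ is the minimizing gain for $U_i$ (so that the usual Riccati completion-of-squares gives $(A-BK_{i+1})^TU_i(A-BK_{i+1}) - (A-BK^*)^TU_i(A-BK^*) \preceq -(K^* - K_{i+1})^T(R+B^TU_iB)(K^*-K_{i+1})$), one obtains a Lyapunov equation of the form
\begin{align}
(A-BK_{i+1}+DL^*)^T (U_i \text{-type error}) (A-BK_{i+1}+DL^*) - (P_{i+1}-P^*) + \Theta_i = 0, \nonumber
\end{align}
where $\Theta_i \succeq 0$ collects the quadratic ``policy mismatch'' terms in $K_{i+1}-K^*$ and in the adversary gains, and the closed-loop matrix $A-BK_{i+1}+DL^*$ is stable. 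Taking traces and using $P_{i+1}-P^* \preceq$ (something) $\cdot (P_i - P^*)$ coming from the monotonicity and from the relation between $U_i-U^*$ and $P_i-P^*$ (note $U_K$ is an operator-monotone function of $P_K$ on the relevant domain, by \eqref{eq:Ui}), I would bound $\Tr(P_{i+1}-P^*)$ by a constant times $\Tr(P_i - P^*)$.

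To turn that constant into a genuine contraction factor $\alpha(h) < 1$, I would exploit the strict inequality $I_q - \gamma^{-2} D^T P^* D \succ 0$ together with the fact that on $\mathcal{G}_h$ we have uniform bounds $P^* \preceq P_K \preceq P^* + h' I$ for some $h'=h'(h)$ (since $\Tr(P_K-P^*) \le h$ and $P_K - P^* \succeq 0$), which in turn give uniform positive lower bounds on $\sgmin(I_q-\gamma^{-2}D^TP_KD)$, uniform upper bounds on $\|U_K\|$, $\|K\|$, $\|L_{K,*}\|$, and a uniform decay bound on the closed-loop semigroup generated by $A-BK+DL_{K,*}$ (all of these follow from continuity of the Riccati data on the compact-after-closure set $\mathcal{G}_h$, exactly as in the robustness lemmas the paper sets up). Feeding these uniform bounds into the trace inequality of the previous paragraph yields $\alpha(h)\in[0,1)$ depending only on $h$ and the problem data.

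The main obstacle I anticipate is step (iii): making the contraction factor \emph{uniform} over the whole sublevel set $\mathcal{G}_h$ rather than merely local near $P^*$. The local quadratic rate from \cite{ZhangSICON} does not by itself give a uniform linear rate away from the optimum, so the crux is to show that the ``policy-mismatch'' term $\Theta_i$ is bounded below by a uniform positive multiple of $P_i - P^*$ — equivalently, that the Newton step cannot stall on $\mathcal{G}_h$. I would handle this by arguing that $\mathcal{G}_h$ has compact closure inside $\mathcal{W}$ (using coercivity of $\Tr(P_K)$ as $K \to \partial\mathcal{W}$, i.e. $\|\mathcal{T}(K)\|_{\mathcal{H}_\infty}\to\gamma$ or $\rho(A-BK)\to 1$ forces $\Tr(P_K)\to\infty$), so all the constants above are attained, and then invoking the strict Lyapunov decrease of the Newton iteration at every non-optimal point to conclude $\alpha(h) < 1$; a compactness/continuity argument then upgrades this pointwise strict decrease to a uniform one on $\mathcal{G}_h$.
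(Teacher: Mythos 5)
Your overall skeleton is essentially the paper's: a completion-of-squares identity giving a one-step decrease of $\Tr(P_i)$ by a policy-mismatch term, combined with a uniform ``the Newton step cannot stall on $\mathcal{G}_h$'' bound obtained from compactness, continuity of $P_K$, and a separate treatment near $K^*$. This mirrors the paper's combination of \eqref{eq:Ai+1optPdiff}--\eqref{eq:DiffPiLowerBound2} with the gradient-dominance-type Lemma \ref{lm:EiboundFi} (whose proof itself splits into a Neumann-series estimate near $P^*$ and a compactness argument away from it, using Lemma \ref{lm:E=0P=0}). However, there is a concrete gap in the step you rely on to make everything uniform: the claimed coercivity of $\Tr(P_K)$ on $\mathcal{W}$. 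It is true that $\Tr(P_K)\to\infty$ when $\rho(A-BK)\to 1$ (since $Q\succ 0$), but it is \emph{not} true in general that $\norm{\mathcal{T}(K)}_{\mathcal{H}_\infty}\to\gamma$ forces $\Tr(P_K)\to\infty$: as $K$ approaches that part of $\partial\mathcal{W}$ the stabilizing Riccati solution typically remains bounded (in the scalar case $P_K$ solves a quadratic whose two roots coalesce at a finite value exactly when the $\mathcal{H}_\infty$ constraint becomes active). The absence of coercivity with respect to the $\mathcal{H}_\infty$ constraint is precisely why this literature needs implicit-regularization-type arguments, and why the paper does not argue via coercivity at all: Lemma \ref{lm:compactGh} proves compactness of $\mathcal{G}_h$ directly, with boundedness from $K^TRK\preceq P_K$ ($R\succ 0$) and closedness from continuity of $P_K$ (Lemma \ref{lm:continuousPK}) together with the bounded real lemma (Lemma \ref{lm:realbounded}) applied to the limiting pair $(K_{lim},P_{K_{lim}})$. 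Without a valid proof that the sublevel set stays away from the $\mathcal{H}_\infty$ boundary, none of your ``attained'' uniform constants is justified.

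Two further points would need repair before your sketch closes. First, the intermediate Lyapunov relation you write involves the hybrid closed-loop matrix $A-BK_{i+1}+DL^*$, whose stability is not guaranteed; the paper's derivation is careful to use only $A_{i+1,*}=A-BK_{i+1}+DL_{i+1,*}$ (stable because $K_{i+1}\in\mathcal{W}$, via Lemma \ref{lm:realbounded} and \eqref{eq:Ai+1andOpt}) for the decrease step and $A^*$ for the dominance step, and your completion-of-squares inequality as written also drops the $K^TRK$ difference terms. Second, your compactness/continuity upgrade of pointwise strict decrease cannot by itself cover a neighborhood of $K^*$, where the contraction ratio is of the form $0/0$; you implicitly plan to patch this with the local quadratic rate of the cited reference, which can work, but the splitting (choice of the threshold, and the fact that the far region $\mathcal{G}_h\cap\{\Tr(P_K-P^*)\geq\delta\}$ is compact) must be made explicit — this is exactly the role of the two-case argument inside Lemma \ref{lm:EiboundFi}, which the paper proves self-containedly via a Neumann-series bound around $A^*$.
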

Since $P_{i+1} - P^* \succeq 0$ and $ \norm{P_{i+1} - P^*}_F \leq \Tr(P_{i+1} - P^*) \leq \sqrt{n}\norm{P_{i+1} - P^*}_F$ (Lemma \ref{lm:normInequ}), the following inequlity holds
\begin{align}
   \norm{P_{i+1} - P^*}_F \leq \sqrt{n}\alpha^i(h) \norm{P_{1} - P^*}_F.
\end{align}
Since $\mathcal{J}_{DG}(K_i,\nu^*(K_i)) = \Tr(P_i)$ from \eqref{eq:costDGforK}, it follows that
\begin{align}
\begin{split}
    &\mathcal{J}_{DG}(K_{i+1},\nu^*(K_{i+1})) - \mathcal{J}_{DG}(\mu^*,\nu^*) \leq \\
    &\alpha(h) [\mathcal{J}_{DG}(K_i,\nu^*(K_i)) - \mathcal{J}_{DG}(\mu^*,\nu^*)].
\end{split}
\end{align}
Hence, the cost of the dynamic game (under the worst-case disturbance) converges to the saddle point at a linear convergence rate $\alpha(h) \in [0,1)$. 

In the following subsection, given $K_i$, by maximizing  $\mathcal{J}_{DG}(K_i, \nu)$ over $\nu$, the inner-loop iteration is developed to get the worst-case disturbance $\nu^*(K_i)$.

\subsection{Introduction of the Inner Loop}

 Given the feedback gain of the minimizer $K \in \mathcal{W}$, the inner loop iteratively finds the optimal controller for the maximizer $w^*(K)$ by solving \footnote{Here and below, we have used the control $w$ instead of the policy $\nu$, for simplicity of the notation.}
\begin{align}
    &\max_{w}\mathcal{J}_{DG}(K,w) = \sum_{t=0}^{\infty}y_t^Ty_t - \gamma^{2}w_t^Tw_t, \\
    &\text{subject to} \,\, x_{t+1} = (A-BK)x_t + Dw_t, x_0 \sim \mathcal{N}(0,I_n).   \nonumber
\end{align}
The optimal solution is $w^*_t(K) = L_{K,*}x_t$, where $L_{K,*}$ is defined in \eqref{eq:LoptForK}. For any admissible controller $w_t = Lx_t$ (with ($A-BK+DL$) stable), by \cite{Fazel_2018} the closed-form cost is
\begin{align}
    \mathcal{J}_{DG}(K,L) = \Tr(P_{K,L}),
\end{align}
where $P_{K,L}=P_{K,L}^T \succ 0$ is the solution to 
\begin{align}\label{eq:LyaforL}
\begin{split}
    &(A-BK+DL)^TP_{K,L}(A-BK+DL) \\
    &- P_{K,L} + Q + K^TRK - \gamma^2 L^TL = 0
\end{split}
\end{align}
The gradient of $ \nabla_L \mathcal{J}_{DG}(K,L)$ is 
\begin{align}
\begin{split}
     &\nabla_L \mathcal{J}_{DG}(K,L) = 2\left[(\gamma^2I_q - D^TP_{K,L}D)L \right. \\
     & \left. - D^TP_{K,L}(A-BK) \right]\Sigma_{K,L},   
\end{split}
\end{align}
where
\begin{align}
    \Sigma_{K,L} &= \sum_{t=0}^{\infty} (A-BK+DL)^t [(A-BK+DL)^T]^t.
\end{align}
By Newton methods, the updated feedback gain is
\begin{align}\label{eq:GNInner}
    L' &= L - \frac{1}{2}(\gamma^2I_q - D^TP_{K,L}D)^{-1} \nabla \mathcal{J}_{DG}(K,L) \Sigma_{K,L}^{-1} \nonumber \\
    &= (\gamma^2I_q - D^TP_{K,L}D)^{-1} D^T P_{K,L} (A-BK). 
\end{align}

Let $j$ denote the iteration index of the inner loop, and introduce the following variables to simplify the notation:
\begin{subequations}
\begin{align}
    A_{i,j} &:= A - BK_i + DL_{i,j},  Q_i := Q + K_i^TRK_i, \label{eq:Aij}\\
    A_{i,*} &:= A-BK_i +DL_{i,*}, A^* := A-BK^*+DL^*.\label{eq:Aopt}
\end{align}
\end{subequations}
By \eqref{eq:LyaforL} and \eqref{eq:GNInner}, the inner loop is designed as
\begin{subequations}\label{eq:innerloop}
\begin{align}
    &A_{i,j} ^TP_{i,j}A_{i,j} - P_{i,j} + Q_i - \gamma^2L_{i,j}^TL_{i,j} = 0, \label{eq:innerloop_eval}\\
    &L_{i,j+1} = (\gamma^2 I_q - D^TP_{i,j}D)^{-1}D^TP_{i,j}A_i.  \label{eq:innerloop_update}
\end{align}
\end{subequations}
From the policy iteration perspective, we consider \eqref{eq:innerloop_eval} as the policy evaluation step and \eqref{eq:innerloop_update} as the policy improvement step for the inner loop. $P_{i,j}$ is the cost matrix of $\mathcal{J}_{DG}(K_i,L_{i,j})$ with the state-feedback policies $\mu_t(x_t) = -K_{i}x_t$ and $\nu_t(x_t) = L_{i,j}x_t$. The inner-loop policy iteration possesses the monotonicity property and preserves stability, that is the sequence $\{P_{i,j}\}_{j=1}^{\infty}$ is monotonically increasing and upper bounded by $P_i$, and $A-BK_i+DL_{i,j}$ is stable. These results are stated in Lemma \ref{lm:innerloop convergence}. We prove that the inner loop globally and linearly converges to the optimal solution $P_{i}$. The details of the proof are given in Appendix C. This brings us to the following theorem, whose proof is in Appendix C.

\begin{theorem}\label{thm:innerloop_globallinear}
Given $L_{i,1} = 0$, for any $K_i \in \mathcal{W}$, there exists a constant $\beta(K_i) \in [0,1)$, such that 
\begin{align}
    \Tr(P_i - P_{i,j+1}) \leq \beta(K_i) \Tr(P_i - P_{i,j}), \quad \forall j \in \mathbb{Z}_+. 
\end{align}
\end{theorem}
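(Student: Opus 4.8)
The plan is to exploit the structure of the inner-loop iteration \eqref{eq:innerloop} as a fixed-$K_i$ policy iteration for a linear-quadratic maximization problem, and to mimic the Kleinman-type linear-rate analysis. First I would record the error recursion. Subtracting the Lyapunov equation \eqref{eq:innerloop_eval} at step $j+1$ from the one at step $j$, and using the update rule \eqref{eq:innerloop_update} together with the optimality characterization of $L_{i,*}$ in \eqref{eq:LoptForK}, I expect to obtain a Lyapunov-type identity of the form
\begin{align}
    A_{i,j+1}^T (P_i - P_{i,j+1}) A_{i,j+1} - (P_i - P_{i,j+1}) = -(L_{i,j+1} - L_{i,*})^T (\gamma^2 I_q - D^T P_i D)(L_{i,j+1} - L_{i,*}) - \Xi_{i,j},
\end{align}
where $\Xi_{i,j} \succeq 0$ collects the remaining quadratic terms, and similarly a one-step relation expressing $P_i - P_{i,j+1}$ in terms of $P_i - P_{i,j}$ through the closed-loop matrix $A_{i,j+1}$ (or $A_{i,j}$). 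From Lemma \ref{lm:innerloop convergence} I may assume $0 \preceq P_{i,j} \preceq P_{i,j+1} \preceq P_i$, that $A_{i,j}$ is stable for all $j$, and that $\gamma^2 I_q - D^T P_i D \succ 0$ (the latter from the game side, since $K_i \in \mathcal{W}$).

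Next I would convert the Lyapunov equation into the explicit series $P_i - P_{i,j+1} = \sum_{t=0}^\infty (A_{i,j+1}^T)^t \big[ (L_{i,j+1}-L_{i,*})^T(\gamma^2 I_q - D^TP_iD)(L_{i,j+1}-L_{i,*}) + \Xi_{i,j}\big] A_{i,j+1}^t$, take the trace, and bound it above and below. The lower bound on $\Tr(P_i - P_{i,j})$ and the upper bound on $\Tr(P_i - P_{i,j+1})$ should both be controlled by $\norm{L_{i,j} - L_{i,*}}_F^2$ times constants depending only on $K_i$ (through the uniform bounds on $\norm{P_i}$, on $\norm{A_{i,j}}$, on $\sgmin(\gamma^2 I_q - D^TP_iD)$, and on $\Sigma_{K_i,L_{i,j}}$, all of which are uniformly bounded over $j$ because $P_{i,j}$ lies in the compact order interval $[0,P_i]$ and the closed-loop matrices stay in a compact set of stable matrices). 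The ratio of these two bounds then yields a constant $\beta(K_i) \in [0,1)$. The requirement $L_{i,1}=0$ is used to pin down the starting point so that $P_i - P_{i,1}$ is finite and the whole trajectory stays in the admissible region; it also makes the constants explicit in terms of the initial gap $\Tr(P_i - P_{i,1})$.

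The main obstacle, I expect, is establishing the \emph{uniform} (in $j$) positive lower bound on the relevant spectral quantities — in particular, showing that $\sgmin(\gamma^2 I_q - D^TP_{i,j}D)$ and the spectral radii $\rho(A_{i,j})$ stay bounded away from their critical values along the entire inner-loop trajectory, and that $\Sigma_{K_i,L_{i,j}}$ admits a uniform upper bound. Monotonicity ($P_{i,j}\preceq P_i$) gives one side for free, but one must also rule out $A_{i,j}$ approaching the boundary of the stability region; here I would invoke the cost-coercivity/compactness argument underlying Lemma \ref{lm:innerloop convergence} (analogous to the coercivity of $\mathcal{J}_{LQR}$ used in the LQR literature), which confines $\{L_{i,j}\}$ to a compact subset of the admissible set for the maximizer. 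Once uniform bounds are in hand, the linear rate follows by the same sandwiching argument as in Theorem \ref{thm:outerloopGloballinear} and the classical Kleinman analysis, so I would streamline that part by citing the structural parallel rather than repeating the algebra.
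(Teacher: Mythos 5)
There is a genuine gap at the decisive step. Your plan is to upper-bound the new gap $\Tr(P_i-P_{i,j+1})$ and lower-bound the old gap $\Tr(P_i-P_{i,j})$, both by $\norm{L_{i,j}-L_{i,*}}_F^2$ times $K_i$-dependent constants, and then declare that ``the ratio of these two bounds'' is a $\beta(K_i)\in[0,1)$. Nothing forces that ratio to be less than one: if $\Tr(P_i-P_{i,j+1})\le c_1\norm{L_{i,j}-L_{i,*}}_F^2$ and $\Tr(P_i-P_{i,j})\ge c_2\norm{L_{i,j}-L_{i,*}}_F^2$, you only get the factor $c_1/c_2$, and crude constants typically give $c_1/c_2>1$, so no contraction is established. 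Worse, the Lyapunov identity you write for $P_i-P_{i,j+1}$ (which is correct, with $\Xi_{i,j}=0$; it comes from subtracting the iteration equation from the Riccati equation for $P_i$, not from subtracting consecutive Lyapunov equations) is driven by $L_{i,j+1}-L_{i,*}$, not $L_{i,j}-L_{i,*}$; converting one into the other either presupposes the very contraction you want, or goes through $\norm{L_{i,j+1}-L_{i,*}}\lesssim\norm{P_i-P_{i,j}}\lesssim\norm{L_{i,j}-L_{i,*}}^2$, which yields only a local quadratic-type estimate and not the global linear rate with $\beta$ depending only on $K_i$ that the theorem asserts. The paper's mechanism is structurally different: it lower-bounds the one-step \emph{progress} $\Tr(P_{i,j+1}-P_{i,j})\ge\Tr(E_{i,j})\ge\norm{E_{i,j}}$, where $E_{i,j}=(L_{i,j+1}-L_{i,j})^T(\gamma^2I_q-D^TP_{i,j}D)(L_{i,j+1}-L_{i,j})$ (just the $t=0$ term of the series in $A_{i,j+1}$), and separately upper-bounds the \emph{remaining gap} by $\Tr(P_i-P_{i,j})\le b(K_i)\norm{E_{i,j}}$ via a Lyapunov comparison driven by the \emph{fixed} stable matrix $A_{i,*}$ (Lemma~\ref{eq:Ebound_inn}); monotonicity then gives $\Tr(P_i-P_{i,j+1})\le(1-1/b(K_i))\Tr(P_i-P_{i,j})$ with $b(K_i)\ge n\ge1$, so $\beta\in[0,1)$ comes for free.

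A secondary point: the ``main obstacle'' you identify—uniform-in-$j$ bounds on $\Sigma_{K_i,L_{i,j}}$ and on the stability margins of $A_{i,j}$—does not arise in the paper's route at all, since the constant $b(K_i)$ involves only the Gramian of $A_{i,*}$. If you do need such a uniform bound, it is not supplied by the ``coercivity underlying Lemma~\ref{lm:innerloop convergence}'' as you suggest (that lemma only gives stability of each $A_{i,j}$ and monotonicity); it would have to be derived, e.g., from $P_{i,j}\preceq P_i$ together with $Q_i-\gamma^2L_{i,*}^TL_{i,*}\succ0$ and $L_{i,j}^TL_{i,j}\preceq L_{i,*}^TL_{i,*}$, which is exactly how the paper bounds $\Tr(\hat{M}_{i,j+1})$ in the proof of Theorem~\ref{thm:innerISS}. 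As written, your argument neither closes the $\beta<1$ step nor substantiates the uniformity claim it leans on.
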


Based on Theorem \ref{thm:innerloop_globallinear}, we can further obtain that 
\begin{align}
    \norm{P_{i} - P_{i,j+1}}_F \leq \sqrt{n}\beta^{j}(K_i)\norm{P_i - P_{i,1}}_F.
\end{align}
In addition, since $\mathcal{J}_{DG}(K_i,L_{i,j}) = \Tr(P_{i,j})$ and $\mathcal{J}_{DG}(K_i,L_{i,*}) = \Tr(P_{i})$, from Theorem \ref{thm:innerloop_globallinear}, we have 
\begin{align}
\begin{split}
    &\mathcal{J}_{DG}(K_i,L_{i,*}) - \mathcal{J}_{DG}(K_i,L_{i,j+1}) \leq \\
    &\beta(K_i)[\mathcal{J}_{DG}(K_i,L_{i,*}) - \mathcal{J}_{DG}(K_i,L_{i,j})].
\end{split}
\end{align}
Hence, the cost of the dynamic game with $K_i$ is monotonically increasing and converges to the maximum at a linear rate $\beta(K_i) \in [0,1)$. 

The dual-loop policy iteration algorithm is in Algorithm \ref{alg:IteAlg_model}. Ideally, $P_{i,j}$ generated by the inner loop converges to $P_i$, and then the control gain for the minimizer is updated to $K_{i+1}$ by \eqref{eq:outerloop_update}. In practice, the inner loop stops after $\bar{j}$ iterations, and $P_{i,\bar{j}}$, instead of $P_i$, is used for updating the control gain at line \ref{alg:line::outer_loop_update}. $K_{i,\bar{j}}$ denotes the control gain updated at the outer loop.

\subsection{Convergence Analysis for the Dual-Loop Algorithm}
For the dual-loop algorithm, the inner-loop iteration linearly converges to the optimal solution $P_K$ with the rate dependent on $K$. Since $K$ is updated iteratively, it is required that the inner loop enter the given neighborhood of $P_K$ within a constant number of steps, regardless of $K$. Otherwise, as the outer-loop iteration proceeds, the required number of inner-loop iterations may grow explosively, thus making the dual-loop algorithm not practically implementable. The uniform convergence rate of the overall algorithm is given in the following theorem, whose proof is given in Appendix D.  

\begin{theorem} \label{thm:uniformConvergence}
For any $h \in \mathcal{H}$, $K\in \mathcal{G}_h$, and $\epsilon>0$, there exists $\bar{j}(h,\epsilon) \in \mathbb{Z}_{+}$ independent of $K$, such that for all $j \geq \bar{j}(h,\epsilon)$, $\norm{P_{K,j} - P_{K}}_F \leq \epsilon$.
\end{theorem}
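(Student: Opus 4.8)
The plan is to bootstrap the $K$-dependent linear rate $\beta(K)$ of Theorem~\ref{thm:innerloop_globallinear} into one that is uniform over the sublevel set $\mathcal{G}_h$. Iterating the estimate of Theorem~\ref{thm:innerloop_globallinear} and using Lemma~\ref{lm:normInequ} (note $P_K - P_{K,j} \succeq 0$ by the monotonicity in Lemma~\ref{lm:innerloop convergence}), one gets, for every $K \in \mathcal{G}_h$,
\begin{align*}
    \norm{P_{K,j} - P_K}_F \;\leq\; \Tr(P_K - P_{K,j}) \;\leq\; \beta(K)^{j-1}\,\Tr(P_K - P_{K,1}), \qquad \forall j \in \mathbb{Z}_+ .
\end{align*}
Since $0 \preceq P_{K,1} \preceq P_K$ (again Lemma~\ref{lm:innerloop convergence}, with $L_{K,1}=0$), the last factor is controlled uniformly: $\Tr(P_K - P_{K,1}) \leq \Tr(P_K) \leq \Tr(P^*) + h$ for $K \in \mathcal{G}_h$. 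Hence everything reduces to showing that $\bar{\beta}(h) := \sup_{K \in \mathcal{G}_h}\beta(K) < 1$.

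For this, I would first record that $\mathcal{G}_h$ is compact. The map $K \mapsto \Tr(P_K)$ is continuous on the open set $\mathcal{W}$, because $P_K$ is the unique solution of \eqref{eq:AREforKPrelimi} whose coefficients depend smoothly on $K$; moreover it is coercive, i.e. its sublevel sets are bounded and remain at a positive distance from $\partial\mathcal{W}$ (cf. \cite{ZhangSICON}). Consequently $\mathcal{G}_h = \{K \in \mathcal{W} : \Tr(P_K) \leq \Tr(P^*)+h\}$ is closed and bounded, hence compact, and in particular the margin of $\gamma^2 I_q - D^TP_KD \succ 0$ is bounded below on $\mathcal{G}_h$.

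Next I would open up the construction of $\beta(K)$ in the proof of Theorem~\ref{thm:innerloop_globallinear} (Appendix~C) and verify that it is assembled from quantities that are continuous in $K$ on $\mathcal{W}$ — e.g.\ $\norm{P_K}$, the definiteness margin $\sgmin(\gamma^2 I_q - D^TP_KD)$, $\norm{A-BK}$, $\norm{L_{K,*}}$, and the Gramian/decay constants controlling $\Sigma_{K,L}$ along the inner-loop trajectory — each of which is bounded on the compact set $\mathcal{G}_h$, and bounded away from $0$ wherever it appears in a denominator. It follows that $\beta(\cdot)$ is continuous (or at least upper semicontinuous) on $\mathcal{G}_h$, and being $<1$ pointwise it attains a maximum $\bar{\beta}(h) < 1$. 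This is the step I expect to be the main obstacle: it forces one to track the explicit constants produced in Appendix~C — rather than treating $\beta(K)$ as a black box — and to check their continuity and uniform boundedness on $\mathcal{G}_h$; compactness of $\mathcal{G}_h$ is exactly what legitimizes the passage from pointwise to uniform.

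Combining the two bounds, $\norm{P_{K,j} - P_K}_F \leq (\Tr(P^*)+h)\,\bar{\beta}(h)^{j-1}$ for all $K \in \mathcal{G}_h$ and $j \in \mathbb{Z}_+$. Choosing $\bar{j}(h,\epsilon)$ to be the smallest integer $\geq 1$ with $(\Tr(P^*)+h)\,\bar{\beta}(h)^{\bar{j}-1} \leq \epsilon$ (taking $\bar{j}=2$ if $\bar{\beta}(h)=0$) then gives $\norm{P_{K,j}-P_K}_F \leq \epsilon$ for all $j \geq \bar{j}(h,\epsilon)$, with $\bar{j}(h,\epsilon)$ manifestly independent of $K$, as required. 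A minor additional check is that the inner-loop recursion \eqref{eq:innerloop} is well posed from $L_{K,1}=0$ for every $K \in \mathcal{W}$, which holds since $\rho(A-BK)<1$ makes \eqref{eq:innerloop_eval} a standard discrete Lyapunov equation with a unique positive-definite solution.
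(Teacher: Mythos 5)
Your proposal is correct and follows essentially the same route as the paper: iterate the linear rate of Theorem~\ref{thm:innerloop_globallinear}, bound $\Tr(P_K-P_{K,1})\leq \Tr(P^*)+h$ uniformly on $\mathcal{G}_h$, and obtain a uniform rate $\bar{\beta}(h)<1$ from compactness of $\mathcal{G}_h$ (the paper's Lemma~\ref{lm:compactGh}) together with continuity of the contraction factor in $K$. The continuity check you flag as the main obstacle is exactly what the paper does, in concrete form: $\beta(K)=1-1/b(K)$ with $b(K)=\Tr(M_K)$, where $M_K$ solves the Lyapunov equation $A_{K,*}M_KA_{K,*}^T-M_K+I_n=0$ and is continuous in $K$ because $P_K$ (hence $A_{K,*}$) is continuous by Lemma~\ref{lm:continuousPK}.
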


\begin{algorithm}[t] 
	\caption{Model-Based Policy Optimization}\label{alg:IteAlg_model}
    \begin{algorithmic}[1]
	\State Set $\bar{i},\bar{j} \in \mathbb{Z}_+$. Initialize $K_{1,\bar{j}} \in \mathcal{W}$ \;
	\For{$i \leq \bar{i}$}
		\State Initialize $j=1$ and $L_{i,1} = 0$\;		
		\State $Q_{i,\bar{j}} = C^TC + K_{i,\bar{j}}^T R K_{i,\bar{j}}$\;
		\For{$j \leq \bar{j}$}\label{alg:line::inner_loop_start}
		    \State $A_{i,j} = A - BK_{i,\bar{j}} + DL_{i,j}$\;
		    \State Get $P_{i,j}$ by solving
			\eqref{eq:innerloop_eval}\;
			\State Update $L_{i,j+1}$ by \eqref{eq:innerloop_update}\;
			\State $j \leftarrow j+1$ \;
        \EndFor
		\label{alg:line::inner_loop_end}
		\State $U_{i,\bar{j}} = P_{i,\bar{j}} + P_{i,\bar{j}}D(\gamma^{2}I_q-D^TP_{i,\bar{j}}D)^{-1}D^TP_{i,\bar{j}}$\;
		\State $K_{i+1,\bar{j}} = (R + B^T U_{i,\bar{j}} B)^{-1}B^T U_{i,\bar{j}} A$ \label{alg:line::outer_loop_update}\;
        \State $i \leftarrow i+1$ \;
    \EndFor
\end{algorithmic}
\end{algorithm}

\section{Robustness Analysis for the Dual-Loop Algorithm}
In the previous section, the exact PO algorithm was introduced in the sense that an accurate knowledge of system matrices was required to implement the algorithm. In practice, however, we do not have access to such an accurate model. Therefore, Algorithm \ref{alg:IteAlg_model} has to be implemented in a model-free setting. For example, we can approximate the gradient by zeroth-order method or approximate the value function (parameterized by cost matrix $P_K$) by approximate dynamic programming.  The updates for the controllers in \eqref{eq:outerloop_update} and \eqref{eq:innerloop_update} are subjected to noise. In this section, using the well-known concept of input-to-state stability in control theory, we will analyze the robustness of the dual-loop policy iteration algorithm in the presence of disturbance.

\subsection{Notions of Input-to-State Stability}
Consider the general nonlinear discrete-time system
\begin{align}\label{eq:generalsys}
    \chi_{k+1} = f(\chi_k, \rho_k).
\end{align}
where $\chi_k \in \mathcal{X}$, $\rho_k \in \mathcal{V}$, and $f$ is continuous. $\chi_e$ is the equilibrium state of the unforced system, that is $0 = f(\chi_e,0)$. 

\begin{definition}\cite{book_Hahn}
A function $\xi(\cdot): \mathbb{R}_+ \to \mathbb{R}_+$ is a $\mathcal{K}$-function if it is continuous, strictly increasing and vanishes at zero. A function $\kappa(\cdot,\cdot): \mathbb{R}_+ \times \mathbb{R}_+ \to \mathbb{R}_+$ is a $\mathcal{KL}$-function if for any fixed $t \geq 0$, $\kappa(\cdot,t)$ is a $\mathcal{K}$-function, and for any $r \geq 0$, $\kappa(r,\cdot)$ is decreasing and $\kappa(r,t) \to 0$ as $t \to \infty$. 
\end{definition}

\begin{definition}\cite{Jiang2001}
System \eqref{eq:generalsys} is ISS if there exist a $\mathcal{KL}$-function $\kappa$ and a $\mathcal{K}$-function $\xi$ such that for each input $\rho \in \ell_\infty$ and initial state $\chi_1 \in \mathcal{X}$, the following holds
\begin{align}
    \norm{\chi_k - \chi_e} \leq \kappa(\norm{\chi_1 - \chi_e}, k) + \xi(\norm{\rho}_\infty).
\end{align}
for any $k\in\mathbb{Z}_+$.
\end{definition}

Generally speaking, input-to-state stability characterizes the influence of input $\rho$ to the evolution of state $\chi$. The deviation of the state $\chi$ to the equilibrium is bounded as long as the input $\rho$ is bounded. Furthermore, the influence of the initial deviation $\norm{\chi_1 - \chi_e}$ vanishes as time tends to infinity. 

\subsection{Robustness Analysis for the Outer Loop}
The exact outer loop iteration is shown in \eqref{eq:outerloopIte}, and in the presence of disturbance, it is modified as
\begin{subequations}\label{eq:outerloopInext}
\begin{align}
    & \hat{A}_i^T \hat{U}_{i} \hat{A}_i - \hat{P}_i + \hat{Q}_i = 0, \label{eq:outloopInext_evalu}\\
    & \hat{U}_{i} = \hat{P}_{i} + \hat{P}_{i}D(\gamma^2 I_q - D^T \hat{P}_{i} D)^{-1}D^T \hat{P}_{i}, \\
    & \hat{K}_{i+1} = (R + B^T \hat{U}_i B)^{-1}B^T \hat{U}_i A + \Delta K_{i+1}, \label{eq:outerloopInext_update}
\end{align}
\end{subequations}
where $\Delta K_i$ is the disturbance at the $i$th iteration, and ``hat" is used to distinguish the sequences generated by the exact \eqref{eq:outerloopIte} and inexact \eqref{eq:outerloopInext} outer-loop iterations. By considering \eqref{eq:outerloopInext} as a discrete-time nonlinear system with the state $\hat{P}_i$ and input $\Delta K_i$, it can be shown that \eqref{eq:outerloopInext} is inherently robust to $\Delta K$ in the sense of small-disturbance ISS \cite{Pang2021,Pang2022}.

\begin{theorem}\label{thm:outerISS}
Given $K_1 \in \mathcal{G}_h$, there exists a constant $d(h) > 0$, such that if $\norm{\Delta K}_\infty < d(h)$, \eqref{eq:outerloopInext} is small-disturbance ISS. That is, there exist a $\mathcal{KL}$-function $\kappa_{1}(\cdot,\cdot)$ and a $\mathcal{K}$-function $\xi_1(\cdot)$, such that
\begin{align}
    \norm{\hat{P}_{i} - P^*}_F \leq \kappa_{1}(\norm{\hat{P}_{1} - P^*}_F, i) + \xi_1(\norm{\Delta K}_\infty).
\end{align}
\end{theorem}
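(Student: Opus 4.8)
The plan is to treat the inexact outer-loop iteration \eqref{eq:outerloopInext} as a perturbation of the exact Newton/policy-iteration recursion \eqref{eq:outerloopIte} and to build the small-disturbance ISS estimate from the global linear convergence established in Theorem \ref{thm:outerloopGloballinear} together with a local Lipschitz/continuity argument. First I would verify a robust-invariance property: there is a sublevel set $\mathcal{G}_{h'}$ (with $h' > h$, still contained in $\mathcal{W}$) such that if $\hat P_i$ corresponds to a gain in $\mathcal{G}_h$ and $\norm{\Delta K_{i+1}}$ is smaller than some $d(h) > 0$, then the perturbed update $\hat K_{i+1}$ still lies in $\mathcal{G}_{h'} \subset \mathcal{W}$, so that \eqref{eq:outloopInext_evalu} indeed has a well-defined positive-definite solution $\hat P_{i+1}$ and $I_q - \gamma^{-2}D^T\hat P_{i+1} D \succ 0$. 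This rests on the coercivity/continuity of $K \mapsto \Tr(P_K)$ near $\partial\mathcal{W}$ (the analogue of the LQR coercive property invoked in the introduction) plus continuity of the exact Newton map, which guarantees that the nominal update $K_{i+1}^{\mathrm{nom}} = (R+B^T\hat U_iB)^{-1}B^T\hat U_iA$ stays in the interior, leaving room to absorb a small additive $\Delta K_{i+1}$.

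Next I would decompose one step of the error. Write $\hat P_{i+1} - P^* = (\hat P_{i+1} - P_{i+1}^{\mathrm{nom}}) + (P_{i+1}^{\mathrm{nom}} - P^*)$, where $P_{i+1}^{\mathrm{nom}}$ is the cost matrix obtained by applying one exact outer-loop step to $\hat P_i$ (equivalently, to the gain $\hat K_i$). By Theorem \ref{thm:outerloopGloballinear} applied on $\mathcal{G}_{h'}$, the second term contracts: $\Tr(P_{i+1}^{\mathrm{nom}} - P^*) \le \alpha(h')\,\Tr(\hat P_i - P^*)$, hence $\norm{P_{i+1}^{\mathrm{nom}} - P^*}_F \le \sqrt{n}\,\alpha(h')\,\norm{\hat P_i - P^*}_F$ using Lemma \ref{lm:normInequ}. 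For the first term I would show it is bounded by a $\mathcal{K}$-function of $\norm{\Delta K_{i+1}}$: the gain $\hat K_{i+1}$ differs from $K_{i+1}^{\mathrm{nom}}$ only by $\Delta K_{i+1}$, and the map $K \mapsto P_K$ solving \eqref{eq:AREforKPrelimi} is locally Lipschitz on the compact set $\mathcal{G}_{h'}$ (the Riccati/Lyapunov solution depends smoothly on $K$ as long as $A-BK$ is stable and $\gamma^2 I_q - D^TP_KD \succ 0$, both of which hold uniformly on $\mathcal{G}_{h'}$), so $\norm{\hat P_{i+1} - P_{i+1}^{\mathrm{nom}}}_F \le \ell(h')\,\norm{\Delta K_{i+1}}$ for some constant $\ell(h')$. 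Combining, $e_{i+1} \le \rho\, e_i + c\,\norm{\Delta K}_\infty$ with $e_i := \norm{\hat P_i - P^*}_F$, $\rho := \sqrt{n}\,\alpha(h') \in [0,1)$, and $c := \ell(h')$; shrinking $d(h)$ if necessary (using the invariance step) keeps the whole trajectory in $\mathcal{G}_{h'}$ so the same $\rho, c$ apply at every iteration. Iterating this scalar bound gives $e_i \le \rho^{\,i-1} e_1 + \frac{c}{1-\rho}\norm{\Delta K}_\infty$, which is exactly the ISS form with $\kappa_1(r,i) := \rho^{\,i-1} r$ (a $\mathcal{KL}$-function) and $\xi_1(s) := \frac{c}{1-\rho} s$ (a $\mathcal{K}$-function).

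The main obstacle I anticipate is the robust-invariance step — proving that a uniform $d(h) > 0$ exists such that perturbations never push the iterate out of a fixed sublevel set on which the contraction rate $\alpha$ and the Lipschitz constant $\ell$ are uniformly bounded. This is delicate because $\mathcal{W}$ is an open set, $\alpha(h) \to 1$ and $\ell(h) \to \infty$ as $h$ grows toward $\partial\mathcal{H}$, and a naive bound would let the admissible perturbation size depend on the current iterate. The resolution is the standard small-disturbance ISS trick: first fix $h' > h$, obtain uniform constants on the compact closure of $\mathcal{G}_{h'} \cap \{$gains$\}$, then choose $d(h)$ small enough that (i) the exact step from any point of $\mathcal{G}_{h'}$ lands with margin inside $\mathcal{G}_{h'}$ — available because the exact step strictly decreases $\Tr(P_K)$ toward $P^*$ and $\mathcal{G}_{h}\subset\mathrm{int}\,\mathcal{G}_{h'}$ — and (ii) the Lipschitz perturbation $\ell(h')d(h)$ is within that margin; an induction then shows the perturbed trajectory stays in $\mathcal{G}_{h'}$ for all $i$. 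Care is also needed to phrase the perturbation of $\hat K_{i+1}$ correctly: the additive noise $\Delta K_{i+1}$ enters after the policy-improvement formula, so the relevant sensitivity is $K \mapsto P_K$ evaluated along a short segment between $K_{i+1}^{\mathrm{nom}}$ and $K_{i+1}^{\mathrm{nom}} + \Delta K_{i+1}$, both guaranteed to lie in $\mathcal{G}_{h'}$ by the invariance argument. Everything else — verifying $\kappa_1 \in \mathcal{KL}$, $\xi_1 \in \mathcal{K}$, and assembling the final inequality — is routine.
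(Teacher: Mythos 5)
Your overall architecture is close to the paper's: establish robust invariance of a sublevel set under small $\Delta K$, use the one-step contraction behind Theorem \ref{thm:outerloopGloballinear}, add a disturbance term, and iterate a scalar recursion. The difference is in how the disturbance term is produced. You bound $\|\hat P_{i+1}-P^{\mathrm{nom}}_{i+1}\|_F$ by Lipschitz continuity of $K\mapsto P_K$ on a compact sublevel set (defensible via the implicit-function-theorem argument of Lemma \ref{lm:continuousPK}), giving $\xi_1$ linear in $\|\Delta K\|_\infty$ and requiring an enlarged set $\mathcal{G}_{h'}$. The paper instead (Lemma \ref{lm:staywithinSet}) writes the Riccati equation of $K$ at the perturbed gain $\hat K'=K'+\Delta K$, subtracts, and completes squares, which yields in one stroke $\Tr(\hat P_{K'}-P^*)\le\bigl(1-\tfrac{1}{\sqrt n\,a(h)}\bigr)\Tr(P_K-P^*)+\bar c(h)d_1(h)\|\Delta K\|_F^2$; the contraction margin then absorbs the quadratic disturbance term, so $\mathcal{G}_h$ itself is invariant (no $h'>h$ needed) and $\xi_1$ comes out quadratic with explicit constants. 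Your route is viable in principle, but note it has the circularity you yourself flag: the Lipschitz constant $\ell(h')$ is only available where the gain is known to lie in (a compact subset of) $\mathcal{W}$, while $\hat K_{i+1}\in\mathcal{G}_{h'}$ is exactly what you are trying to prove; this must be broken by a tube/continuation argument (e.g.\ following the segment $K^{\mathrm{nom}}_{i+1}+t\Delta K_{i+1}$ and showing the sublevel bound cannot be crossed), not merely asserted "by the invariance argument."

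The concrete flaw is the claim $\rho:=\sqrt n\,\alpha(h')\in[0,1)$. Theorem \ref{thm:outerloopGloballinear} contracts the \emph{trace}, with $\alpha(h)=1-\tfrac{1}{\sqrt n\,a(h)}$ typically close to $1$; converting trace to Frobenius norm at every step via Lemma \ref{lm:normInequ} costs a factor $\sqrt n$ per iteration, so $\sqrt n\,\alpha(h')\ge 1$ in general and your recursion $e_{i+1}\le\rho e_i+c\|\Delta K\|_\infty$ no longer produces a decaying $\mathcal{KL}$ term. The fix is the one the paper uses: run the recursion on the scalar $\Tr(\hat P_i-P^*)$ (nonnegative because $\hat P_i\succeq P^*$ for $\hat K_i\in\mathcal{W}$), i.e.\ $\Tr(\hat P_{i+1}-P^*)\le\alpha(h')\Tr(\hat P_i-P^*)+\sqrt n\,\ell(h')\|\Delta K\|_\infty$, iterate, and convert between trace and Frobenius norm only once at each end, yielding $\kappa_1(r,i)=\alpha^{i-1}\sqrt n\,r$. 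With that repair (and a careful treatment of the circularity above), your argument goes through and delivers the stated small-disturbance ISS estimate, albeit with a linear rather than quadratic gain function $\xi_1$.
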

We note that $\Delta K = \{\Delta K_i \}_{i=1}^{\infty}$ is a sequence of disturbance signals, and its $\ell_\infty$-norm is $\norm{\Delta K}_\infty = \sup_{i \in \mathbb{Z}_+}\norm{\Delta K_i}_F$.

If the outer-loop update in \eqref{eq:outerloopIte} can be computed exactly, which requires $P_i$ to be exact, then the iterations of the exact update will not leave the admissible set $\mathcal{W}$. In contrast, the dual-loop algorithm (Algorithm \ref{alg:IteAlg_model}) has access only to $P_{i,\bar{j}}$, which is close to $P_i$ but not exact. There is the possibility that this inaccuracy could drive the outer-loop iteration away from the optimal solution or even beyond the admissible region $\mathcal{W}$. As a direct corollary to Theorems \ref{thm:uniformConvergence} and \ref{thm:outerISS}, we state below that Algorithm \ref{alg:IteAlg_model} can still find a near-optimal solution if $\Bar{i}$ and $\Bar{j}$ are large enough.

\begin{corollary}\label{cor:dual-loopaccu}
For any $h\in \mathcal{H}$, $K_1 \in \mathcal{G}_h$, and $\epsilon>0$, there exist  $\bar{i}(h,\epsilon) \in \mathbb{Z}_+$ and $\bar{j}(h,\epsilon) \in \mathbb{Z}_+$, such that $\norm{{P}_{\bar{i},\bar{j}} - {P}^*}_F < \epsilon$.
\end{corollary}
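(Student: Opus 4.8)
### Proof Plan for Corollary~\ref{cor:dual-loopaccu}

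The plan is to combine the uniform convergence of the inner loop (Theorem~\ref{thm:uniformConvergence}) with the small-disturbance ISS property of the outer loop (Theorem~\ref{thm:outerISS}), while carefully tracking that the perturbed outer-loop trajectory never leaves the sublevel set $\mathcal{G}_h$ — since both cited results are only valid there. First I would identify the perturbed outer-loop iteration of Algorithm~\ref{alg:IteAlg_model} with the inexact recursion \eqref{eq:outerloopInext}: the gain $K_{i+1,\bar{j}}$ produced at line~\ref{alg:line::outer_loop_update} from $P_{i,\bar{j}}$ equals the exact update \eqref{eq:outerloop_update} applied to $P_i$ plus an error term $\Delta K_{i+1}$ that captures the discrepancy $P_{i,\bar{j}} - P_i$. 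Because the map $P \mapsto (R+B^TU_PB)^{-1}B^TU_PA$ is locally Lipschitz on a compact subset of $\mathcal{W}$ (here $U_P$ depends smoothly on $P$ where $\gamma^2 I_q - D^TPD \succ 0$), there is a constant $\ell(h)$ with $\norm{\Delta K_{i+1}}_F \leq \ell(h)\norm{P_{i,\bar{j}} - P_i}_F$, provided the iterates stay in a compact region.

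Next I would invoke Theorem~\ref{thm:uniformConvergence}: given $\epsilon>0$, choose $\epsilon' = \epsilon'(h,\epsilon)$ small (to be fixed below) and pick $\bar{j}(h,\epsilon') \in \mathbb{Z}_+$ so that $\norm{P_{K,\bar{j}} - P_K}_F \leq \epsilon'$ for every $K \in \mathcal{G}_h$. Combined with the Lipschitz bound, this yields $\norm{\Delta K}_\infty \leq \ell(h)\epsilon' =: \delta$, where $\delta$ can be made as small as we like by shrinking $\epsilon'$. In particular, choose $\epsilon'$ small enough that $\delta < d(h)$, the threshold in Theorem~\ref{thm:outerISS}. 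Then, so long as the perturbed iterates $\hat P_i$ remain in $\mathcal{G}_h$, Theorem~\ref{thm:outerISS} applies and gives
\begin{align}
    \norm{\hat{P}_{i} - P^*}_F \leq \kappa_1(\norm{\hat{P}_1 - P^*}_F, i) + \xi_1(\delta).
\end{align}
Here $\hat P_1 = P_{1,\bar{j}}$ with $K_{1,\bar{j}} \in \mathcal{W}$; one should note $\hat P_1$ is within $\mathcal{G}_h$ (or enlarge $h$ at the outset so that the initial data lies strictly inside). Shrinking $\epsilon'$ further so that $\xi_1(\delta) < \epsilon/2$, and then choosing $\bar{i}(h,\epsilon)$ large enough that $\kappa_1(\norm{\hat P_1 - P^*}_F, \bar{i}) < \epsilon/2$, gives $\norm{P_{\bar{i},\bar{j}} - P^*}_F < \epsilon$, which is the claim (noting $\hat P_{\bar i} = P_{\bar i, \bar j}$).

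The main obstacle is the circularity lurking in the invariance requirement: Theorems~\ref{thm:uniformConvergence} and~\ref{thm:outerISS} both presuppose the iterates lie in $\mathcal{G}_h$, yet I am using them to prove the iterates stay there. The clean way to close this is an induction (or a standard ISS trapping argument): pick a slightly larger level $h' > h$ with $\mathcal{G}_h \subset \mathcal{G}_{h'}$, run the ISS/uniform-convergence estimates on $\mathcal{G}_{h'}$, and choose $\delta$ small enough that the ISS gain bound $\kappa_1(\cdot,\cdot) + \xi_1(\delta)$ keeps any trajectory starting in $\mathcal{G}_h$ within $\mathcal{G}_{h'}$ for all $i$ — this is exactly the kind of small-disturbance invariance the ISS formulation is designed to deliver, and it requires verifying that $\Tr(\hat P_i) \le \Tr(P^*) + h'$ is preserved, e.g. by relating $\norm{\hat P_i - P^*}_F$ back to $\Tr(\hat P_i - P^*)$ via Lemma~\ref{lm:normInequ}. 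Once this trapping is established, the Lipschitz constant $\ell(h')$ and the ISS functions are uniformly valid along the whole run, and the two-$\epsilon/2$ splitting above goes through.
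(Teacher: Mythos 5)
Your route is essentially the paper's: identify the Algorithm~\ref{alg:IteAlg_model} update with the inexact recursion \eqref{eq:outerloopInext}, bound $\norm{\Delta K_{i+1}}_F \le d_2(h)\norm{P_i - P_{i,\bar j}}_F$ via Lipschitz continuity of $P \mapsto (R+B^TU_PB)^{-1}B^TU_PA$ on the sublevel set, use Theorem~\ref{thm:uniformConvergence} (with $\bar j$ uniform in $K$) to push $\norm{\Delta K}_\infty$ below the ISS threshold, and then let Theorem~\ref{thm:outerISS} drive the outer iterates into an $\epsilon$-neighborhood of $P^*$. One step, however, is wrong as written: the closing identification $\hat P_{\bar i} = P_{\bar i,\bar j}$. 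In Theorem~\ref{thm:outerISS}, $\hat P_i$ is the \emph{exact} solution of \eqref{eq:outloopInext_evalu} for the perturbed gain $\hat K_i = K_{i,\bar j}$, i.e.\ $\hat P_{\bar i} = P_{\bar i}$, whereas $P_{\bar i,\bar j}$ is only the inner-loop approximation of it after $\bar j$ steps. Your ISS estimate therefore bounds $\norm{P_{\bar i} - P^*}_F$, not the quantity in the corollary; you still need the triangle inequality $\norm{P_{\bar i,\bar j} - P^*}_F \le \norm{P_{\bar i,\bar j} - P_{\bar i}}_F + \norm{P_{\bar i} - P^*}_F$ together with the uniform inner-loop bound $\norm{P_{\bar i,\bar j} - P_{\bar i}}_F \le \epsilon'$ that you already have, with $\epsilon$ split accordingly --- this is exactly how the paper finishes (an $\epsilon/2$--$\epsilon/2$ split between the inner approximation error and the outer ISS bound). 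Since the missing bound is already in your hands, this is a fixable slip rather than a structural gap.

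On the invariance issue you flag: the enlargement to a level $h' > h$ is unnecessary. The paper's Theorem~\ref{thm:outerISS} (through Lemma~\ref{lm:staywithinSet}) already guarantees that $\mathcal{G}_h$ itself is invariant whenever $\norm{\Delta K}_\infty$ is below the threshold $d(h)$, so the apparent circularity is resolved by a one-step induction entirely inside $\mathcal{G}_h$: if $K_i \in \mathcal{G}_h$, Theorem~\ref{thm:uniformConvergence} applies at step $i$ and makes $\norm{\Delta K_{i+1}}_F \le d_2(h)\,\norm{P_i - P_{i,\bar j}}_F$ small enough that $K_{i+1} \in \mathcal{G}_h$ again. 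Your trapping argument on $\mathcal{G}_{h'}$ would also work, but it only forces you to requantify the Lipschitz and ISS constants on the larger set without buying anything.
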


\subsection{Robustness Analysis for the Inner Loop}
As a counterpart of the inexact outer-loop iteration, the inexact inner-loop iteration can be developed as
\begin{subequations}\label{eq:innerloopInext}
\begin{align}
    &\hat{A}_{i,j} ^T\hat{P}_{i,j}\hat{A}_{i,j} - \hat{P}_{i,j} + \hat{Q}_i - \gamma^2\hat{L}_{i,j}^T\hat{L}_{i,j} = 0, \label{eq:innerloopInext_eval}\\
    &\hat{L}_{i,j+1} = (\gamma^2 I_q - D^T\hat{P}_{i,j}D)^{-1}D^T\hat{P}_{i,j}\hat{A}_i + \Delta L_{i,j+1}.  \label{eq:innerloopInext_update}
\end{align}
\end{subequations}
Here, $\Delta L_{i,j+1}$ denotes the disturbance to the inner loop iteration and ``hat" emphasizes that the corresponding sequences are generated by the inexact iteration. With the inexact inner loop at hand, the following theorem shows that the inner-loop iteration \eqref{eq:innerloopInext} is robust to disturbance $\Delta L_{i}$ in the sense of small-disturbance input-to-state stability \cite{Pang2021,Pang2022}.

\begin{theorem} \label{thm:innerISS}
For any $\hat{K}_i \in \mathcal{W}$, there exists a constant $e(\hat{K}_i) > 0$, such that if $\norm{\Delta L_i}_\infty < e(\hat{K}_i)$, \eqref{eq:innerloopInext} is small-disturbance ISS. That is, there exist a $\mathcal{KL}$-function $\kappa_{2}(\cdot,\cdot)$ and a $\mathcal{K}$-function $\xi_2(\cdot)$, such that
\begin{align}
    \norm{\hat{P}_{i,j} - \hat{P}_{i}}_F \leq \kappa_{2}(\norm{\hat{P}_{i,1} - \hat{P}_{i}}_F, j) + \xi_2(\norm{\Delta L_i}_\infty).
\end{align}
\end{theorem}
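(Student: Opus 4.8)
The plan is to treat the inexact inner-loop recursion \eqref{eq:innerloopInext} as a perturbation of the exact recursion \eqref{eq:innerloop} and establish small-disturbance ISS by exploiting the global linear convergence already proved in Theorem \ref{thm:innerloop_globallinear}, together with local Lipschitz estimates of the one-step map. Fix $\hat{K}_i \in \mathcal{W}$. The exact inner loop is a contraction in the sense that $\Tr(P_i - P_{i,j+1}) \le \beta(\hat{K}_i)\Tr(P_i - P_{i,j})$, and by Lemma \ref{lm:normInequ} this transfers to $\|P_i - P_{i,j}\|_F$ up to a factor $\sqrt{n}$. The idea is to show that the exact one-step update map $L \mapsto L'$ (equivalently, the induced map on $P$) is, on a suitable compact sublevel set of $P$, both well defined (i.e. the inverse $(\gamma^2 I_q - D^T P D)^{-1}$ exists and $A-BK_i+DL'$ remains stable) and locally Lipschitz. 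Then $\hat{P}_{i,j+1}$ differs from the exact image of $\hat{P}_{i,j}$ only through the additive disturbance $\Delta L_{i,j+1}$ entering \eqref{eq:innerloopInext_update}, so one obtains a recursion of the form $\|\hat{P}_{i,j+1} - P_i\|_F \le \tilde\beta \|\hat{P}_{i,j} - P_i\|_F + c\,\|\Delta L_{i,j+1}\|_F$ for some $\tilde\beta \in (\beta(\hat{K}_i),1)$ and constant $c>0$, valid as long as $\hat{P}_{i,j}$ stays in the compact set. Iterating this scalar recursion yields the $\mathcal{KL}+\mathcal{K}$ bound with $\kappa_2(r,j) = \tilde\beta^{j} \sqrt{n}\, r$ and $\xi_2(s) = \frac{c}{1-\tilde\beta}s$, which is the desired conclusion.

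The steps, in order, would be: (1) characterize a compact invariant region — take a sublevel set $\mathcal{S} = \{P \in \mathbb{S}^n : 0 \prec P \preceq P_i,\ \Tr(P_i - P) \le \delta_0\}$ (or a slightly enlarged neighborhood thereof) on which $\gamma^2 I_q - D^T P D \succ 0$ is uniformly bounded away from singularity and on which the Lyapunov equation \eqref{eq:innerloopInext_eval} has a unique positive definite solution with $A - BK_i + DL$ stable; the monotonicity/stability facts of Lemma \ref{lm:innerloop convergence} are used here. (2) Show the exact update map is Lipschitz on $\mathcal{S}$: $L_{i,j+1}$ depends smoothly (rationally) on $P_{i,j}$, and $P_{i,j+1}$ solves a discrete Lyapunov equation whose coefficients depend smoothly on $L_{i,j+1}$, so standard perturbation bounds for Lyapunov equations give a Lipschitz constant. (3) Derive the perturbed one-step inequality: write $\hat{L}_{i,j+1} = L^{\mathrm{exact}}(\hat{P}_{i,j}) + \Delta L_{i,j+1}$, plug into the Lyapunov equation, and bound $\|\hat{P}_{i,j+1} - P_i\|_F$ by the exact contraction term plus a term linear in $\|\Delta L_{i,j+1}\|_F$ (plus possibly a higher-order term in the disturbance, absorbed by shrinking the allowed disturbance size). (4) Choose $e(\hat{K}_i) > 0$ small enough that, whenever $\|\Delta L_i\|_\infty < e(\hat{K}_i)$, the iterate $\hat{P}_{i,j}$ never leaves $\mathcal{S}$ — this is an invariance/induction argument: if $\hat{P}_{i,j} \in \mathcal{S}$ and the disturbance is small, the recursion forces $\hat{P}_{i,j+1} \in \mathcal{S}$. (5) Iterate the scalar recursion to assemble the ISS estimate.

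The main obstacle I expect is step (4) combined with the well-posedness part of step (1): the inexact iterate $\hat{P}_{i,j}$ is obtained by solving \eqref{eq:innerloopInext_eval} with the perturbed gain $\hat{L}_{i,j}$, and there is no a priori guarantee that $\gamma^2 I_q - D^T \hat{P}_{i,j} D \succ 0$ or that $A - BK_i + D\hat{L}_{i,j}$ is stable once the disturbance pushes us off the exact trajectory — indeed \eqref{eq:innerloopInext_eval} may fail to admit a positive definite stabilizing solution at all if $\hat{L}_{i,j}$ is a poor choice. Closing this gap requires a careful quantitative argument: one must show the exact contraction has enough "margin" (the factor $\tilde\beta < 1$ leaves room) that a sufficiently small disturbance keeps every iterate strictly inside the region where the Lyapunov equation is well posed and the relevant matrix is positive definite, so that the recursion is self-consistent. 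This is precisely the small-disturbance (as opposed to global) nature of the ISS property, and getting the constant $e(\hat{K}_i)$ right — uniformly over $j$ — is the delicate part; the dependence of all constants on $\hat{K}_i$ through $\beta(\hat{K}_i)$ and through the conditioning of $\gamma^2 I_q - D^T P_i D$ must be tracked, which will matter when this theorem is combined with the outer-loop analysis.
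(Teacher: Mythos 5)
Your high-level strategy (view \eqref{eq:innerloopInext} as a disturbed contraction, establish an invariant region, and iterate a one-step inequality) is the same as the paper's, but there is a genuine gap at precisely the point you flag as ``delicate'' and then leave open: nothing in your sketch establishes, for a single $j$-independent threshold $e(\hat{K}_i)$, that the perturbed iteration stays well posed, i.e.\ that $\hat{A}_{i,j}=A-B\hat{K}_i+D\hat{L}_{i,j}$ remains stable, that \eqref{eq:innerloopInext_eval} has a positive-definite solution, and that $\gamma^2 I_q - D^T\hat{P}_{i,j}D \succ 0$ for all $j$. Moreover, the mechanism you hint at for closing this gap --- the ``margin'' left by the contraction factor $\tilde\beta<1$, i.e.\ the iterates staying close to $P_i$ --- is not the right certificate: the early iterates are nowhere near $P_i$ (the loop starts from $\hat{L}_{i,1}=0$), and norm-closeness of $\hat{P}_{i,j}$ to $P_i$ does not by itself bound the perturbed gains or guarantee stability of the perturbed closed loop. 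What actually does the work in the paper is a separate induction (Lemma \ref{lm:innerloop_smalldistur}): whenever $\hat{A}_{i,j}$ is stable, the completed-square identity \eqref{eq:hatAijPdiff} forces the semidefinite ordering $\hat{P}_{i,j}\preceq\hat{P}_i$ (an ordering a Lipschitz/norm argument cannot deliver), which yields the $j$-uniform bound $\norm{\tilde{L}_{i,j+1}}\le\norm{\hat{L}_{i,*}}$ on the unperturbed update; combined with the strict Lyapunov margin $e_3(\hat{K}_i)=\sgmin(\hat{Q}_i-\gamma^2\hat{L}_{i,*}^T\hat{L}_{i,*})>0$ coming from the Riccati equation \eqref{eq:AREforKi1}, this shows the stability certificate for $\hat{A}_{i,j+1}$ survives any disturbance with $\norm{\Delta L_{i,j+1}}$ below a threshold depending only on $\hat{K}_i$. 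Without this (or an equivalent) argument, your Lipschitz constants, your stability radii, and hence $e(\hat{K}_i)$ cannot be chosen uniformly in $j$, so the recursion in your step (3) and the invariance claim in step (4) are not justified --- and this uniformity is the entire content of ``small-disturbance'' ISS here.

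Two further remarks. First, once well-posedness and $\hat{P}_{i,j}\preceq\hat{P}_i$ are secured, the paper does not need a Lipschitz perturbation of the exact map at all: it subtracts consecutive perturbed Lyapunov equations, reuses the one-step bound of Lemma \ref{eq:Ebound_inn}, and obtains $\Tr(\hat{P}_i-\hat{P}_{i,j+1})\le(1-1/b(\hat{K}_i))\Tr(\hat{P}_i-\hat{P}_{i,j})+\gamma^2\norm{\Delta L_i}_\infty^2\Tr(\hat{M}_{i,j+1})$ with the uniform bound $\Tr(\hat{M}_{i,j+1})\le 2\Tr(\hat{P}_i)/e_3(\hat{K}_i)$ again supplied by the margin $e_3$; the resulting gain $\xi_2$ is quadratic in $\norm{\Delta L_i}_\infty$, whereas yours is linear --- both are admissible $\mathcal{K}$-functions, so that difference is immaterial. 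Second, a minor but real technical slip: the exact contraction (Theorem \ref{thm:innerloop_globallinear}) is in trace, and converting to the Frobenius norm at every step costs a factor $\sqrt{n}$, so your claimed Frobenius-norm recursion with $\tilde\beta\in(\beta(\hat{K}_i),1)$ need not hold; the recursion should be run in trace (which also requires the ordering $\hat{P}_{i,j}\preceq\hat{P}_i$ so that the trace is a norm surrogate) and converted to $\norm{\cdot}_F$ once at the end, as the paper does.
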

We note that $\Delta L_i = \{\Delta  L_{i,j} \}_{j=1}^{\infty}$ is a sequence of disturbance signals, and $\norm{\Delta  L_i }_\infty = \sup_{j \in \mathbb{Z}_+}\norm{\Delta L_{i,j}}_F$.
The results of these last two theorems guarantee robustness of the dual-loop PO algorithm. Literally speaking, when the dual-loop PO algorithm is implemented in the presence of disturbance, it still finds the near optimal solution, and the deviation between the generated policy and the optimal one is determined by the magnitude of the disturbance. To be more specific, as iteration $i$ ($j$ for inner loop) goes to infinite, the cost matrix $\hat{P}_{i}$ ($\hat{P}_{i,j}$) enters a small neighborhood of the optimal solution $P^*$ ($\hat{P}_i$).

\section{Learning-based Off-Policy Policy Optimization}
We will develop a learning-based algorithm to learn from data a robust suboptimal controller (i.e. an approximation of $K^*$) without requiring any accurate knowledge of $(A,B,D)$ under the setting of zero-sum dynamic game with additive Gaussian noise. The input-state data from the following system will be utilized for learning:
\begin{align}\label{eq:zeroSumnoise}
    x_{t+1} = Ax_t + Bu_t + D w_t + v_t,
\end{align}
where $v_t \sim \mathcal{N}(0,\Sigma)$ with $\Sigma=\Sigma^T \succeq 0$ is independent and identically distributed noise.

\subsection{Learning-Based Policy Optimization}
Suppose that the exploratory policies for the minimizer and maximizer are  
\begin{subequations}\label{eq:exploreinput}
\begin{align}
    u_t &= -\hat{K}_{exp} x_t + \sigma_1\xi_1,  &\xi_1 \sim \mathcal{N}(0,I_m), \\
    w_t &= \hat{L}_{exp} x_t + \sigma_2\xi_1,  &\xi_2 \sim \mathcal{N}(0,I_q),
\end{align}    
\end{subequations}
where $\hat{K}_{exp} \in \mathbb{R}^{m \times n}$ and $\hat{L}_{exp}\in \mathbb{R}^{q \times n}$ are exploratory feedback gains, and $\sigma_1,\sigma_2>0$ are the standard deviations of the exploratory noise. Let $z_t := [x_t^T, u_t^T, w_t^T]^T$. For any given matrices $X \in \mathbb{S}^{n}$, let
\begin{align}\label{eq:GammaDef}
\begin{split}
    &\Gamma(X) := \begin{bmatrix}
        \Gamma_{xx}(X) & \Gamma_{xu}(X) &\Gamma_{xw}(X) \\
        \Gamma_{ux}(X) & \Gamma_{uu}(X) &\Gamma_{uw}(X) \\
        \Gamma_{wx}(X) & \Gamma_{wu}(X) &\Gamma_{ww}(X) \\
    \end{bmatrix} \\
    &= \begin{bmatrix}
        A^T X A + Q & A^T X B &A^T X D \\
        B^T X A & B^T X B + R &B^T X D \\
        D^T X A & D^T X B &D^T X D - \gamma^2I_q
    \end{bmatrix}. 
\end{split}
\end{align}   
Along the trajectories of system \eqref{eq:zeroSumnoise}, $x_{t+1}^TXx_{t+1}$ can be computed to be
\begin{align}\label{eq:xXx}
\begin{split}
    x_{t+1}^TXx_{t+1} &= z_t^T \Gamma(X)z_t - r_t + v_t^TXv_t \\
    &+ 2v_t^TX(Ax_t + Bu_t + Dw_t).
\end{split}
\end{align}
where $r_t := x_t^TQx_t + u_t^T R u_t - \gamma^2 w_t^Tw_t$ is the stage cost of the zero-sum dynamic game. Taking the expectation of \eqref{eq:xXx} results in
\begin{align}\label{eq:xXxexpec}
    \mathbb{E}\left[{z}_t^T \Gamma(X) {z}_t + \Tr(\Sigma X) - r_t  - x_{t+1}^TXx_{t+1} | z_t \right] = 0.
\end{align}
To simplify the notations, let 
\begin{align}
    \bar{z}_t &:= [\vecv(z_t)^T, 1]^T\\
    \theta(X) &:= [\vecs(\Gamma(X))^T, \Tr(\Sigma X)]^T.
\end{align}
By pre-multiplying \eqref{eq:xXxexpec} with $\bar{z}_t$, one can obtain 
\begin{align}\label{eq:xXxexpec2} 
\begin{split}
    &\mathbb{E}\left[\bar{z}_t \bar{z}_t^T \theta(X) - \bar{z}_tr_t - \bar{z}_t\vecv(x_{t+1})^T\vecs(X) | z_t \right] = 0.
\end{split}
\end{align}
\begin{assumption}\label{ass:ergodic}
There exists an ergodic stationary probability measure $\pi$ on $\mathbb{R}^{n+m+q}$ for system \eqref{eq:zeroSumnoise} with controller \eqref{eq:exploreinput}.
\end{assumption}
\begin{remark}
    Assumption \ref{ass:ergodic} is widely used in approximate dynamic programming and in the RL literature \cite{Tsitsiklis1994,Tsitsiklis1997}.
\end{remark}
\begin{assumption}\label{ass:invertible}
$\mathbb{E}_\pi\left[ {z}_t{z}_t^T \right]$ and $\mathbb{E}_\pi\left[ \bar{z}_t\bar{z}_t^T \right]$ are invertible.
\end{assumption}
\begin{remark}
Assumption \ref{ass:invertible} is reminiscent of the persistent excitation (PE) condition \cite{Jiang_book2021,astrom1997}. As in the literature of data-driven control \cite{book_Jiang, Book_Lewis2013}, one can satisfy it by means of added exploration noise, such as sinusoidal signals or random noise.
\end{remark}%

Under Assumptions \ref{ass:ergodic} and \ref{ass:invertible}, taking the expectation of \eqref{eq:xXxexpec2} with respect to the invariant probability measure $\pi$, we have
 \begin{align}\label{eq:leastsquareTheta}
     \theta(X) &= \Phi^{\dagger}\Xi\vecs(X) + \Phi^{\dagger}\Psi.
\end{align}
where 
\begin{align}
\begin{split}
    \Phi &:= \mathbb{E}_\pi\left[\bar{z}_t\bar{z}_t^T\right], \quad \Xi := \mathbb{E}_\pi\left[\bar{z}_t \vecv(x_{t+1})^T \right], \\
    \Psi &:= \mathbb{E}_\pi\left[\bar{z}_t r_t  \right].
\end{split}
\end{align}
Therefore, $\Gamma_{ww}(X)$ and $\Gamma(X)$ can be reconstructed as
\begin{subequations}
\begin{align}
    \vecs(\Gamma_{ww}(X)) &= [\Phi^{\dagger}]_{n_1,n_2}\Xi\vecs(X) + [\Phi^{\dagger}]_{n_1,n_2}\Psi, \label{eq:Gmwwdata}\\
    \vecs(\Gamma(X)) &= [\Phi^{\dagger}]_{1,n_2}\Xi\vecs(X) + [\Phi^{\dagger}]_{1,n_2}\Psi, \label{eq:Gmdata}
\end{align}    
\end{subequations}
where 
\begin{align}
\begin{split}
    n_1 &= \frac{(n+m+q)(n+m+q+1)}{2}+1-\frac{(q+1)q}{2},  \\
    n_2 &= \frac{(n+m+q)(n+m+q+1)}{2}.
\end{split}
\end{align}

In practice, we use a finite number of trajectory samples to estimate $\Phi$, $\Xi$, and $\Psi$, that is 
\begin{align}\label{eq:dataMatrix}
\begin{split}
    &\hat{\Phi}_\tau := \frac{1}{\tau}\sum_{t=1}^{\tau}\bar{z}_t\bar{z}_t^T, \quad \hat{\Xi}_\tau := \frac{1}{\tau}\sum_{t=1}^{\tau}\bar{z}_t\vecv(x_{t+1})^T,   \\
    &\hat{\Psi}_\tau := \frac{1}{\tau}\sum_{t=1}^{\tau} \bar{z}_tr_t.
\end{split}
\end{align}
By the Birkhoff Ergodic Theorem \cite[Theorem 16.2]{book_Koralov}, the following relations hold \textit{almost surely}
\begin{subequations}
\begin{align}
    &\lim_{\tau \to \infty}\hat{\Phi}_\tau = \Phi, \quad \lim_{\tau \to \infty}\hat{\Xi}_\tau = \Xi, \quad \lim_{\tau \to \infty}\hat{\Psi}_\tau = \Psi.
\end{align}
\end{subequations}
Then, by \eqref{eq:Gmdata}, ${\Gamma}(X)$ is estimated by a data-driven approach as follows:
\begin{align}\label{eq:leastsquareapproximateGm}
    \vecs(\hat{\Gamma}(X)) =  [\hat{\Phi}_\tau^{\dagger}]_{1,n_2}\hat{\Xi}_\tau\vecs(X)  + [\hat{\Phi}_\tau^{\dagger}]_{1,n_2}\hat{\Psi}_\tau,
\end{align}

With the data-driven estimate $\hat{\Gamma}(X)$, we will transform model-based PO in Algorithm \ref{alg:IteAlg_model} to a learning-based algorithm. Considering \eqref{eq:GammaDef}, we can rewrite \eqref{eq:innerloop_eval} as
\begin{align}\label{eq:PijData}
    &[I_n, -K_i^T, L_{i,j}^T]\Gamma(P_{i,j})[I_n, -K_i^T, L_{i,j}^T]^T - P_{i,j}= 0.
\end{align}
Vectorizing \eqref{eq:PijData} and plugging \eqref{eq:leastsquareTheta} into \eqref{eq:PijData} result in
\begin{align}\label{eq:PijDataVec}
    & \left\{\left([I_n, -K_i^T, L_{i,j}^T] \otimes [I_n, -K_i^T, L_{i,j}^T]\right) T_{n+m+q} \right.\nonumber \\
    &\left. [\Phi^{\dagger}]_{1,n_1}\Xi - T_n \right\} \vecs(P_{i,j}) = \\
    &-\left([I_n, -K_i^T, L_{i,j}^T] \otimes [I_n, -K_i^T, L_{i,j}^T]\right) T_{n+m+q} [\Phi^{\dagger}]_{1,n_1}\Psi, \nonumber
\end{align}
where $T_n$ and $T_{n+m+q}$ are the duplication matrices defined in Lemma \ref{lm:ComDupMat}. One can view \eqref{eq:PijDataVec} as a linear equation with respect to $\vecs(P_{i,j})$. Hence, at each inner-loop iteration, $\vecs(P_{i,j})$ can be computed by solving \eqref{eq:PijDataVec}. With $P_{i,j}$, according to \eqref{eq:innerloop_update} and \eqref{eq:GammaDef}, the feedback gain of the maximizer can be updated as
\begin{align}\label{eq:Lupdate}
    L_{i,j+1} = -\Gamma_{ww}(P_{i,j})^{-1}(\Gamma_{wx}(P_{i,j}) - \Gamma_{wu}(P_{i,j})K_i).
\end{align}

Next, we will develop a data-driven approach for the outer-loop iteration. According to the expression of $\Gamma_{ww}(X)$ in \eqref{eq:GammaDef}, and the duplication matrices defined Lemma \ref{lm:ComDupMat}, we have
\begin{align}\label{eq:Gmww}
\begin{split}
    \vecs(\Gamma_{ww}(X)) &= T_q^\dagger (D^T \otimes D^T)T_n \vecs(X) \\
    &- \gamma^2T_q^\dagger \vect(I_q).
\end{split}
\end{align}
Since \eqref{eq:Gmwwdata} and \eqref{eq:Gmww} hold for any $X \in \mathbb{S}^n$, it follows that
\begin{align}\label{eq:dataDD}
\begin{split}
    T_q^\dagger (D^T \otimes D^T)T_n &= [\Phi^{\dagger}]_{n_1,n_2}\Xi.
\end{split}
\end{align}
Then, from \eqref{eq:dataDD}, we obtain
\begin{align}\label{eq:dataDD2}
\begin{split}
    &(T_q^TT_q)[\Phi^{\dagger}]_{n_1,n_2}\Xi (T_n^TT_n)^{-1} = T_q^T(D^T \otimes D^T) (T_n^\dagger)^T 
\end{split}
\end{align}
For any $Y \in \mathbb{S}^q$, let
\begin{align}
    \Omega(Y) = D Y D^T.
\end{align}
According to Lemma \ref{lm:ComDupMat} and \eqref{eq:dataDD2}, it holds
\begin{align}\label{eq:DYDdata}
\begin{split}
    &\vecs(\Omega(Y)) = T_n^\dagger (D \otimes D)T_q \vecs(Y) \\
    &\quad = (T_n^TT_n)^{-1} \Xi^T[\Phi^{\dagger}]_{n_1,n_2}^T(T_q^TT_q)\vecs(Y).
\end{split}
\end{align}
Now, $\Omega(Y)$ can be computed by \eqref{eq:DYDdata} without knowing $D$. Then, according to \eqref{eq:outerloop_update}, the feedback gain for the minimizer is updated as
\begin{align}\label{eq:Kupdate}
\begin{split}
    U_{i} &= P_i - P_i\Omega(\Gamma_{ww}(P_i)^{-1})P_i, \\
    K_{i+1} &= \Gamma_{uu}(U_{i})^{-1}\Gamma_{ux}(U_{i}).
\end{split}
\end{align}

The learning-based PO algorithm is given in the table labeled as Algorithm \ref{alg:IteAlg_learning}. It should be noticed that in Algorithm \ref{alg:IteAlg_learning}, the system matrices ($A$, $B$, and $D$) are not involved in computing $\vecs(P_{i,j})$, $L_{i,j+1}$ and $K_{i+1}$. In addition, the updated controller $K_i$ is not applied for the data collection. Therefore, it is a learning-based off-policy algorithm for policy optimization.

\subsection{Learning an Initial Admissible Controller}
In Algorithm \ref{alg:IteAlg_learning}, an initial admissible feedback gain is required to start the learning-based policy optimization algorithm. In this section, we will develop a data-driven method for learning such an initial feedback gain.

 Taking the expectation of \eqref{eq:zeroSumnoise} with respect to the invariant probability measure $\pi$ in Assumption \ref{ass:ergodic}, we have
\begin{align}\label{eq:systemID}
    \mathbb{E}_\pi \left[x_{t+1}^T - z_t^T[A,B,D]^T\right] = 0.
\end{align}
Pre-multiplying \eqref{eq:systemID} by $z_t$ and using Assumption \ref{ass:invertible} yield
\begin{align}
    [A,B,D]^T = (\Phi')^\dagger \Xi',
\end{align}
where 
\begin{align}
    \Phi' = \mathbb{E}_\pi \left[z_t z_t^T\right], \quad \Psi' = \mathbb{E}_\pi \left[z_t x_{t+1}\right].
\end{align}
In practice, we can utilize a finite number of trajectory samples to estimate $\Phi'$ and $\Xi'$, i.e.
\begin{align}
\begin{split}
    &\hat{\Phi}'_\tau := \frac{1}{\tau}\sum_{t=1}^{\tau}{z}_t{z}_t^T, \quad \hat{\Xi}'_\tau := \frac{1}{\tau}\sum_{t=1}^{\tau}{z}_t(x_{t+1})^T.
\end{split}
\end{align}
By the Birkhoff Ergodic Theorem \cite[Theorem 16.2]{book_Koralov}, the following relations hold \textit{almost surely}
\begin{subequations}
\begin{align}
    &\lim_{\tau \to \infty}\hat{\Phi}_\tau = \Phi, \quad \lim_{\tau \to \infty}\hat{\Xi}_\tau = \Xi.
\end{align}
\end{subequations}
As a result, we obtain the estimates of the system matrices
\begin{align}\label{eq:systemIDABD}
    [\hat{A}_\tau,\hat{B}_\tau,\hat{D}_\tau]^T = (\hat{\Phi}'_\tau)^\dagger \hat{\Xi}_\tau.
\end{align}
With the identified system matrices, the linear matrix inequalities (LMIs) in \eqref{eq:dataLMI} can be solved for the initial admissible controller design. The admissibility of the obtained initial controller is guaranteed by Theorem \ref{thm:initialController} below, which is proved in Appendix G. 

\begin{theorem}\label{thm:initialController}
    There exist $\epsilon > 0$, $\mu > 0$, and $\tau^*(\epsilon,\mu) > 0$, such that for any $\tau > \tau^*(\epsilon,\mu)$, the following LMIs 
    \begin{subequations}\label{eq:dataLMI}
    \begin{align} 
    \begin{bmatrix}
    -{W} &* &* &* \\
    0  &-\gamma ^2  I_q &* &* \\
    \hat{A}_\tau{W} - \hat{B}_\tau{V} &\hat{D}_\tau &-{W} &* \\
    C{W} - E{V} &0 &0 &-I_p
    \end{bmatrix} &\prec -\epsilon I, \label{eq:hinfLMISysID1}\\
    -\begin{bmatrix}
         I_n & * & * \\
        \mu W  &I_n  &* \\
        \mu V &0  &I_m
    \end{bmatrix} &\prec 0, \label{eq:hinfLMISysID2}
    \end{align}
    \end{subequations}
     have a solution and $K = {V}{W}^{-1} $ that belongs to $\mathcal{W}$ almost surely.
\end{theorem}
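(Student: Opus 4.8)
The plan is to proceed in three stages: first establish feasibility of the LMIs \eqref{eq:dataLMI} for all sufficiently large $\tau$; then derive a~priori bounds on \emph{any} solution $(W,V)$ that are uniform in $\tau$; and finally transfer strict feasibility from the identified model $(\hat{A}_\tau,\hat{B}_\tau,\hat{D}_\tau)$ to the true model $(A,B,D)$, so as to conclude $K=VW^{-1}\in\mathcal{W}$.

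For feasibility, I would start from the fact, established following \eqref{eq:feasibleSet}, that $K^*$ of \eqref{eq:Kopt} satisfies $K^*\in\mathcal{W}$, i.e. $\rho(A-BK^*)<1$ and $\norm{\mathcal{T}(K^*)}_{\mathcal{H}_\infty}<\gamma$. By the strict discrete-time bounded real lemma there is $P_0\succ0$ satisfying the corresponding strict matrix inequality for the closed loop $(A-BK^*,\,D,\,C-EK^*)$; applying Schur complements and a congruence transformation shows that $(W_0,V_0):=(P_0^{-1},K^*P_0^{-1})$ satisfies \eqref{eq:hinfLMISysID1} with the \emph{true} matrices and some strict margin $-2\epsilon_0 I$ (this is just the standard static state-feedback $\mathcal{H}_\infty$ synthesis LMI read in the $W$-variable). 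I then fix $\epsilon\le\epsilon_0$ and choose $\mu>0$ small enough that $\mu^2(W_0^2+V_0^TV_0)\prec I_n$, which, via a Schur complement, is exactly \eqref{eq:hinfLMISysID2} evaluated at $(W_0,V_0)$. Since \eqref{eq:systemIDABD} and the Birkhoff ergodic theorem give $\hat{A}_\tau\to A$, $\hat{B}_\tau\to B$, $\hat{D}_\tau\to D$ almost surely under Assumptions \ref{ass:ergodic} and \ref{ass:invertible}, and the matrix on the left of \eqref{eq:hinfLMISysID1} is affine, hence continuous, in these matrices, there is $\tau_1(\epsilon,\mu)$ such that for $\tau>\tau_1$ the pair $(W_0,V_0)$ still satisfies \eqref{eq:hinfLMISysID1} with the identified matrices and margin $-\epsilon I$; \eqref{eq:hinfLMISysID2} does not involve the identified matrices and is therefore unaffected. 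Hence \eqref{eq:dataLMI} is feasible almost surely for $\tau>\tau_1$.

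Next I would extract $\tau$-uniform bounds valid for \emph{every} solution $(W,V)$. The $(1,1)$ block of \eqref{eq:hinfLMISysID1} forces $W\succ\epsilon I$, so $\norm{W^{-1}}\le1/\epsilon$; the Schur complement of \eqref{eq:hinfLMISysID2} gives $\mu^2(W^2+V^TV)\prec I_n$, so $\norm{W}<1/\mu$ and $\norm{V}<1/\mu$, whence $\norm{K}=\norm{VW^{-1}}<1/(\mu\epsilon)$ --- all independent of $\tau$ and of the particular solution. Because the map from $(\hat{A}_\tau,\hat{B}_\tau,\hat{D}_\tau)$ to the left-hand side of \eqref{eq:hinfLMISysID1} is affine with coefficient blocks of norm at most $\max\{\norm{W},\norm{V}\}<1/\mu$, there is $\delta(\epsilon,\mu)>0$ and $\tau_2(\epsilon,\mu)$ such that whenever $\tau>\tau_2$ we have $\max\{\norm{\hat{A}_\tau-A},\norm{\hat{B}_\tau-B},\norm{\hat{D}_\tau-D}\}<\delta$ almost surely and replacing the identified matrices by the true ones perturbs that left-hand side by less than $\epsilon I$ in norm. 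Consequently, for $\tau>\tau^*:=\max\{\tau_1,\tau_2\}$, any solution $(W,V)$ of \eqref{eq:dataLMI} also satisfies \eqref{eq:hinfLMISysID1} with the true $(A,B,D)$, strictly (now possibly without margin). Undoing the Schur complements and the congruence by $W^{-1}$, with $P:=W^{-1}\succ0$, yields the strict bounded real lemma inequality for $(A-BK,\,D,\,C-EK)$, which is equivalent to $\rho(A-BK)<1$ and $\norm{\mathcal{T}(K)}_{\mathcal{H}_\infty}<\gamma$; that is, $K=VW^{-1}\in\mathcal{W}$ by \eqref{eq:feasibleSet}.

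The main obstacle is this last transfer step: the solution $(W,V)$ produced by the LMI solver depends on $\tau$, so one cannot simply invoke continuity of the spectral radius and of the $\mathcal{H}_\infty$ norm at a fixed controller. The purpose of the auxiliary LMI \eqref{eq:hinfLMISysID2}, together with the explicit margin $-\epsilon I$ in \eqref{eq:hinfLMISysID1} that forces $W\succ\epsilon I$, is precisely to supply the $\tau$-uniform bounds $\norm{W},\norm{V}<1/\mu$ and $\norm{W^{-1}}\le1/\epsilon$ needed to make the identified-to-true perturbation estimate uniform over all solutions; without them the perturbation bound could deteriorate as $\tau$ grows. A minor technical point to dispatch carefully is that the convergence of the identified matrices is only almost sure, so $\tau_2$ is an almost-surely finite random variable while $\epsilon$ and $\mu$ remain deterministic, which is exactly consistent with the ``almost surely'' qualifier in the statement.
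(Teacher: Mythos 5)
Your proposal is correct and takes essentially the same route as the paper's proof: feasibility of \eqref{eq:dataLMI} by continuity from a feasible point of the true-model LMI, the uniform bounds $\norm{W}<1/\mu$, $\norm{V}<1/\mu$ extracted from \eqref{eq:hinfLMISysID2}, and absorption of the identification error $(\tilde{A}_\tau,\tilde{B}_\tau,\tilde{D}_\tau)$ into the $\epsilon$-margin so that any solution also satisfies the true-model LMI, after which Lemma \ref{lm:realbounded} gives $K=VW^{-1}\in\mathcal{W}$. The only cosmetic differences are that you dominate the perturbation by a direct spectral-norm estimate while the paper shows the perturbation-plus-$\epsilon$-diagonal block matrix is positive semidefinite via the Schur-complement computation \eqref{eq:WVDWV}--\eqref{eq:hinfLMISysID5}, and that your feasibility step is spelled out more explicitly than the paper's one-line continuity argument.
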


\begin{algorithm}[t] 
	\caption{Learning-based Policy Optimization}\label{alg:IteAlg_learning}
 \begin{algorithmic}[1]
    \State  Initialize $\hat{K}_1 \in \mathcal{W}$ \; 
    \State Set $\bar{i},\bar{j} \in \mathbb{Z}_+$. \;    
	\State Set the length of the sampled trajectory $\tau$, and the exploration variances $\sigma_1^2$ and $\sigma_2^2$ \;
	\State Collect data from \eqref{eq:zeroSumnoise} with exploratory input \eqref{eq:exploreinput} \;
	\State Construct $\hat{\Phi}_\tau$,  $\hat{\Xi}_\tau$, and 
    $\hat{\Psi}_\tau$ defined in \eqref{eq:dataMatrix} \;
        \State Get the expressions of $\hat{\Gamma}(X)$ and $\hat{\Omega}(X)$ by \eqref{eq:leastsquareapproximateGm} and \eqref{eq:DYDdata}. \;
	\For{$i \leq \bar{i}$}  
		\State Set $\hat{L}_{i,1} = 0$\;
		\For{$j \leq \bar{j}$}
		    \State Get $\hat{P}_{i,j}$ by solving \eqref{eq:PijDataVec}\; 
			\State Update $\hat{L}_{i,j+1}$ by \eqref{eq:Lupdate} \;
            \State $j \leftarrow j+1$ \;
		\EndFor \;
		\State Update $\hat{K}_{i+1}$ by \eqref{eq:Kupdate} \label{alg:line::Kupdate} \;
        \State $i \leftarrow i+1$ \;
	\EndFor \;  
\end{algorithmic}
\end{algorithm}
\section{Numerical Simulations}
\subsection{An Illustrative Example}
We apply Algorithms \ref{alg:IteAlg_model} and \ref{alg:IteAlg_learning} to the system studied in \cite{Zhang2021_NIPS}. The system matrices are 
\begin{footnotesize}
\begin{align}
    A = \begin{bmatrix} 1 & 0 &-5\\ -1 &1 &0\\ 0 & 0 & 1 \end{bmatrix}, B = \begin{bmatrix} 1 &-10 & 0\\ 0 &3 &1\\ -1& 0 &2 \end{bmatrix}, 
    D = \begin{bmatrix} 0.5 & 0 & 0\\0 & 0.2 & 0\\ 0 & 0 & 0.2\end{bmatrix}. \nonumber
\end{align}
\end{footnotesize}%
The matrices related to the output are $C=[I_3,0_{3\times 3}]^T$ and $E=[0_{3\times 3}, I_3]^T$. The $\mathcal{H}_\infty$ norm threshold is $\gamma = 5$. $\bar{i} = 10$ and $\bar{j}=20$. 

The robustness of Algorithm \ref{alg:IteAlg_model} in the presence of disturbance is validated first. For each outer and inner loop iteration, the entries of the disturbances $\Delta K_i$ and $\Delta L_{i,j}$ are taken as samples from a standard Gaussian distribution and then their Frobenius norms are normalized to $0.09$. The algorithm is run independently for $50$ times. The relative errors of the gain matrix $\hat{K}_i$ and cost matrix $\hat{P}_{i}$, and the $\mathcal{H}_\infty$ norm of the closed-loop system with $\hat{K}_i$ are shown in Fig. \ref{fig:Norm_PI}, where the bold line is the mean of the trials and the shaded region denotes the variance. It is seen that with the disturbances at both the outer and inner loop iterations, the generated controller and the corresponding cost matrix approach the optimal solution and finally enter a neighborhood of the optimal controller $K^*$ and cost matrix $P^*$, respectively. The $\mathcal{H}_\infty$ norm of the closed-loop system is smaller than the threshold throughout the policy optimization. These numerical results are consistent with the developed theoretical results in Theorems \ref{thm:outerISS} and \ref{thm:innerISS}. 

Algorithm \ref{alg:IteAlg_learning} is implemented independently for $50$ trials to validate its performance. The length of the trajectory samples is $\tau = 5000$. The standard deviation of the system additive noise is $\Sigma = I_3$. The standard deviation of the exploratory noise is $\sigma_1 = \sigma_2 = 1$. The relative errors of the gain matrix and cost matrix are shown in Fig. \ref{fig:Norm_learn}. It is seen that the proposed off-policy RL algorithm can still approximate the optimal solution when the system is disturbed by additive Gaussian noise.


\begin{figure}[tb]
    \begin{subfigure}{.32\linewidth}
    \centering
      	\includegraphics [width=0.98\linewidth]{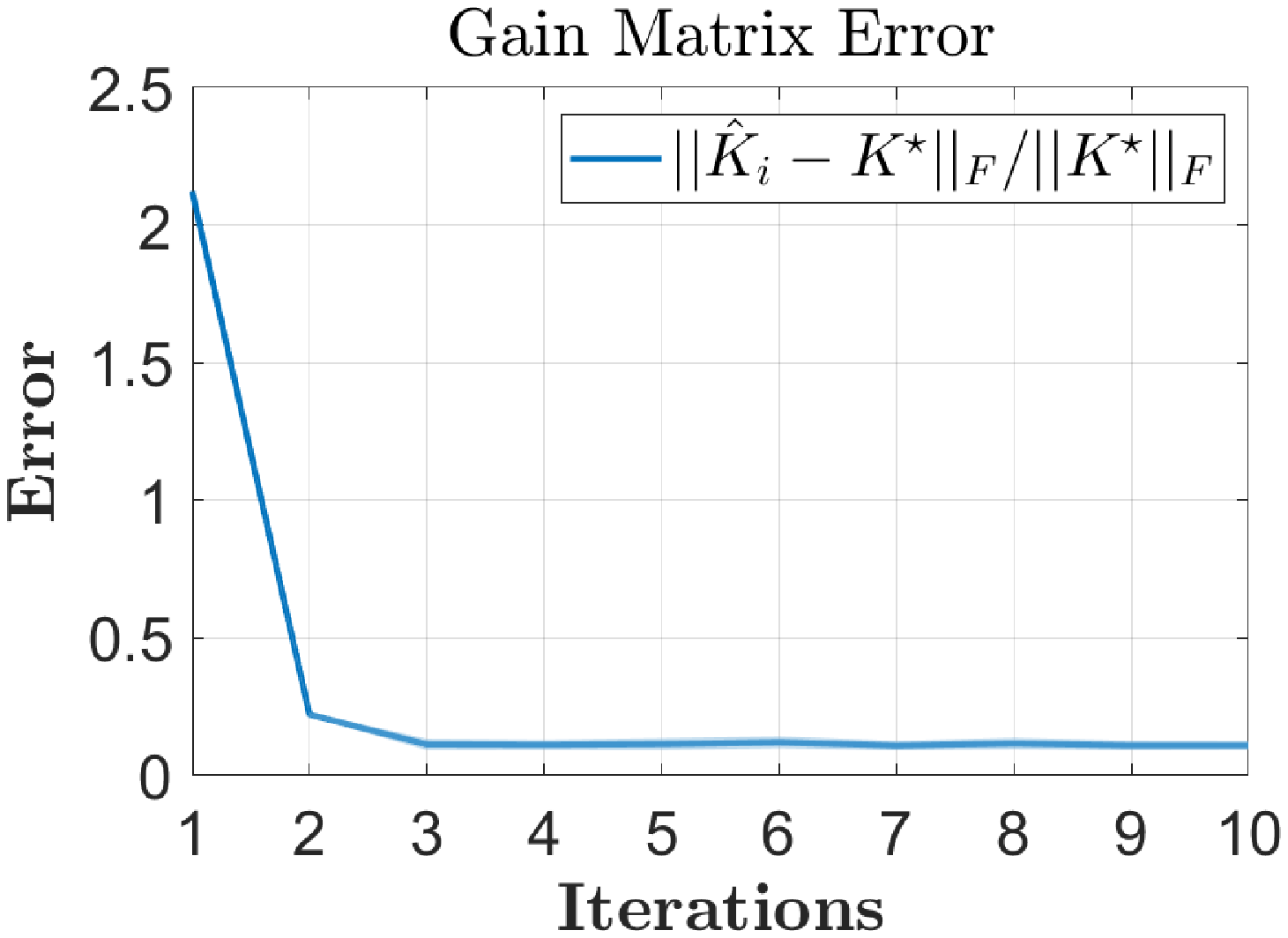}
    	\label{fig: K evolution}  
    \end{subfigure}
    \begin{subfigure}{.32\linewidth}
    	\centering
    	\includegraphics [width=0.98\linewidth]{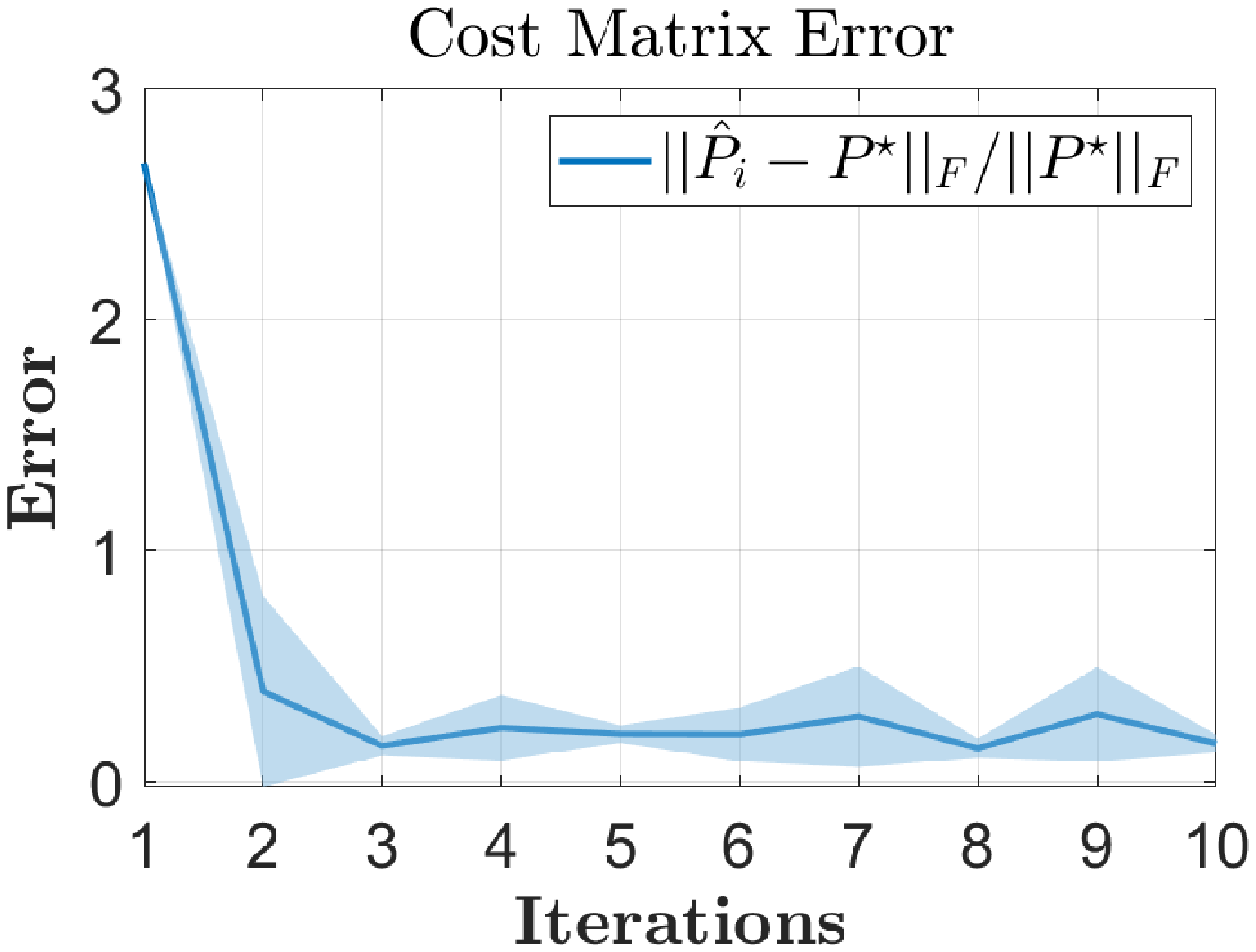}
    	\label{fig: P evolution}
    \end{subfigure}	
    \begin{subfigure}{.32\linewidth}
    	\centering
    	\includegraphics [width=0.98\linewidth]{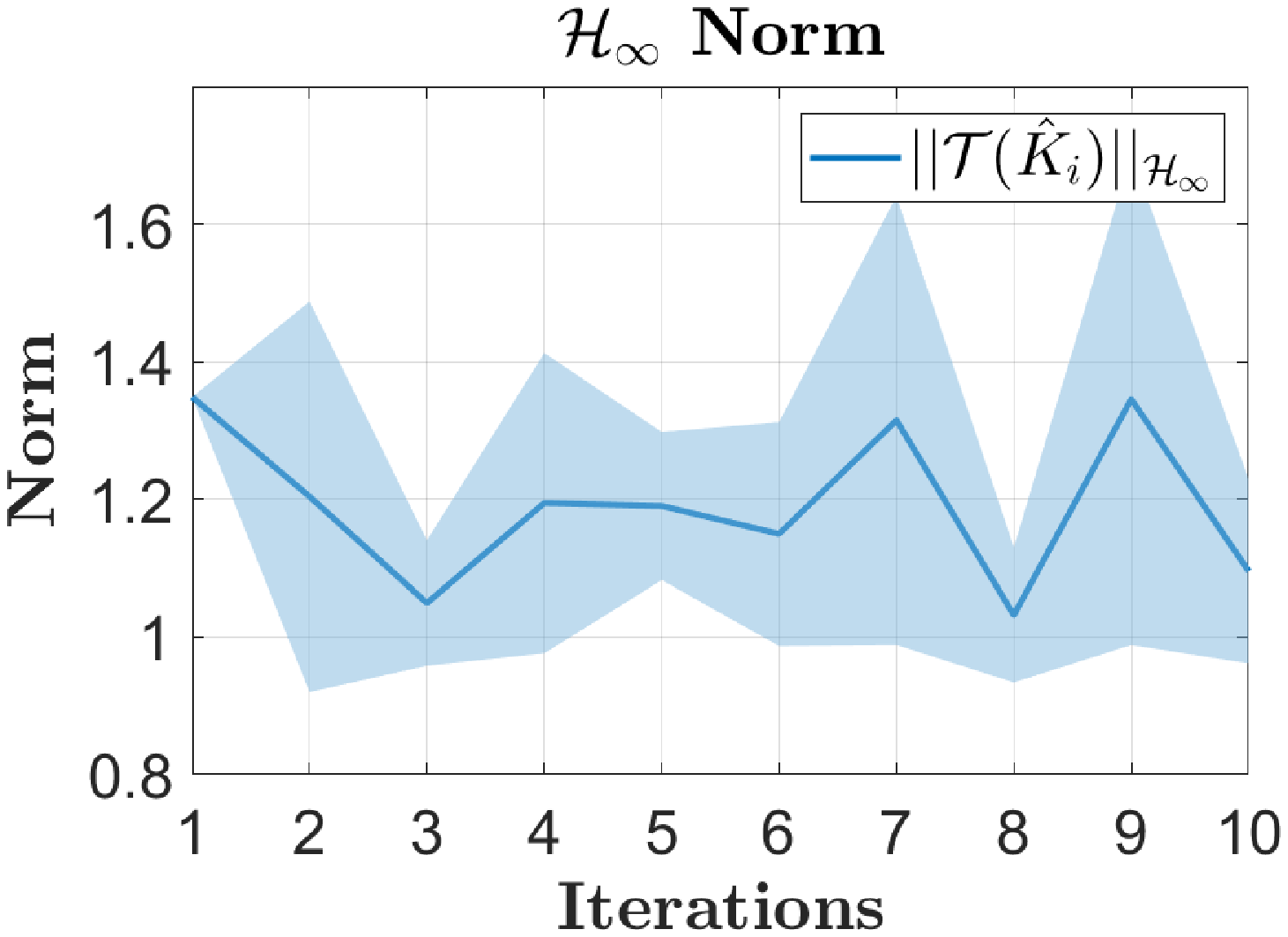}
    	\label{fig: H evolution}
    \end{subfigure}	
	\caption{Robustness of Algorithm \ref{alg:IteAlg_model} when $\norm{\Delta K}_\infty = 0.09$ and $\norm{\Delta L_i}_\infty = 0.09$.}
	\label{fig:Norm_PI}
\end{figure}

\begin{figure}[t]
	\centering
    \begin{subfigure}{.32\linewidth}
      	\includegraphics [width=0.98\linewidth]{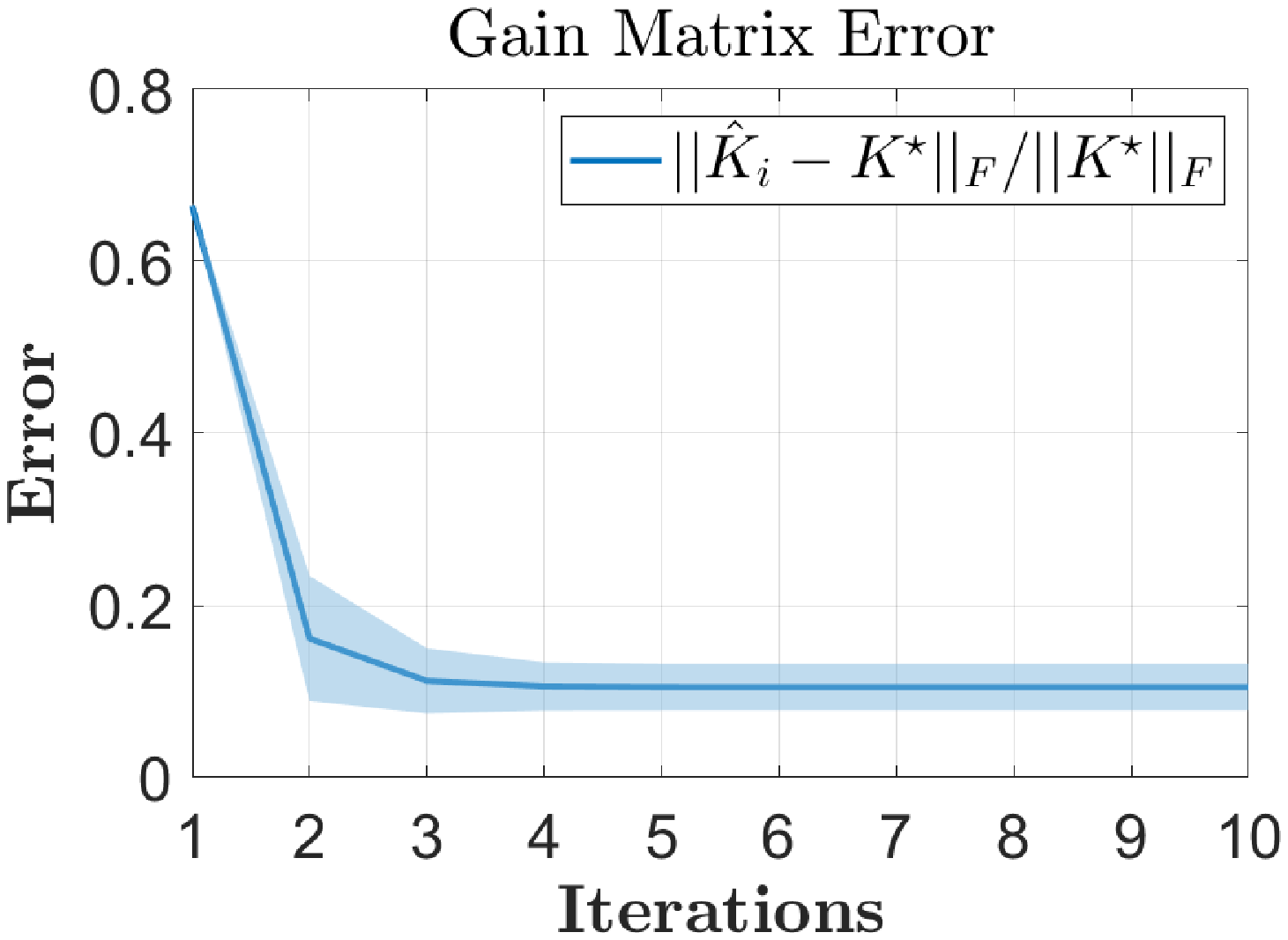}
    \end{subfigure}
    \begin{subfigure}{.32\linewidth}
    	\centering
    	\includegraphics [width=0.98\linewidth]{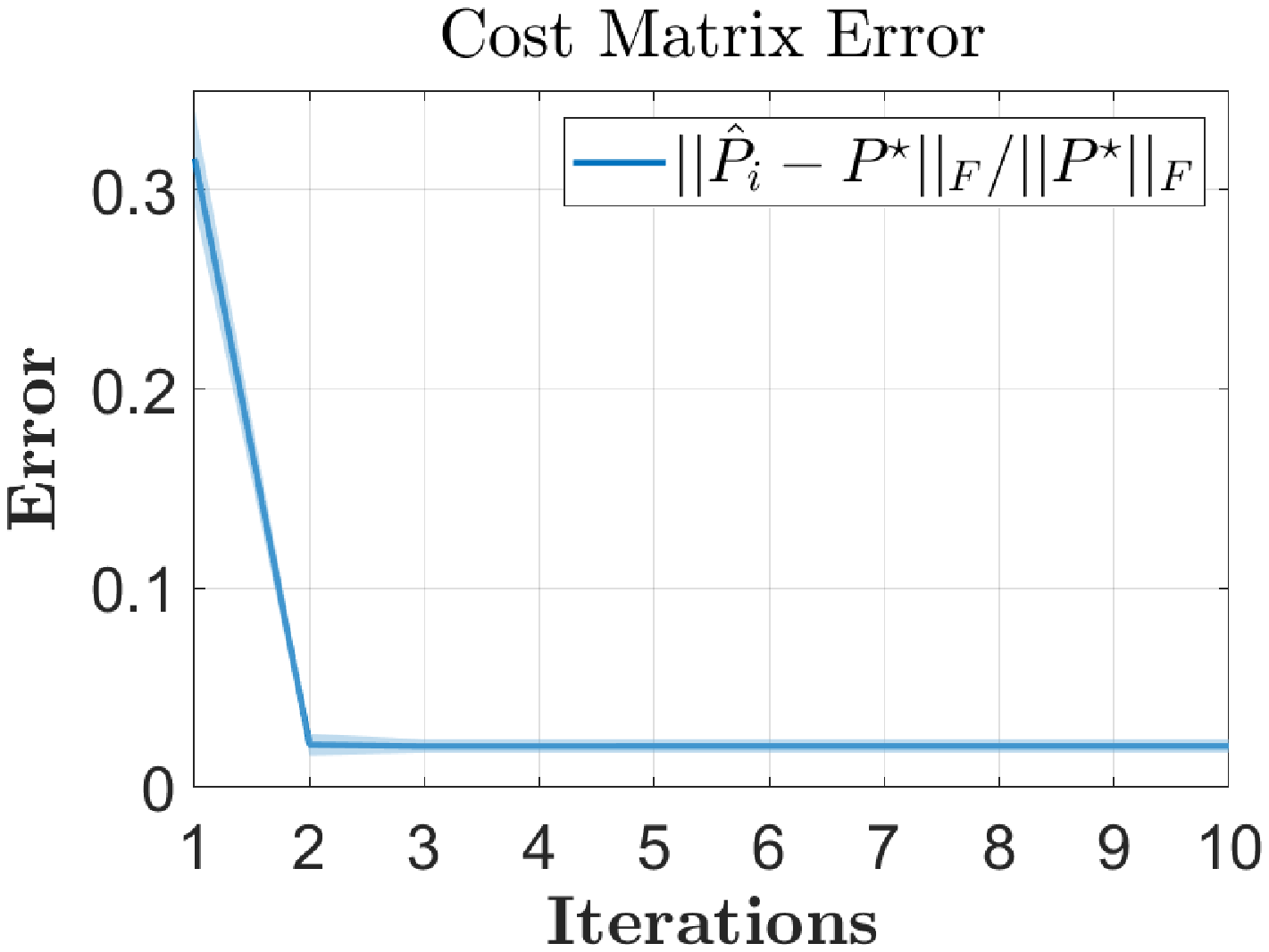}
    \end{subfigure}	
    \begin{subfigure}{.32\linewidth}
    	\centering
    	\includegraphics [width=0.98\linewidth]{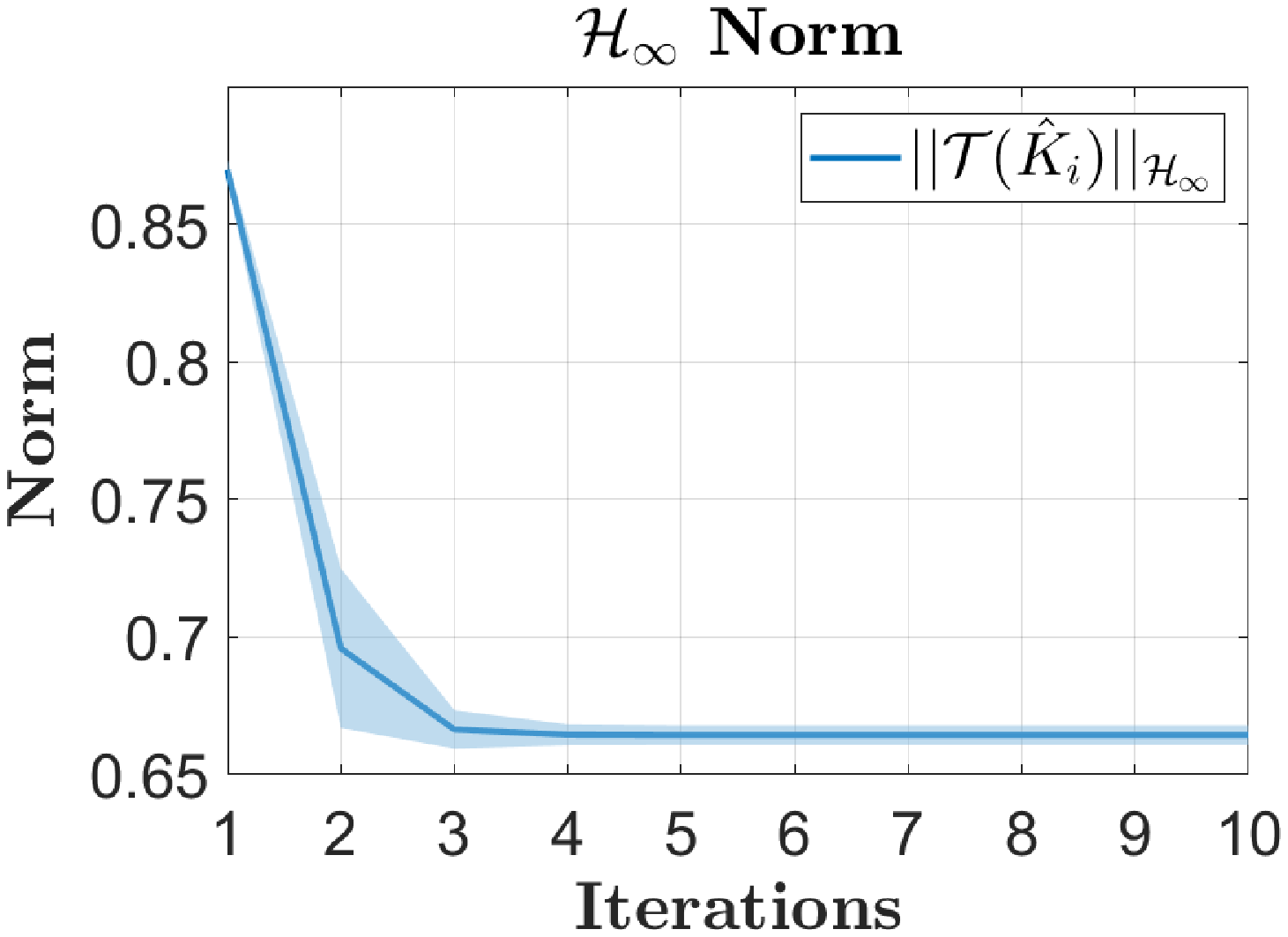}
    \end{subfigure}	
	\caption{Using Algorithm \ref{alg:IteAlg_learning}, the solutions of each iteration approach the optimal solution, and the $\mathcal{H}_\infty$ norm is smaller than the threshold.}
	\label{fig:Norm_learn}
\end{figure}

\subsection{Cart-Pole Example}
We next consider the cart-pole system in \cite{Anderson1989}, where the inverted pendulum is hinged to the top of a wheeled cart that moves along a straight line. The rotational angle is denoted as $\phi$ and the horizontal position is denoted by $s$. The mass of the cart is $m_c = 1kg$; the mass of the pendulum is $m_p = 0.1kg$; the distance from the pendulum's center of mass to the pivot is $l=0.5m$; the gravitational acceleration is $g=9.8m/sec^2$. The system is discretized under the sampling period $\Delta t=0.01 sec$. Considering the noise for the system, the system can be described by a linear system with $x = [s,\dot{s},\phi, \dot{\phi}]^T$, 
\begin{align}
    A = \begin{bmatrix} 1 & 0.01 &0 &0\\ 0 &1 &-0.01 &0\\ 0 & 0 & 1 &0.01 \\
    0 &0 &0.16 &1 \end{bmatrix}, B = \begin{bmatrix} 0 \\ 0.01 \\0 \\-0.015 \end{bmatrix}, \nonumber
\end{align}
$C=[I_4,0_{1 \times 4}]^T$, $D=0.001 I_4$, $E = [0_{1 \times 4}, 1]$. The $\mathcal{H}_\infty$ norm threshold is $\gamma = 10$.

The robustness evaluation of Algorithm \ref{alg:IteAlg_model} is shown in Fig. \ref{fig:CartPole_Norm_PI}, where the mean-variance curves are plotted for $50$ independent trials. At each iteration, the entries of the disturbances $\Delta K_i$ and $\Delta L_{i,j}$ are randomly sampled from a standard Gaussian distribution, and then $\norm{\Delta K_i}_F$ and $\norm{\Delta L_{i,j}}_F$ are normalized to $0.7$ and $0.1$, respectively. The relative errors approach zero even in the presence of the disturbance, demonstrating the small-disturbance ISS properties of the outer and inner loops in Theorems \ref{thm:outerISS} and \ref{thm:innerISS}. In addition, the $\mathcal{H}_\infty$ norm of the system is below the given threshold, and the robustness of the closed-loop system is guaranteed during the iteration.

When the matrices $(A,B,D)$ are unknown, Algorithm \ref{alg:IteAlg_learning} is implemented independently for $50$ times. The length of the sampled trajectory is $\tau = 10000$, the standard deviation of the system additive noise is $\Sigma = 0.1I_4$. The standard deviation of the exploratory noise is $\sigma_1 = \sigma_2 = 20$. It is seen from Fig. \ref{fig:CartPole_Norm_learn} that using the noisy data, the learning-based PO developed in Algorithm \ref{alg:IteAlg_learning} still finds a near-optimal solution.

\begin{figure}[tb]
	\centering
    \begin{subfigure}{.32\linewidth}
      	\includegraphics [width=0.99\linewidth]{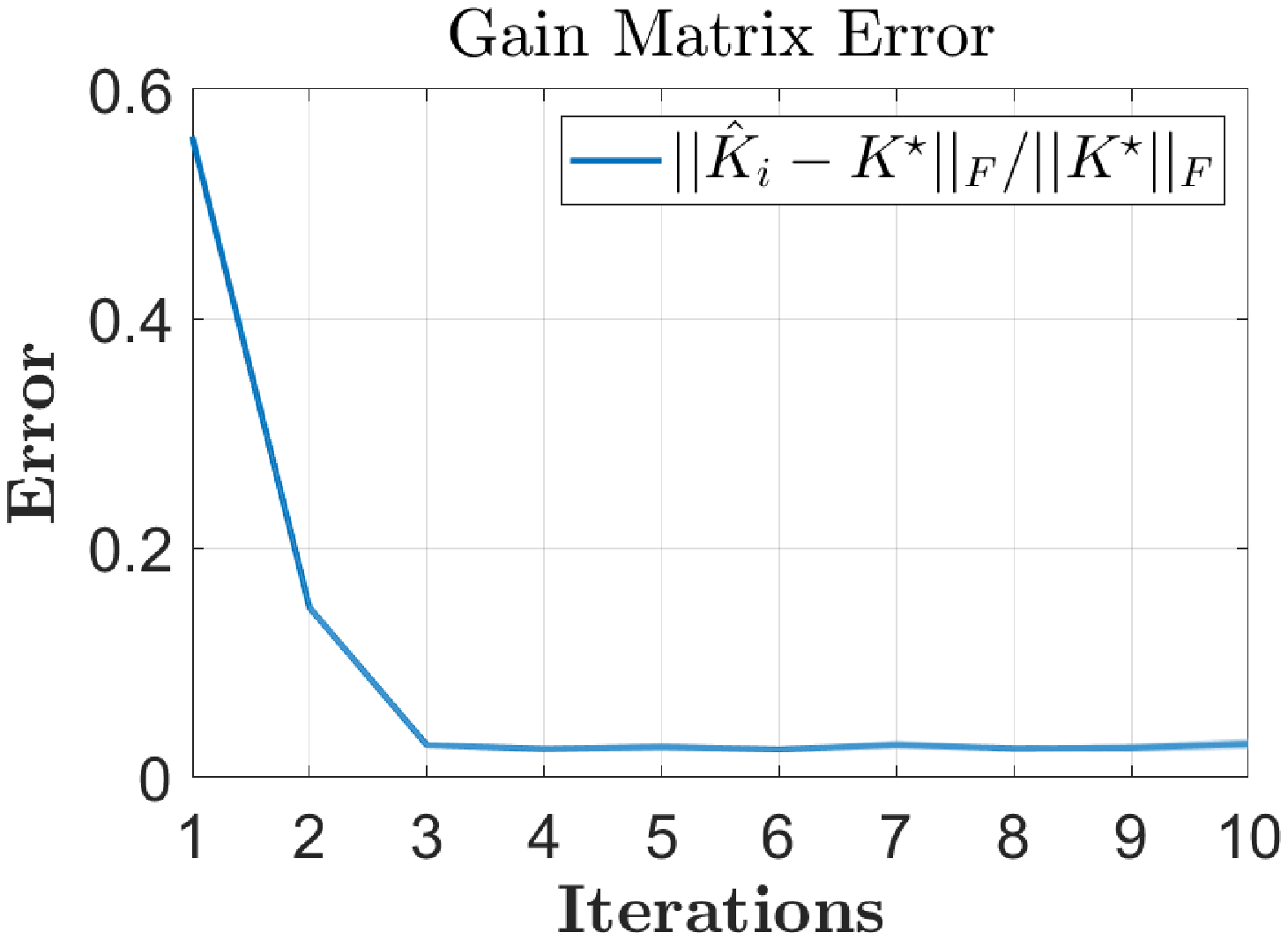}
    \end{subfigure}
    \begin{subfigure}{.32\linewidth}
    	\centering
    	\includegraphics [width=0.99\linewidth]{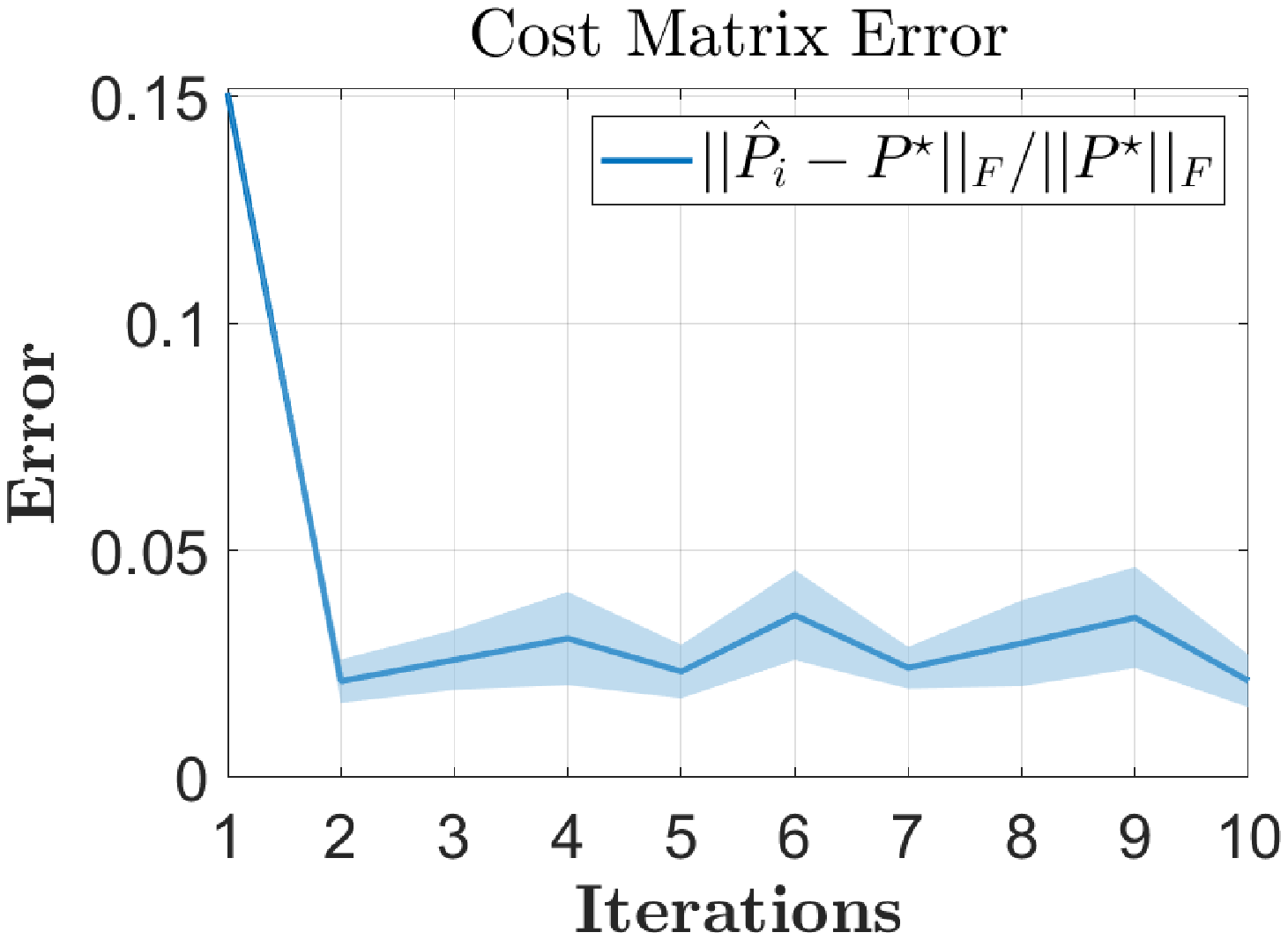}
    \end{subfigure}	
    \begin{subfigure}{.32\linewidth}
    	\centering
    	\includegraphics [width=0.99\linewidth]{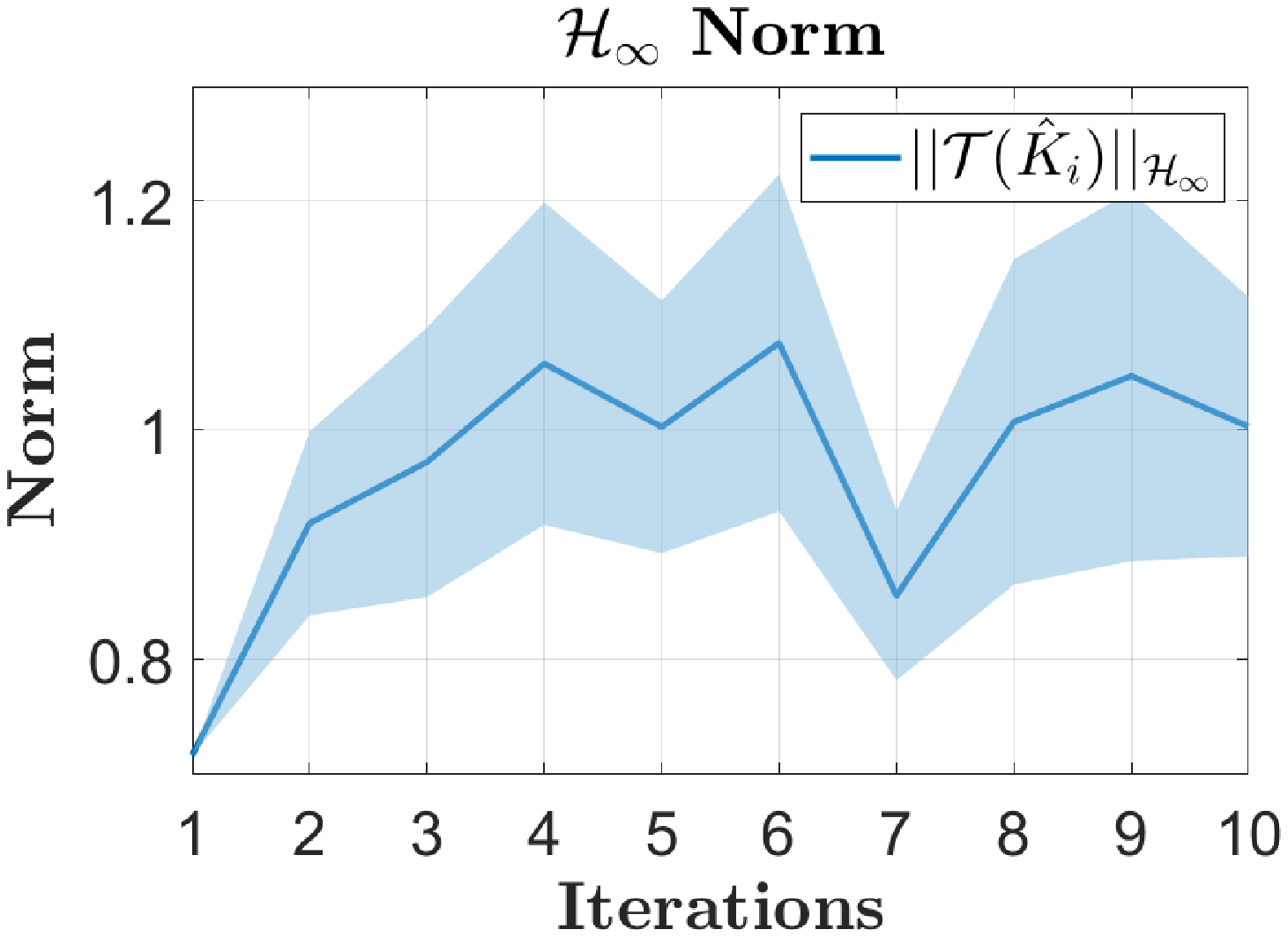}
    \end{subfigure}	
	\caption{For the cart-pole system, the robustness of Algorithm \ref{alg:IteAlg_model} when $\norm{\Delta K}_\infty = 0.7$ and $\norm{\Delta L_i}_\infty = 0.1$.}
	\label{fig:CartPole_Norm_PI}
\end{figure}

\begin{figure}[t]
	\centering
    \begin{subfigure}{.32\linewidth}
      	\includegraphics [width=0.98\linewidth]{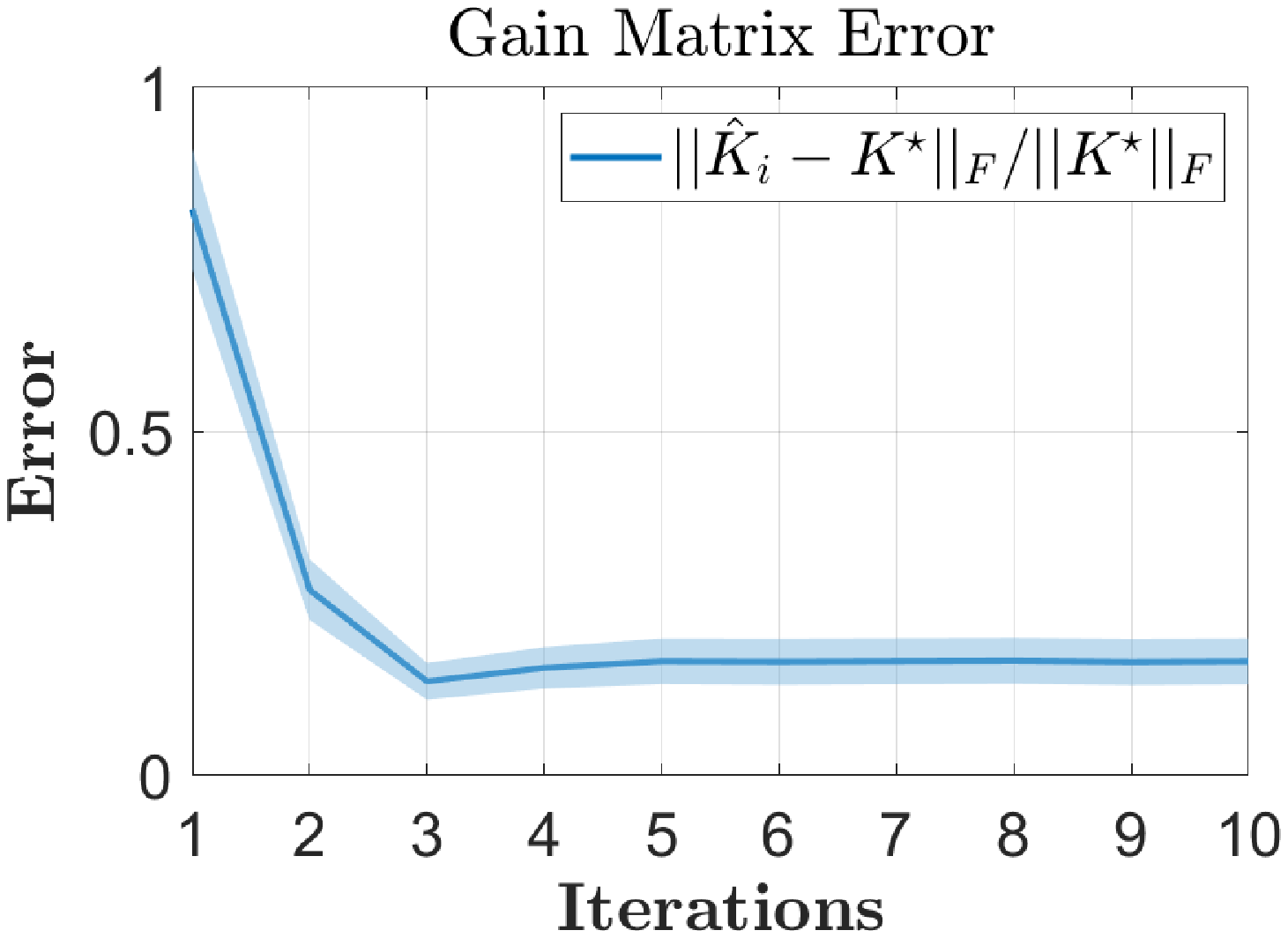}
    \end{subfigure}
    \begin{subfigure}{.32\linewidth}
    	\centering
    	\includegraphics [width=0.98\linewidth]{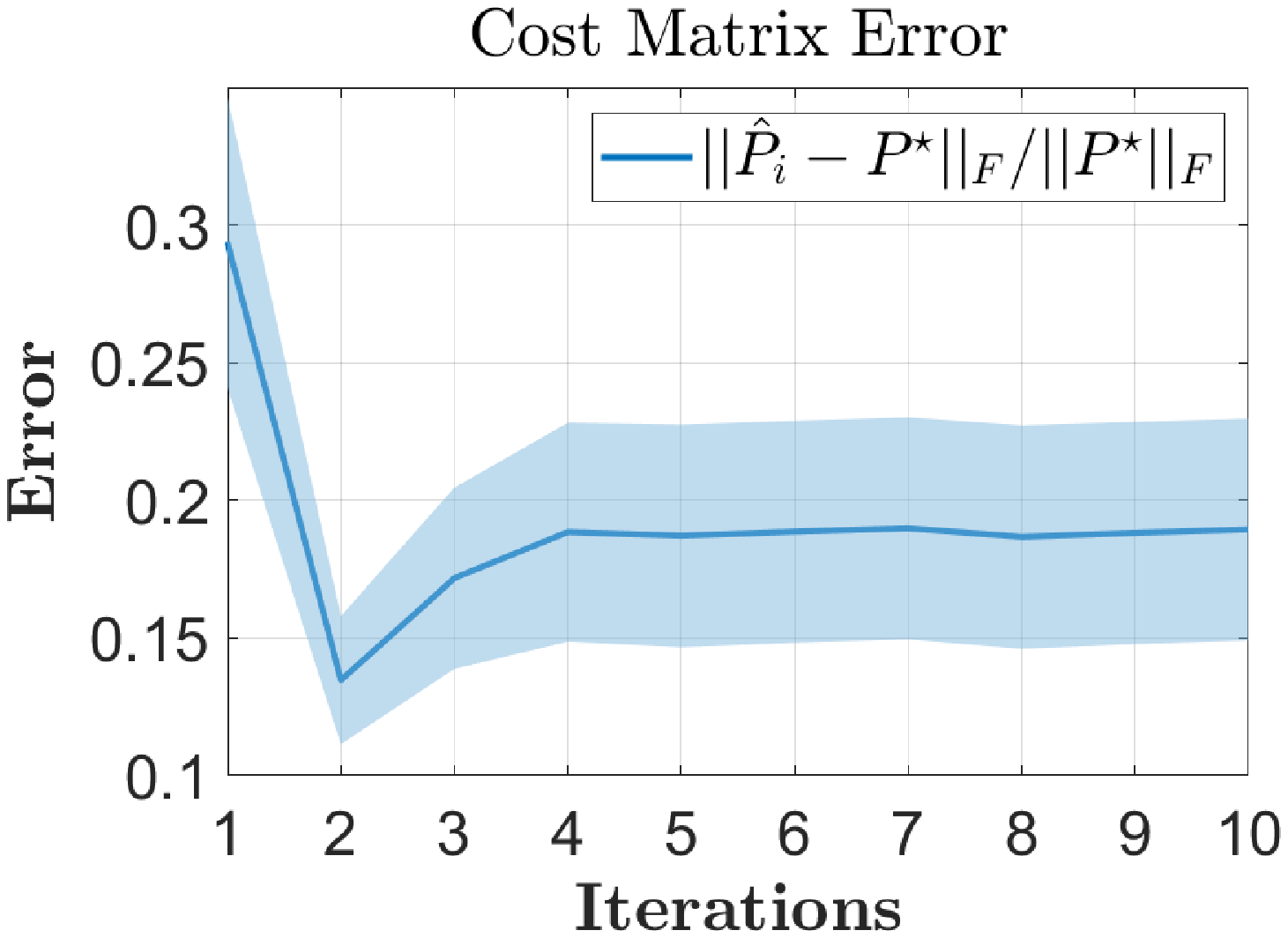}
    \end{subfigure}	
    \begin{subfigure}{.32\linewidth}
    	\centering
    	\includegraphics [width=0.98\linewidth]{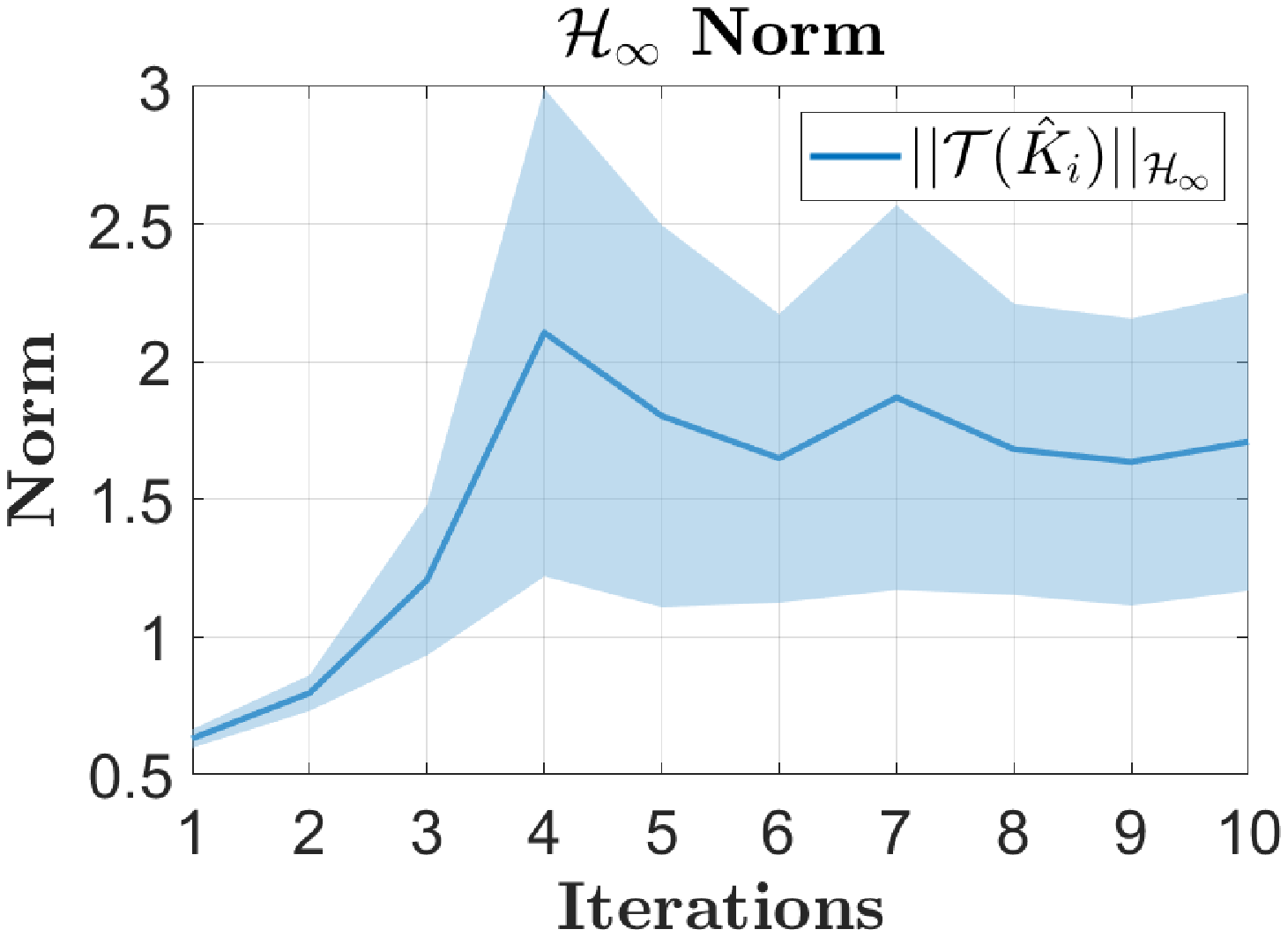}
    \end{subfigure}	
	\caption{For the cart-pole system, as the iteration of Algorithm \ref{alg:IteAlg_learning} proceeds, the gain and cost matrices approach the optimal solution, and the $\mathcal{H}_\infty$ norm is smaller than the threshold.}
	\label{fig:CartPole_Norm_learn}
\end{figure}

\section{Conclusion}

In this paper, we have proposed a novel dual-loop policy optimization algorithm for data-driven risk-sensitive linear quadratic Gaussian control whose convergence and robustness properties have been analyzed. We have shown that the iterative algorithm possesses the property of small-disturbance input-to-state stability, that is, starting from any initial admissible controller, the solutions of the proposed policy optimization algorithm ultimately enter a neighbourhood of the optimal solution, given that the disturbance is relatively small. Based on these model-based theoretical results, when the accurate system knowledge is unavailable, we have also proposed a novel off-policy policy optimization RL algorithm to learn from data robust optimal controllers. Numerical examples are provided and the efficacy of the proposed methods are demonstrated by an academic example and a benchmark cart-pole system. Future work will be directed toward investigating the input-to-state stability of standard gradient descent and natural gradient descent algorithms. In addition, by invoking small-gain theory, the application of learning-based risk-sensitive LQG control to decentralized control design of complex systems will be studied.

\newcounter{appidx}
\setcounter{appidx}{1}
\renewcommand\theequation{\Alph{appidx}.\arabic{equation}}

\setcounter{equation}{0}

\section*{Appendix \Alph{appidx}: Useful Auxiliary Results}
Some fundamental lemmas are introduced to assist in the development of the results in the main body of the paper.
\begin{lemma}\cite[Theorem 5.D6]{book_chen}\label{lm:lyapunov}
Consider a stable matrix $A \in \mathbb{R}^{n \times n}$ satisfying 
\begin{align}\label{eq:lya}
    A^TPA - P + Q = 0.
\end{align}
The following statements hold:
\begin{enumerate}
    \item $P = \sum_{t=0}^\infty (A^T)^t Q A^t$ is the unique solution to \eqref{eq:lya} and $P = P^T \succeq 0$ if $Q \succeq 0$;
    \item For any other $P' = P'^T \succeq 0$ satisfying $ A^TP'A - P' + Q' = 0$ with $Q' \succeq Q$, $P' \succeq P$.
\end{enumerate}
\end{lemma}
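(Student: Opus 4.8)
The plan is to establish all three claims directly from the explicit series representation, exploiting only the stability of $A$. First I would verify that the series $P = \sum_{t=0}^{\infty}(A^T)^t Q A^t$ converges. Since $A$ is stable, $\rho(A) < 1$, so there exist constants $c \geq 1$ and $\lambda \in (0,1)$ with $\norm{A^t} \leq c\lambda^t$ for all $t$; hence $\norm{(A^T)^t Q A^t} \leq c^2 \lambda^{2t}\norm{Q}$ and the partial sums form a Cauchy sequence, so $P$ is well defined. Symmetry of each summand follows from $Q = Q^T$, and if $Q \succeq 0$ then $x^T (A^T)^t Q A^t x = (A^t x)^T Q (A^t x) \geq 0$ for every $x$, so each term---and therefore $P$---is positive semidefinite.

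Next I would confirm that this $P$ solves \eqref{eq:lya}. Substituting and shifting the summation index,
\begin{align}
A^T P A = \sum_{t=0}^{\infty}(A^T)^{t+1} Q A^{t+1} = \sum_{s=1}^{\infty}(A^T)^{s} Q A^{s} = P - Q,
\end{align}
so that $A^T P A - P + Q = 0$, as required. For uniqueness, I would suppose $P_1, P_2$ both satisfy \eqref{eq:lya} and set $\tilde{P} = P_1 - P_2$. Subtracting the two equations gives $A^T \tilde{P} A = \tilde{P}$, and iterating this identity yields $\tilde{P} = (A^T)^t \tilde{P} A^t$ for every $t \in \mathbb{Z}_+$. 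Since $\norm{A^t} \to 0$ as $t \to \infty$, letting $t \to \infty$ forces $\tilde{P} = 0$, establishing uniqueness and completing statement 1.

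For statement 2, I would invoke the uniqueness just proved: because $P'$ satisfies the Lyapunov equation with data $(A, Q')$, it must equal $\sum_{t=0}^{\infty}(A^T)^t Q' A^t$. Subtracting the series for $P$ then gives
\begin{align}
P' - P = \sum_{t=0}^{\infty}(A^T)^t (Q' - Q) A^t,
\end{align}
and since $Q' - Q \succeq 0$, each summand is positive semidefinite by the same quadratic-form argument as above, whence $P' - P \succeq 0$, i.e. $P' \succeq P$. I expect no genuine obstacle here, as this is a classical result; the only points demanding care are the geometric bound from stability that legitimizes the infinite series together with its termwise manipulation, and the index shift used to verify the equation.
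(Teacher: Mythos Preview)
Your proof is correct and follows the standard textbook argument. Note, however, that the paper does not supply its own proof of this lemma: it simply cites \cite[Theorem 5.D6]{book_chen} and states the result without argument. So there is nothing to compare against beyond observing that your approach---explicit series solution, index-shift verification, uniqueness via iterating $A^T\tilde{P}A = \tilde{P}$, and termwise comparison for monotonicity---is exactly the classical line one finds in references such as Chen's book.
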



\begin{lemma}[Bounded Real Lemma]\label{lm:realbounded}
Given a matrix $K \in 
\mathbb{R}^{m \times n}$ and the transfer function $\mathcal{T}(K)$ for the linear system \eqref{eq:LTI} under Assumptions 1 and 2, the following statements are equivalent:\\
    1) $K$ is stabilizing and $\norm{\mathcal{T}(K)}_{\mathcal{H}_\infty} < \gamma$;\\
    2) The algebraic Riccati equation \eqref{eq:AREforKPrelimi}
    admits a unique stabilizing solution $P_K=P_K^T \succ 0$ such that i) $I_q - \gamma^{-2}D^TP_KD \succ 0$; ii) $[A-BK + D(\gamma^2I_q - D^TP_KD)^{-1}D^TP_K(A-BK)]$ is stable;\\
    3) There exists $P_K=P_K^T \succ 0$ such that 
    \begin{align}\label{eq:AREineq}
        &I_q - \gamma^{-2}D^TP_KD \succ 0, \\
        &(A-BK)^TU_K(A-BK) - P_K + Q + K^TRK \prec 0. \nonumber
    \end{align} \\
    4) There exist $W = W^T \succ 0$ and $V = KW$, such that
    \begin{align} \label{eq:lmiforhinfpaper}
    &\begin{bmatrix}
    -W &* &* &* \\
    0  &-\gamma^2 I_q &* &* \\
    AW - BV &D &-W &* \\
    CW - EV &0 &0 &-I_p
    \end{bmatrix} \prec 0.
    \end{align}
\end{lemma}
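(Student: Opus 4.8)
The plan is to first normalize notation using Assumption~2: with $\tilde A := A - BK$ and $\tilde C := C - EK$, the conditions $C^TE = 0$ and $E^TE = R$ give $\tilde C^T\tilde C = C^TC + K^TRK = Q + K^TRK$, so \eqref{eq:AREforKPrelimi}, \eqref{eq:AREineq}, and \eqref{eq:lmiforhinfpaper} all become statements about the single triple $(\tilde A,\tilde C,D)$, and $\mathcal{T}(K) = \tilde C(zI_n-\tilde A)^{-1}D$. I would then establish the cycle $1)\Rightarrow 2)\Rightarrow 3)\Rightarrow 1)$ together with the separate equivalence $3)\Leftrightarrow 4)$.

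For $3)\Rightarrow 1)$ I would use a dissipativity argument. Given $P_K\succ 0$ with $\gamma^2 I_q - D^TP_KD\succ 0$ and $\tilde A^TU_K\tilde A - P_K + \tilde C^T\tilde C\prec 0$, set $V(x)=x^TP_Kx$ and, along $x_{t+1}=\tilde A x_t + Dw_t$ with $y_t=\tilde C x_t$, evaluate $V(x_{t+1}) - V(x_t) + |y_t|^2 - \gamma^2|w_t|^2$. Completing the square in $w_t$ (legitimate because $\gamma^2 I_q - D^TP_KD\succ 0$) and using $P_K + P_KD(\gamma^2 I_q - D^TP_KD)^{-1}D^TP_K = U_K$, this equals $x_t^T(\tilde A^TU_K\tilde A - P_K + \tilde C^T\tilde C)x_t$ minus a nonnegative quadratic in $w_t - w_t^\star$, where $w_t^\star = (\gamma^2 I_q - D^TP_KD)^{-1}D^TP_K\tilde A x_t$; hence it is $\le -\delta|x_t|^2$ for some $\delta>0$. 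Taking $w\equiv 0$ shows $V$ is a strict Lyapunov function, so $\rho(\tilde A)<1$; summing over $t\ge 0$ with $x_0=0$ gives $\norm{y}_2^2 + \delta\norm{x}_2^2 \le \gamma^2\norm{w}_2^2$, which, after appending $\sqrt{\delta}\,x$ to the output and evaluating at a maximizing frequency, upgrades to $\norm{\mathcal{T}(K)}_{\mathcal{H}_\infty} < \gamma$; in particular $K$ is stabilizing.

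For $3)\Leftrightarrow 4)$ I would argue purely by Schur complements and congruence. A Schur complement on the block $D^TP_KD - \gamma^2 I_q\prec 0$ shows the inequality in 3) is equivalent to $\mathrm{diag}(-P_K,-\gamma^2 I_q) + \begin{bmatrix}\tilde A^T & \tilde C^T\\ D^T & 0\end{bmatrix}\mathrm{diag}(P_K,I_p)\begin{bmatrix}\tilde A & D\\ \tilde C & 0\end{bmatrix}\prec 0$; a further Schur complement against $\mathrm{diag}(P_K,I_p)$ enlarges this to a $4\times4$ block LMI, and the congruence by $\mathrm{diag}(P_K^{-1},I_q,I_n,I_p)$, together with $W:=P_K^{-1}$, $V:=KW$, $\tilde A W = AW-BV$, $\tilde C W = CW-EV$, produces exactly \eqref{eq:lmiforhinfpaper}. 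Every step is reversible and $K = VW^{-1}$, so $3)\Leftrightarrow 4)$.

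The main obstacle is $1)\Rightarrow 2)$: extracting the \emph{stabilizing solution of the Riccati equation} from the $\mathcal{H}_\infty$ bound. I would either invoke the discrete-time bounded real lemma in Riccati form \cite{book_zhou,glover1988state,book_Basar}, or, for a self-contained proof, run the finite-horizon recursion $P^{(0)}=0$, $P^{(N+1)} = \tilde A^TU^{(N)}\tilde A + \tilde C^T\tilde C$ with $U^{(N)} = P^{(N)} + P^{(N)}D(\gamma^2 I_q - D^TP^{(N)}D)^{-1}D^TP^{(N)}$, and show by induction (using $\norm{\mathcal{T}(K)}_{\mathcal{H}_\infty}<\gamma$ to bound the finite-horizon zero-sum game value, hence $P^{(N)}$, uniformly from above, and $\gamma^2 I_q - D^TP^{(N)}D\succ 0$ at every step) that $\{P^{(N)}\}$ is monotonically nondecreasing and bounded, so $P^{(N)}\to P_K$ solving \eqref{eq:AREforKPrelimi}; then $P_K\succ 0$ from $Q\succ 0$, and the stabilizing property (ii) follows from convergence of the finite-horizon optimal feedbacks. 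Finally, $2)\Rightarrow 3)$ follows by perturbation: replacing $\tilde C^T\tilde C$ by $\tilde C^T\tilde C + \epsilon I_n$, the recursion still converges for small $\epsilon>0$ to some $P_K^\epsilon$, which then satisfies the strict inequality in 3). I expect the boundedness/monotonicity of the finite-horizon iterates (equivalently, invoking the classical Riccati-form bounded real lemma and its stabilizing-solution statement) to be the only technically delicate point; the remaining implications are routine matrix manipulations.
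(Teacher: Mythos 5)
Your proposal is sound, but it is worth contrasting with what the paper actually does: the paper's proof of this lemma is deliberately minimal --- the equivalence of statements 1)--3) is simply imported from \cite[Lemma 2.7]{zhangarxiv2019}, and the only thing proved in Appendix A is $3)\Leftrightarrow 4)$, by applying the Schur complement to \eqref{eq:lmiforhinfpaper}, pre- and post-multiplying by $P_K=W^{-1}$, and using $(C-EK)^T(C-EK)=Q+K^TRK$. Your treatment of $3)\Leftrightarrow 4)$ is the same congruence/Schur-complement manipulation run in the opposite direction, so on the portion the paper proves you coincide. Where you genuinely diverge is in attempting self-contained arguments for $1)$--$3)$: your dissipativity argument for $3)\Rightarrow 1)$ is correct (the upgrade from $\norm{y}_2^2+\delta\norm{x}_2^2\le\gamma^2\norm{w}_2^2$ to a strict $\mathcal{H}_\infty$ bound is cleanest by contradiction --- a maximizing input sequence would force $\norm{x}_2\to 0$ and hence $\norm{y}_2\to 0$ since $y=(C-EK)x$, rather than by ``evaluating at a maximizing frequency''), and your $1)\Rightarrow 2)$ via the monotone, uniformly bounded finite-horizon Riccati recursion is the standard route to the stabilizing solution, though it is exactly the technically delicate step (uniform boundedness of $P^{(N)}$ from the strict $\mathcal{H}_\infty$ bound, preservation of $\gamma^2I_q-D^TP^{(N)}D\succ 0$, and stability of the limiting closed loop, which uses $Q\succ 0$) that the paper sidesteps by citation, and you yourself fall back on \cite{book_zhou,book_Basar} as an alternative. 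In short: your route buys self-containment at the cost of carrying the classical bounded-real-lemma machinery explicitly, while the paper buys brevity by outsourcing 1)--3) and proving only the LMI characterization it needs for Theorem 7; neither choice reveals a gap in your argument, but if you keep the self-contained version you should flesh out the induction in $1)\Rightarrow 2)$ and the perturbation step $2)\Rightarrow 3)$ rather than leaving them as sketches.
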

\begin{proof}
    The equivalence of the first three statements is from \cite[Lemma 2.7]{zhangarxiv2019}. We next prove the statement 4). By the Schur complement lemma, \eqref{eq:lmiforhinfpaper} is equivalent to 
    \begin{subequations}\label{eq:schurComp}
    \begin{align} 
        &-W - \begin{bmatrix} 0 &WA^T - V^TB^T &WC^T - V^TE^T \end{bmatrix}
        \begin{bmatrix} -\gamma^2 I_q &D^T &0 \\
        D &-W &0 \\ 
        0 &0 &-I_p\end{bmatrix}^{-1}
        \begin{bmatrix} 0 \\ AW-BV \\ CW-EV\end{bmatrix} \prec 0, \label{eq:schurComp1} \\
        &\begin{bmatrix} -\gamma^2 I_q &D^T &0 \\
        D &-W &0 \\ 
        0 &0 &-I_p\end{bmatrix} \prec 0. \label{eq:schurComp2}
    \end{align}    
    \end{subequations}
    Pre-multiplying and post-multiplying \eqref{eq:schurComp1} by $P=W^{-1}$, and applying Schur complement on \eqref{eq:schurComp2}, it is found that \eqref{eq:schurComp} is equivalent to
    \begin{subequations}\label{eq:schurCompwithP}
    \begin{align} 
        &-P - \begin{bmatrix} 0 &(A-BK)^T &(C - EK)^T \end{bmatrix}
        \begin{bmatrix} -\gamma^2 I_q &D^T &0 \\
        D &-W &0 \\ 
        0 &0 &-I_p\end{bmatrix}^{-1}
        \begin{bmatrix} 0 \\ A-BK \\ C - EK\end{bmatrix} \prec 0, \\
        &-\gamma^2 I_q + D^TPD \prec 0. 
    \end{align}    
    \end{subequations}
    By recalling $(C-EK)^T(C-EK) = Q + K^TRK$, it follows that \eqref{eq:schurCompwithP} is equivalent to \eqref{eq:AREineq}. Thus, the proof is completed.      
\end{proof}
\begin{lemma}\label{lm:normInequ}
For any positive semi-definite matrix $P \in \mathbb{S}^n$, $\norm{P}_F \leq \Tr(P) \leq \sqrt{n} \norm{P}_F$ and $\norm{P} \leq \Tr(P) \leq n \norm{P}$.
\end{lemma}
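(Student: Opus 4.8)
The plan is to diagonalize $P$ and reduce each of the four claimed inequalities to an elementary estimate on its (nonnegative) eigenvalues. Since $P\in\mathbb{S}^n$ is positive semi-definite, the spectral theorem provides an orthonormal eigenbasis in which $P=\mathrm{diag}(\lambda_1,\dots,\lambda_n)$ with $\lambda_i\ge 0$. For symmetric $P$ one reads off directly $\Tr(P)=\sum_{i=1}^n\lambda_i$, $\norm{P}_F=(\Tr(P^2))^{1/2}=(\sum_{i=1}^n\lambda_i^2)^{1/2}$, and $\norm{P}=\max_i\lambda_i$. Thus all four inequalities become statements about the vector $(\lambda_1,\dots,\lambda_n)\in\mathbb{R}_+^n$.

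First I would establish the Frobenius bounds. The left one, $\norm{P}_F\le\Tr(P)$, follows from $\big(\sum_i\lambda_i\big)^2=\sum_i\lambda_i^2+\sum_{i\ne j}\lambda_i\lambda_j\ge\sum_i\lambda_i^2$, where the cross terms are nonnegative precisely because $P\succeq 0$. The right one, $\Tr(P)\le\sqrt{n}\,\norm{P}_F$, is the Cauchy--Schwarz inequality applied to the all-ones vector and $(\lambda_1,\dots,\lambda_n)$, namely $\sum_i 1\cdot\lambda_i\le\sqrt{n}\,(\sum_i\lambda_i^2)^{1/2}$.

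Next I would handle the spectral-norm bounds in the same spirit: $\norm{P}=\max_i\lambda_i\le\sum_i\lambda_i=\Tr(P)$ since each $\lambda_i\ge 0$, and $\Tr(P)=\sum_i\lambda_i\le n\max_i\lambda_i=n\,\norm{P}$.

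Since every step is a one-line scalar estimate, I do not anticipate a genuine obstacle here. The only points worth recording carefully are that $\norm{\cdot}_F$ and $\Tr(\cdot)$ can be expressed through eigenvalues for symmetric matrices, and that positive semi-definiteness is exactly the hypothesis making $\lambda_i\ge 0$, which is what licenses both the ``drop the cross terms'' step and the ``$\max\le$ sum'' step.
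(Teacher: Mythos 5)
Your proof is correct and follows essentially the same route as the paper: both reduce the claim to scalar inequalities on the nonnegative spectrum of $P$ (the paper phrases it via singular values, which coincide with the eigenvalues since $P\succeq 0$), with $\norm{P}_F\le\Tr(P)$ obtained by dropping cross terms and $\Tr(P)\le\sqrt{n}\norm{P}_F$ by Cauchy--Schwarz. No gaps to report.
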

\begin{proof}
    Let $\sigma_1 \geq \cdots \geq \sigma_n$ denote the descending sequence of $P$'s singular values. Then, $\norm{P}_F = \sqrt{\sum_{i=1}^n \sigma_i^2}$, $\Tr(P) = \sum_{i=1}^n \sigma_i$, and $\norm{P} = \sigma_1(P)$. Hence, $\norm{P} \leq \Tr(P) \leq n \norm{P}$. $\norm{P}_F \leq \Tr(P) \leq \sqrt{n} \norm{P}_F$ can be proven by the fact that $\sum_{i=1}^n \sigma_i^2 \leq (\sum_{i=1}^n \sigma_i)^2 \leq n\sum_{i=1}^n \sigma_i^2$.
\end{proof}

\begin{lemma}\label{lm:ComDupMat}
 There exists a unique matrix $T_{n} \in \mathbb{R}^{n^2 \times \frac{1}{2}n(n+1)}$ with full column rank, such that for every $X \in \mathbb{S}^{n}$,
\begin{align}\label{eq:DupMat}
    \vect(X) = T_{n} \vecs(X), \quad \vecs(X) = T_{n}^{\dagger}\vect(X).
\end{align}
\end{lemma}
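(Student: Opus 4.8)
The plan is to construct $T_n$ explicitly, then deduce uniqueness, full column rank, and the pseudoinverse identity from the elementary fact that $\vecs(\cdot)$ is a linear bijection between $\mathbb{S}^n$ and $\mathbb{R}^{n(n+1)/2}$.

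First I would fix the two indexing conventions from the Notations: $\vect(X)$ stacks the columns $x_1,\dots,x_n$ of $X$, so its entry in position $(j-1)n+i$ equals $x_{i,j}$; and $\vecs(X)$ lists the upper-triangular entries row by row, so its entries are indexed by pairs $(k,\ell)$ with $k\le \ell$. For each position $p=(j-1)n+i$ of $\vect(X)$, symmetry of $X$ gives $x_{i,j}=x_{\min(i,j),\max(i,j)}$, which is exactly the entry of $\vecs(X)$ indexed by the pair $(\min(i,j),\max(i,j))$. I would then define the $p$th row of $T_n$ to be the transpose of the standard basis vector of $\mathbb{R}^{n(n+1)/2}$ that selects that coordinate. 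By construction $\vect(X)=T_n\vecs(X)$ for every $X\in\mathbb{S}^n$, which gives the first identity in \eqref{eq:DupMat}.

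Next I would establish uniqueness and full column rank. Since $\vecs:\mathbb{S}^n\to\mathbb{R}^{n(n+1)/2}$ is onto, as $X$ ranges over $\mathbb{S}^n$ the vector $\vecs(X)$ ranges over all of $\mathbb{R}^{n(n+1)/2}$; hence any matrix $T$ satisfying $T\vecs(X)=\vect(X)=T_n\vecs(X)$ for all $X\in\mathbb{S}^n$ must satisfy $Tv=T_nv$ for every $v\in\mathbb{R}^{n(n+1)/2}$, i.e. $T=T_n$. For full column rank, suppose $T_nv=0$; writing $v=\vecs(X)$ with $X\in\mathbb{S}^n$ yields $\vect(X)=0$, hence $X=0$ and $v=0$, so $\ker T_n=\{0\}$.

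Finally, since $T_n$ has full column rank, $T_n^{\dagger}=(T_n^TT_n)^{-1}T_n^T$ is a left inverse, so $T_n^{\dagger}T_n=I_{n(n+1)/2}$; applying this to $\vect(X)=T_n\vecs(X)$ gives $T_n^{\dagger}\vect(X)=\vecs(X)$, the second identity in \eqref{eq:DupMat}. I do not anticipate any real obstacle: the only delicate point is the bookkeeping of the $\vect$/$\vecs$ orderings so the rows of $T_n$ are recorded correctly, and everything else follows immediately from linearity and the bijectivity of $\vecs$.
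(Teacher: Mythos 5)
Your proposal is correct. Note, though, that the paper does not actually prove this lemma: its ``proof'' is a one-line citation to Magnus and Neudecker (where $T_n$ is the standard duplication matrix), so there is no in-paper argument to compare against. What you have written is essentially the classical construction from that reference, made self-contained: defining each row of $T_n$ as the standard basis vector of $\mathbb{R}^{n(n+1)/2}$ selecting the $(\min(i,j),\max(i,j))$ entry gives $\vect(X)=T_n\vecs(X)$ by symmetry; uniqueness follows because $\vecs$ maps $\mathbb{S}^n$ onto all of $\mathbb{R}^{n(n+1)/2}$, so two candidate matrices agreeing on every $\vecs(X)$ agree on every vector; injectivity of $v\mapsto T_nv$ (full column rank) follows by writing $v=\vecs(X)$ and using $\vect(X)=0\Rightarrow X=0$; and the identity $\vecs(X)=T_n^{\dagger}\vect(X)$ follows since full column rank gives $T_n^{\dagger}=(T_n^TT_n)^{-1}T_n^T$, a left inverse of $T_n$. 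Your care with the paper's specific $\vect$/$\vecs$ orderings (column-stacking versus row-wise upper-triangular listing) is the only bookkeeping that matters, and you handled it correctly; the payoff of your route is that the lemma no longer rests on an external reference.
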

\begin{proof}
  See \cite[pp. 56]{book_Magnus} where $T_n$ is called duplication matrix. 
\end{proof}

\stepcounter{appidx}
\setcounter{equation}{0}

\section*{Appendix \Alph{appidx}: Proof of Theorem \ref{thm:outerloopGloballinear}}

The following lemma shows that the cost matrix $P_i$ generated by the outer-loop iteration \eqref{eq:outerloopIte} is monotonically decreasing and all the updated feedback gains are admissible given an initial admissible feedback gain. 
\begin{lemma}\label{lm:outerloop_converge}
Under Assumptions \ref{ass:stabilizable} and \ref{ass:crossterm}, if $K_1 \in \mathcal{W}$, then for any $i \in \mathbb{Z}_+$, 
\begin{enumerate}
    \item $K_i \in \mathcal{W}$;
    \item $P_{1} \succeq \cdots \succeq P_{i} \succeq P_{i+1} \succeq \cdots \succeq P^*$;
    \item $\lim_{i \to \infty}\norm{K_i - K^*}_F = 0$ and $\lim_{i \to \infty}\norm{P_i - P^*}_F = 0$.
\end{enumerate}
\end{lemma}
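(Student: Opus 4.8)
The plan is to establish 1) and 2) simultaneously by induction on $i$, with the hypothesis ``$K_i\in\mathcal{W}$ and $P_i\succeq P^*$'', and then to deduce 3) from the resulting monotonicity. For the base case, $K_1\in\mathcal{W}$ is assumed; and $P_{K}\succeq P^*$ holds for every $K\in\mathcal{W}$ by the game interpretation around \eqref{eq:costDGforK} read off at a deterministic initial state: for each $v\in\mathbb{R}^n$, $v^TP_Kv=\max_\nu\mathcal{J}_{DG}(-Kx_t,\nu)\big|_{x_0=v}\ge\min_\mu\max_\nu\mathcal{J}_{DG}(\mu,\nu)\big|_{x_0=v}=v^TP^*v$ (cf. \cite[Chapter 3]{book_Basar}), since restricting the minimizer to play $-Kx$ can only raise the value; hence $P_K\succeq P^*$. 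It therefore remains to run the inductive step.

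Assume $K_i\in\mathcal{W}$. By the Bounded Real Lemma (Lemma \ref{lm:realbounded}), $P_i=P_i^T\succ0$ is the stabilizing solution of \eqref{eq:AREforKPrelimi}, $\gamma^2I_q-D^TP_iD\succ0$, hence $U_i\succeq P_i\succ0$ and $M_i:=R+B^TU_iB\succ0$. Completing the square in $K$ in the quadratic form $(A-BK)^TU_i(A-BK)+K^TRK$ — whose unconstrained minimizer is precisely $K_{i+1}=M_i^{-1}B^TU_iA$ — and inserting the identity $(A-BK_i)^TU_i(A-BK_i)+K_i^TRK_i=P_i-Q$ taken from \eqref{eq:outloop_evalu}, yields the key relation
\begin{align}\label{eq:planstar}
&(A-BK_{i+1})^TU_i(A-BK_{i+1})-P_i+Q+K_{i+1}^TRK_{i+1}\nonumber\\
&\qquad=-(K_i-K_{i+1})^TM_i(K_i-K_{i+1})\preceq0.
\end{align}
Since $U_i\succeq P_i$, \eqref{eq:planstar} gives $(A-BK_{i+1})^TP_i(A-BK_{i+1})-P_i\preceq-Q\prec0$, so $\rho(A-BK_{i+1})<1$ by the discrete Lyapunov stability criterion. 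To produce $P_{i+1}$ and certify $K_{i+1}\in\mathcal{W}$, I would run the monotone Riccati recursion $\Pi_0:=P_i$, $\Pi_{k+1}:=(A-BK_{i+1})^T\hat{U}(\Pi_k)(A-BK_{i+1})+Q+K_{i+1}^TRK_{i+1}$, where $\hat{U}(\Pi):=(\Pi^{-1}-\gamma^{-2}DD^T)^{-1}=\Pi+\Pi D(\gamma^2I_q-D^T\Pi D)^{-1}D^T\Pi$ is monotone nondecreasing in $\Pi$ on $\{\Pi\succ0:\gamma^2I_q-D^T\Pi D\succ0\}$. Relation \eqref{eq:planstar} reads $\Pi_1\preceq\Pi_0$; monotonicity propagates this to a nonincreasing sequence obeying $Q\preceq\Pi_k\preceq P_i$, which keeps $\gamma^2I_q-D^T\Pi_kD\succeq\gamma^2I_q-D^TP_iD\succ0$ and hence forces convergence to some $P_{i+1}$ with $Q\preceq P_{i+1}\preceq P_i$ solving \eqref{eq:AREforKPrelimi} at $K=K_{i+1}$, with $\gamma^2I_q-D^TP_{i+1}D\succ0$. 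Being the monotone limit from the supersolution $P_i$ while $\rho(A-BK_{i+1})<1$, $P_{i+1}$ is the \emph{stabilizing} solution of that ARE (classical $\mathcal{H}_\infty$-Riccati theory, cf. \cite[Chapter 3]{book_Basar}, \cite[Lemma~2.7]{zhangarxiv2019}), so Lemma \ref{lm:realbounded} yields $K_{i+1}\in\mathcal{W}$ and $P_{i+1}=P_{K_{i+1}}$; combined with $P_{i+1}\succeq P^*$ from the base-case argument and $P_{i+1}\preceq P_i$ from the sandwich, the induction closes, establishing 1) and 2).

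For 3): by 2), $\{P_i\}$ is nonincreasing and bounded below by $P^*\succ0$, so $P_i\to P_\infty\succeq P^*$; the uniform slack $\gamma^2I_q-D^TP_iD\succeq\gamma^2I_q-D^TP_1D\succ0$ makes every map in \eqref{eq:outerloopIte} continuous along the sequence, whence $P_\infty$ solves the GARE \eqref{eq:GARE} and $K_i\to K_\infty:=(R+B^T\hat{U}(P_\infty)B)^{-1}B^T\hat{U}(P_\infty)A$; passing to the limit in $(A-BK_i)^TP_i(A-BK_i)-P_i\preceq-Q$ gives $\rho(A-BK_\infty)<1$, so $(P_\infty,K_\infty)$ is a positive definite stabilizing solution of the GARE, and by its uniqueness (Assumption \ref{ass:stabilizable}, \cite[Theorem~3.8]{book_Basar}) $P_\infty=P^*$, $K_\infty=K^*$; thus $\norm{P_i-P^*}_F\to0$ and $\norm{K_i-K^*}_F\to0$.

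The step I expect to be the main obstacle is exactly the certification $K_{i+1}\in\mathcal{W}$, i.e. the \emph{strict} bound $\norm{\mathcal{T}(K_{i+1})}_{\mathcal{H}_\infty}<\gamma$ rather than $\le\gamma$: relation \eqref{eq:planstar} on its own furnishes only a non-strict Riccati inequality at $P_i$, and a non-strict Riccati inequality is compatible with $\norm{\mathcal{T}(K_{i+1})}_{\mathcal{H}_\infty}=\gamma$, so one must genuinely show that the monotone limit $P_{i+1}$ is the \emph{stabilizing} solution of the $K_{i+1}$-ARE — equivalently that $A-BK_{i+1}+DL_{K_{i+1},*}$ is Schur — which rests on a careful limiting argument from classical $\mathcal{H}_\infty$/game Riccati theory, not merely the elementary Lyapunov comparisons (Lemmas \ref{lm:lyapunov}, \ref{lm:realbounded}) used elsewhere; the same subtlety underlies pinning down $P_\infty=P^*$ in the last step.
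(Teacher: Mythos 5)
The paper does not prove this lemma from scratch: it simply cites \cite{zhangarxiv2019} (Theorems 4.3 and 4.6 for statements 1) and 3), Equation (5.27) for statement 2)). Your attempt is therefore a genuinely different, self-contained route, and most of its skeleton is sound: the completion of squares giving $(A-BK_{i+1})^TU_i(A-BK_{i+1})-P_i+Q+K_{i+1}^TRK_{i+1}=-E_i\preceq 0$ is exactly the identity the paper itself uses later (its equation for $E_i$ in Appendix B), the deduction $\rho(A-BK_{i+1})<1$ from $U_i\succeq P_i$ and $Q\succ 0$ is correct, and the monotone recursion $\Pi_{k+1}=(A-BK_{i+1})^T\hat U(\Pi_k)(A-BK_{i+1})+Q+K_{i+1}^TRK_{i+1}$ does converge to a solution $P_{i+1}$ of \eqref{eq:AREforKPrelimi} with $Q\preceq P_{i+1}\preceq P_i$ and $\gamma^2I_q-D^TP_{i+1}D\succ 0$.

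The genuine gap is the step you yourself flag and then dispose of by appeal to ``classical $\mathcal{H}_\infty$/game Riccati theory'': the assertion that the monotone limit $P_{i+1}$, obtained from the supersolution $P_i$, is automatically the \emph{stabilizing} solution, so that Lemma \ref{lm:realbounded} yields $K_{i+1}\in\mathcal{W}$, i.e. $\norm{\mathcal{T}(K_{i+1})}_{\mathcal{H}_\infty}<\gamma$ strictly. No off-the-shelf theorem delivers this from what you have established, and the principle you invoke is false in the generality in which you state it: existence of a positive definite solution of the $K_{i+1}$-ARE with $\gamma^2I_q-D^TPD\succ 0$ (equivalently, of a non-strict supersolution) is compatible with $\norm{\mathcal{T}(K_{i+1})}_{\mathcal{H}_\infty}=\gamma$. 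A scalar example: closed loop $a\in(0,1)$, output weight $q>0$, disturbance gain $d>0$, and $\gamma=\sqrt{q}\,d/(1-a)$, so the $\mathcal{H}_\infty$ norm equals $\gamma$ exactly; the ARE then has the positive solution $p=q/(1-a)$ with $\gamma^2-d^2p=qd^2a/(1-a)^2>0$, yet the associated worst-case closed-loop matrix equals $a\gamma^2/(\gamma^2-d^2p)=1$, which is not Schur. So the definiteness conditions and $\rho(A-BK_{i+1})<1$ alone cannot certify membership in $\mathcal{W}$; ruling out the boundary case $\norm{\mathcal{T}(K_{i+1})}_{\mathcal{H}_\infty}=\gamma$ requires exploiting the specific structure that $P_i$ is the \emph{stabilizing} solution at $K_i\in\mathcal{W}$ and that $K_{i+1}$ is the exact Newton/policy-iteration update. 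That is precisely the ``implicit regularization'' content of \cite[Theorem 4.3]{zhangarxiv2019}, which is what the paper cites as its proof of statement 1); your proposal re-derives everything around it but leaves this crux unproven (the references you give, \cite[Lemma 2.7]{zhangarxiv2019} being just the bounded real lemma, do not contain it). The same issue propagates to your statement 3), since the limiting argument presumes the whole sequence stays in $\mathcal{W}$, and its final identification $P_\infty=P^*$ leans on uniqueness of the positive definite GARE solution, which is acceptable only because the paper itself asserts it.
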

\begin{proof}
The statements 1) and 3) are shown in \cite[Theorem 4.3 and Theorem 4.6]{zhangarxiv2019}. The statement 2) follows from \cite[Equation (5.27)]{zhangarxiv2019}.
\end{proof}

The following lemma presents the difference between the updated controller $K'$ and the optimal controller $K^*$. 
\begin{lemma}\label{lm:K-Kopt}
    For any $K \in \mathcal{W}$ and $K' := (R+B^TU_KB)^{-1}B^TU_KA$, we have
    \begin{align}
        &K' - K^* = R^{-1}B^T (\Lambda_K)^{-T}(P_K - P^*)A^*,\\
        &\Lambda_K := I_n + BR^{-1}B^TP_K  - \gamma^{-2} DD^TP_K
    \end{align}
\end{lemma}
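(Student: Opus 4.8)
The plan is to establish the identity $K' - K^* = R^{-1}B^T(\Lambda_K)^{-T}(P_K - P^*)A^*$ by manipulating the defining relations for $K'$ and $K^*$ directly, using the algebraic Riccati equations \eqref{eq:AREforKPrelimi} and \eqref{eq:GARE}. First I would observe that by definition $K' = (R+B^TU_KB)^{-1}B^TU_KA$ and $K^* = (R+B^TU^*B)^{-1}B^TU^*A$, where $U_K$ and $U^*$ are given by \eqref{eq:UKExpre} and \eqref{eq:Uopt}. A cleaner route is to use the Sherman--Morrison--Woodbury identity to rewrite $U_K$: since $U_K = P_K + P_K D(\gamma^2 I_q - D^T P_K D)^{-1} D^T P_K$, we have $U_K = (P_K^{-1} - \gamma^{-2} DD^T)^{-1}$, and similarly $U^* = ((P^*)^{-1} - \gamma^{-2} DD^T)^{-1}$. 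This lets me express $R + B^T U_K B$ and $B^T U_K A$ in a form amenable to comparison.

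Next I would rewrite the gain in a ``resolvent'' form. Using $U_K = (P_K^{-1} - \gamma^{-2}DD^T)^{-1}$, one gets $(R + B^TU_KB)^{-1}B^TU_K = R^{-1}B^T(P_K^{-1} - \gamma^{-2}DD^T + BR^{-1}B^T)^{-1} = R^{-1}B^T(P_K^{-1}\Lambda_K^T)^{-1} = R^{-1}B^T \Lambda_K^{-T} P_K$, where $\Lambda_K = I_n + BR^{-1}B^TP_K - \gamma^{-2}DD^TP_K$ exactly as in the statement (and the transpose appears because $P_K$ is symmetric so $P_K^{-1}(I + BR^{-1}B^TP_K - \gamma^{-2}DD^TP_K) = (\text{something})^T P_K^{-1}$-type bookkeeping must be done carefully). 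Hence $K' = R^{-1}B^T\Lambda_K^{-T}P_K A$, and likewise $K^* = R^{-1}B^T\Lambda_{*}^{-T}P^* A$ with $\Lambda_* = I_n + BR^{-1}B^TP^* - \gamma^{-2}DD^TP^*$.

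Then the difference $K' - K^*$ must be massaged into the claimed form. The key is to show $\Lambda_K^{-T}P_K A - \Lambda_*^{-T}P^* A = \Lambda_K^{-T}(P_K - P^*)A^*$, i.e. that $P_K A - \Lambda_K^T \Lambda_*^{-T} P^* A = (P_K - P^*)A^*$. Equivalently, using $A^* = A - BK^* + DL^*$ and the expression for $L^*$ from \eqref{eq:Lopt}, I would verify $\Lambda_*^{-T}P^* A = P^* A^* /(\text{appropriate factor})$ — concretely, a direct computation should give $\Lambda_*^T$ acting suitably reproduces $A^* $ via $U^*(A-BK^*) = P^* A^*$-type identities that follow from the GARE. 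The cleanest tactic: verify that $P^* A^* = \Lambda_*^T K^*$-related quantity, then that $P_K A^* = \Lambda_K^T K'$-related quantity after adding and subtracting, so the whole difference telescopes. I would carry this out by substituting the closed forms, clearing the resolvent inverses by left-multiplying by $\Lambda_K^T$, and checking the resulting polynomial matrix identity using the two Riccati equations.

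The main obstacle I anticipate is the bookkeeping around transposes and the non-symmetry of $\Lambda_K$: $P_K$ and $U_K$ are symmetric but $\Lambda_K$ is not, so one must be careful about whether $\Lambda_K^{-1}$ or $\Lambda_K^{-T}$ appears, and the identity $\Lambda_K^T = I_n + P_K BR^{-1}B^T - P_K \gamma^{-2}DD^T$ together with $\Lambda_K^T P_K^{-1} = P_K^{-1} + BR^{-1}B^T - \gamma^{-2}DD^T$ will be used repeatedly. Getting the factor $A^*$ (rather than $A$ or $A-BK$) to emerge requires using both the policy-improvement characterization of $K^*$ and the GARE \eqref{eq:Popt} to substitute for $P^*$; this substitution step, where the ``$A^*$'' appears by completing a square in the Riccati identity, is the crux and the place where a sign or transpose error is most likely.
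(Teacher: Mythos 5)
Your proposal is correct and follows essentially the same route as the paper's proof: both rewrite the gains in resolvent form ($K' = R^{-1}B^T\Lambda_K^{-T}P_KA$, via push-through/Woodbury manipulations of $U_K$), exploit the commutation $P_K\Lambda_K^{-1} = \Lambda_K^{-T}P_K$, and reduce the difference to the key identity $A^* = (\Lambda^*)^{-1}A$ derived from \eqref{eq:Kopt} and \eqref{eq:Lopt}, exactly as in the paper's equation \eqref{eq:AoptReexp}. The only cosmetic difference is your explicit use of Sherman--Morrison--Woodbury to write $U_K = (P_K^{-1}-\gamma^{-2}DD^T)^{-1}$ (legitimate since $P_K \succ 0$), where the paper avoids inverting $P_K$.
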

\begin{proof}
By the expression of $U_K$ in \eqref{eq:UKExpre}, we have
\begin{align}
\begin{split}
    &(R+B^TU_KB)^{-1}B^T U_K A = R^{-1}(I_m+B^TU_KBR^{-1})^{-1}B^TU_KA \\
    &\qquad = R^{-1}B^TU_K(I_n+BR^{-1}B^TU_K)^{-1}A \\
    & \qquad = R^{-1}B^TU_K[I_n + BR^{-1}B^TP_K +  BR^{-1}B^TP_KD(\gamma^2I_q - D^TP_KD)^{-1}D^TP_K ]^{-1}A \\
    & \qquad = R^{-1}B^TU_K\{I_n + BR^{-1}B^TP_K[I_n + DD^TP_K(\gamma^{2}I_n - DD^TP_K)^{-1} ] \}^{-1}A \\
    & \qquad = R^{-1}B^TU_K[I_n + BR^{-1}B^TP_K( I_n - \gamma^{-2} DD^TP_K)^{-1}]^{-1}A  \\
    & \qquad = R^{-1}B^TU_K ( I_n - \gamma^{-2} DD^TP_K) (I_n + BR^{-1}B^TP_K  - \gamma^{-2} DD^TP_K)^{-1}A  \\
    & \qquad = R^{-1}B^T P_K [I_n + D(\gamma^{2}I_q -D^TP_KD)^{-1}D^TP_K] ( I_n - \gamma^{-2} DD^TP_K) \Lambda_K^{-1}A  \\
    & \qquad = R^{-1}B^T P_K [I_n + \gamma^{-2}DD^TP_K(I_n -\gamma^{-2}DD^TP_K)^{-1}] ( I_n - \gamma^{-2} DD^TP_K) \Lambda_K^{-1}A \\
    & \qquad = R^{-1}B^T P_K \Lambda_K^{-1}A. 
\end{split}
\end{align}
By the expressions of $K'$ and $K^*$ in $\eqref{eq:Kopt}$, we have
\begin{align} \label{eq:K-Kopt}
    &K' - K^* =  R^{-1}B^TP_K\Lambda_K^{-1} A - R^{-1}B^TP^*(\Lambda^*)^{-1}A \\
    & \quad = R^{-1}B^T (P_K - P^*) (\Lambda^*)^{-1} A + R^{-1}B^TP_K \Lambda_{K}^{-1}A - R^{-1}B^TP_K (\Lambda^*)^{-1} A \nonumber\\
    & \quad = R^{-1}B^T (P_K - P^*) (\Lambda^*)^{-1} A + R^{-1}B^TP_K \left[ \Lambda_{K}^{-1} \Lambda^* (\Lambda^*)^{-1} -  \Lambda_{K}^{-1} \Lambda_{K}  (\Lambda^*)^{-1}\right] A \nonumber\\
    & \quad=  R^{-1}B^T (P_K - P^*) (\Lambda^*)^{-1} A - R^{-1}B^TP_K\Lambda_K^{-1} ( BR^{-1}B^T - \gamma^{-2}DD^T)(P_K - P^*)  (\Lambda^*)^{-1} A \nonumber\\
    &\quad = R^{-1}B^T[I_n - P_K\Lambda_K^{-1} ( BR^{-1}B^T - \gamma^{-2}DD^T)](P_K - P^*) A^* \nonumber\\
    &\quad = R^{-1}B^T[I_n - (I_n + P_KBR^{-1}B^T - \gamma^{-2}P_KDD^T)^{-1} P_K( BR^{-1}B^T - \gamma^{-2}DD^T)](P_K - P^*)  A^* \nonumber\\
    &\quad = R^{-1}B^T(I_n + P_KBR^{-1}B^T - \gamma^{-2}P_KDD^T)^{-1} (P_K - P^*) A^* = R^{-1}B^T (\Lambda_K)^{-T}(P_K - P^*)A^*. \nonumber
\end{align}
It is noticed that the fifth equation follows from the equality
\begin{align}\label{eq:AoptReexp}
    A^* &= [I_n + D(\gamma^2I_q - D^TP^*D)^{-1}D^TP^*](A - BK^*) = (I_n - \gamma^{-2}DD^TP^*)^{-1}(A - BK^*) \nonumber\\
    &= (I_n - \gamma^{-2}DD^TP^*)^{-1}[I_n - B(R+B^TU^*B)^{-1}B^TU^*]A \nonumber\\
    &= (I_n - \gamma^{-2}DD^TP^*)^{-1}(I_n + BR^{-1}B^TU^*)^{-1}A \\
    &= (I_n - \gamma^{-2}DD^TP^*)^{-1}[I_n + BR^{-1}B^TP^*(I_n - \gamma^{-2}DD^TP^*)^{-1}]^{-1}A = (\Lambda^*)^{-1} A. \nonumber
\end{align}
\end{proof}

The following lemma presents the expression of the difference between $U_K$ and $U^*$.
\begin{lemma}\label{lm:UiUoptQuad}
For any $K \in \mathcal{W}$, $(U_K - U^*)$ satisfies
\begin{align}\label{eq:Ui-U*Lemma}
    &U_K - U^* \nonumber\\
    &= (I_n - \gamma^{-2}P^*DD^T)^{-1}(P_K - P^*) (I_n - \gamma^{-2}DD^TP^*)^{-1} \nonumber\\
    &+ (I_n - \gamma^{-2}P^*DD^T)^{-1}(P_K - P^*)D(\gamma^2I_q - D^TP_KD)^{-1} \nonumber\\
    &\quad D^T(P_K - P^*)(I_n - \gamma^{-2}DD^TP^*)^{-1} 
\end{align}
\end{lemma}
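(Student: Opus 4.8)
The plan is to establish the identity \eqref{eq:Ui-U*Lemma} by starting from the definitions of $U_K$ and $U^*$ in \eqref{eq:UKExpre} and \eqref{eq:Uopt}, and rewriting both Woodbury-type corrections in a common ``sandwich'' form using the factor $(I_n - \gamma^{-2}DD^TP^*)^{-1}$ (equivalently $(I_n - \gamma^{-2}P^*DD^T)^{-1}$ on the left). The key preliminary observation is the algebraic fact that for any symmetric $P$ with $\gamma^2 I_q - D^TPD \succ 0$ one has $I_n + D(\gamma^2 I_q - D^TPD)^{-1}D^TP = (I_n - \gamma^{-2}DD^TP)^{-1}$, so that $U_K = P_K(I_n - \gamma^{-2}DD^TP_K)^{-1} = (I_n - \gamma^{-2}P_KDD^T)^{-1}P_K$ and similarly $U^* = (I_n - \gamma^{-2}P^*DD^T)^{-1}P^*$. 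This is exactly the manipulation already used inside the proof of Lemma \ref{lm:K-Kopt} and in \eqref{eq:AoptReexp}, so it can be cited rather than re-derived.

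First I would write $U_K - U^* = (I_n - \gamma^{-2}P_KDD^T)^{-1}P_K - (I_n - \gamma^{-2}P^*DD^T)^{-1}P^*$ and insert and subtract the mixed term $(I_n - \gamma^{-2}P^*DD^T)^{-1}P_K$, giving
\begin{align}
U_K - U^* &= \left[(I_n - \gamma^{-2}P_KDD^T)^{-1} - (I_n - \gamma^{-2}P^*DD^T)^{-1}\right]P_K \nonumber\\
&\quad + (I_n - \gamma^{-2}P^*DD^T)^{-1}(P_K - P^*). \nonumber
\end{align}
For the first bracket I would use the resolvent identity $X^{-1} - Y^{-1} = X^{-1}(Y - X)Y^{-1}$ with $X = I_n - \gamma^{-2}P_KDD^T$ and $Y = I_n - \gamma^{-2}P^*DD^T$, so $Y - X = \gamma^{-2}(P_K - P^*)DD^T$. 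Then everything is expressed with a left factor $(I_n-\gamma^{-2}P_KDD^T)^{-1}$, a factor $(P_K-P^*)$, and a right factor $(I_n-\gamma^{-2}DD^TP^*)^{-1}P_K$ (after transposing $DD^T(\cdot)$ appropriately). Next I would push the remaining $(I_n-\gamma^{-2}P_KDD^T)^{-1}$ toward the target form; the cleanest route is to reduce the right-hand side of \eqref{eq:Ui-U*Lemma} instead: collect the two terms on the RHS over the common left factor $(I_n - \gamma^{-2}P^*DD^T)^{-1}(P_K-P^*)$ and common right factor $(I_n - \gamma^{-2}DD^TP^*)^{-1}$, obtaining the middle matrix $I_n + D(\gamma^2 I_q - D^TP_KD)^{-1}D^T(P_K - P^*)$, and recognize — by the same Woodbury identity applied with an extra $(P_K - P^*)$ — that $(I_n - \gamma^{-2}P^*DD^T)^{-1}(P_K-P^*)\big[I_n + D(\gamma^2 I_q - D^TP_KD)^{-1}D^T(P_K-P^*)\big](I_n - \gamma^{-2}DD^TP^*)^{-1}$ collapses to $U_K - U^*$. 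Verifying that collapse is the crux: it amounts to checking that $(P_K - P^*)\big[I_n + D(\gamma^2I_q - D^TP_KD)^{-1}D^T(P_K-P^*)\big]$ equals $P_K(I_n-\gamma^{-2}DD^TP_K)^{-1}(I_n - \gamma^{-2}DD^TP^*) - (I_n-\gamma^{-2}P^*DD^T)(I_n-\gamma^{-2}P^*DD^T)^{-1}P^*\cdot(\dots)$, which after expanding is a finite, if slightly tedious, matrix identity in $P_K$, $P^*$, $D$, $\gamma$.

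The main obstacle I anticipate is purely bookkeeping: keeping the left/right resolvent factors $(I_n - \gamma^{-2}P^*DD^T)^{-1}$ versus $(I_n - \gamma^{-2}DD^TP^*)^{-1}$ straight under transposition (they are transposes of each other since $P^*$ is symmetric), and correctly applying the push-through identity $D(\gamma^2I_q - D^TP D)^{-1}D^T = \gamma^{-2}DD^TP(I_n - \gamma^{-2}DD^TP)^{-1}$ repeatedly so that the $q\times q$ inverse $(\gamma^2I_q - D^TP_KD)^{-1}$ in the statement reappears exactly. I would organize the computation as a single chain of equalities, at each step either (i) applying the Woodbury/push-through identity, (ii) applying the resolvent identity $X^{-1}-Y^{-1}=X^{-1}(Y-X)Y^{-1}$, or (iii) factoring out the common resolvent factors — the same three moves already rehearsed in the proof of Lemma \ref{lm:K-Kopt}. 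No convergence or positivity argument is needed here; $\gamma > \gamma_\infty$ and $K \in \mathcal{W}$ only enter to guarantee that $\gamma^2 I_q - D^TP_KD \succ 0$ and hence all inverses appearing are well-defined, which is already recorded in Lemma \ref{lm:realbounded}.
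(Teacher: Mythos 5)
Your proposal is correct and essentially mirrors the paper's proof: both rest on the push-through rewritings $U_K=(I_n-\gamma^{-2}P_KDD^T)^{-1}P_K$ and $U^*=(I_n-\gamma^{-2}P^*DD^T)^{-1}P^*$, followed by a sandwich computation of the difference between $(I_n-\gamma^{-2}P^*DD^T)$ on the left and $(I_n-\gamma^{-2}DD^TP^*)$ on the right, with the cross terms cancelled via the matrix-inversion identity $(\gamma^2I_q-D^TP_KD)^{-1}=\gamma^{-2}I_q+\gamma^{-4}D^TU_KD$ (the paper packages this by introducing $S_K$ and showing it equals the quadratic term, which is exactly your ``collapse'' check run forward rather than backward). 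The only blemish is a transcription slip in your crux display: the sandwiched $U_K$ term must retain its left factor, i.e. it reads $(I_n-\gamma^{-2}P^*DD^T)U_K(I_n-\gamma^{-2}DD^TP^*)$, after which the expansion indeed reduces to $(P_K-P^*)+(P_K-P^*)D(\gamma^2I_q-D^TP_KD)^{-1}D^T(P_K-P^*)$, as required.
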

\begin{proof}
By recalling the expression of $U^*$ in \eqref{eq:GARE}, we can further derive it as
\begin{align}\label{eq:UoptLemma}
    &U^*= [I_n + P^*D(\gamma^{2}I_q - D^TP^*D)^{-1}D^T]P^* = [I_n + P^*DD^T(\gamma^{2}I_n - P^*DD^T)^{-1}]P^* \nonumber\\ 
    &= (I_n - \gamma^{-2}P^*DD^T)^{-1}P^* = (I_n - \gamma^{-2}P^*DD^T)^{-1}P^*(I_n - \gamma^{-2}DD^TP^*)(I_n - \gamma^{-2}DD^TP^*)^{-1} \nonumber \\
    &=  (I_n - \gamma^{-2}P^*DD^T)^{-1}(P_K - \gamma^{-2}P^*DD^TP^*)(I_n - \gamma^{-2}DD^TP^*)^{-1} \nonumber\\
    &-  (I_n - \gamma^{-2}P^*DD^T)^{-1}(P_K - P^*)(I_n - \gamma^{-2}DD^TP^*)^{-1}.
\end{align}
To simplify the notation, define $S_K$ as
\begin{align}
    S_K := U_K - (I_n - \gamma^{-2}P^*DD^T)^{-1}(P_K - \gamma^{-2}P^*DD^TP^*)(I_n - \gamma^{-2}DD^TP^*)^{-1}.
\end{align}
Then, it follows that
\begin{align}\label{eq:Si}
    &(I_n - \gamma^{-2}P^*DD^T)S_K(I_n - \gamma^{-2}DD^TP^*) \\
    &= (I_n - \gamma^{-2}P^*DD^T)U_K(I_n - \gamma^{-2}DD^TP^*) - (P_K - \gamma^{-2}P^*DD^TP^*) \nonumber\\
    & = U_K - P_K - \gamma^{-2}P^*DD^TU_K - \gamma^{-2}U_KDD^TP^* + \gamma^{-4}P^*DD^TU_KDD^TP^* + \gamma^{-2}P^*DD^TP^*. \nonumber
\end{align}
Following the expression of $U_K$ in \eqref{eq:UKExpre}, we have
\begin{align}\label{eq:PDDU}
    &\gamma^{-2}P^*DD^TU_K \\
    &= \gamma^{-2}P^*DD^T(I_n - \gamma^{-2}P_KDD^T)^{-1}P_K =  \gamma^{-2}P^*D(I_q - \gamma^{-2}D^TP_KD)^{-1}D^TP_K. \nonumber
\end{align}
Since $U_K - P_K = P_KD(\gamma^2I_q - D^TP_KD)^{-1}D^TP_K$, plugging \eqref{eq:PDDU} into \eqref{eq:Si} and completing the squares yield
\begin{align}\label{eq:Si2}
    &(I_n - \gamma^{-2}P^*DD^T)S_K(I_n - \gamma^{-2}DD^TP^*) = (P_K - P^*)D(\gamma^2I_q - D^TP_KD)^{-1}D^T(P_K - P^*) \nonumber\\
    &\qquad - P^*D(\gamma^2I_q - D^TP_KD)^{-1}D^TP^*   + \gamma^{-4}P^*DD^TU_KDD^TP^* + \gamma^{-2}P^*DD^TP^*.
\end{align}
By the matrix inversion lemma,
\begin{align} \label{eq:matrixinverse}
    (\gamma^2I_q - D^TP_KD)^{-1} = \gamma^{-2}I_q + \gamma^{-4} D^T (I_n - \gamma^{-2}P_KDD^T)^{-1} P_KD = \gamma^{-2}I_q + \gamma^{-4} D^T U_KD,
\end{align}
where the last equality is from $U_K =  (I_n - \gamma^{-2}P_KDD^T)^{-1} P_K$. Plugging \eqref{eq:matrixinverse} into \eqref{eq:Si2}, we have
\begin{align}\label{eq:SiFinal}
    S_K = (I_n - \gamma^{-2}P^*DD^T)^{-1} (P_K - P^*)D(\gamma^2I_q - D^TP_KD)^{-1}D^T(P_K - P^*) (I_n - \gamma^{-2}DD^TP^*)^{-1}.
\end{align}
Therefore, \eqref{eq:Ui-U*Lemma} is obtained by \eqref{eq:UoptLemma} and \eqref{eq:SiFinal}.
\end{proof}

\begin{lemma}\label{lm:continuousPK}
    For any $K \in \mathcal{W}$, $P_K$ is continuous with respect to $K$, where $P_K$ is the unique positive-definite solution to \eqref{eq:AREforKPrelimi}.
\end{lemma}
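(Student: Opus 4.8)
The plan is to invoke the implicit function theorem for the parametrized Riccati equation \eqref{eq:AREforKPrelimi}. Fix an arbitrary $K_0\in\mathcal W$. By the Bounded Real Lemma (Lemma~\ref{lm:realbounded}), $P_{K_0}=P_{K_0}^T\succ0$ is the \emph{unique} stabilizing solution of \eqref{eq:AREforKPrelimi}; in particular $\gamma^2 I_q-D^TP_{K_0}D\succ0$ and the closed-loop matrix $\tilde A_0:=A-BK_0+DL_{K_0,*}$, with $L_{K_0,*}$ as in \eqref{eq:LoptForK}, is stable. On the open set of pairs $(K,P)\in\mathbb R^{m\times n}\times\mathbb S^n$ with $\gamma^2 I_q-D^TPD\succ0$, which contains $(K_0,P_{K_0})$, I would define the map
\[
\mathcal R(K,P):=(A-BK)^TU_K(P)(A-BK)-P+Q+K^TRK,
\]
where $U_K(P):=P+PD(\gamma^2 I_q-D^TPD)^{-1}D^TP$; then on this domain $\mathcal R(K,P)=0$ characterizes exactly the solutions of \eqref{eq:AREforKPrelimi}. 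Since matrix inversion is $C^\infty$ (indeed real-analytic) wherever the matrix is nonsingular, $\mathcal R$ is $C^\infty$ jointly in $(K,P)$.

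The crux will be to show that the partial Fréchet derivative $D_P\mathcal R(K_0,P_{K_0})\colon\mathbb S^n\to\mathbb S^n$ is a bijection. To compute it I would first record the completion-of-squares identity (the same one behind \eqref{eq:LyaforL}): for fixed $K_0$ and any admissible $P$,
\[
(A-BK_0)^TU_{K_0}(P)(A-BK_0)=\bigl(A-BK_0+D\,\mathsf L(P)\bigr)^TP\bigl(A-BK_0+D\,\mathsf L(P)\bigr)-\gamma^2\mathsf L(P)^T\mathsf L(P),
\]
with $\mathsf L(P):=(\gamma^2 I_q-D^TPD)^{-1}D^TP(A-BK_0)$. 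Differentiating the right-hand side at $P=P_{K_0}$ in a direction $\Delta\in\mathbb S^n$, the terms carrying $D_P\mathsf L(P_{K_0})[\Delta]$ appear multiplied by $D^TP_{K_0}\tilde A_0-\gamma^2 L_{K_0,*}$ (or its transpose), which vanishes because the definition of $L_{K_0,*}$ gives $\gamma^2 L_{K_0,*}=D^TP_{K_0}(A-BK_0)+D^TP_{K_0}DL_{K_0,*}=D^TP_{K_0}\tilde A_0$. Hence only the frozen-coefficient term survives, and
\[
D_P\mathcal R(K_0,P_{K_0})[\Delta]=\tilde A_0^T\Delta\tilde A_0-\Delta .
\]
Since $\tilde A_0$ is stable, this Stein operator is invertible on $\mathbb S^n$: its eigenvalues are $\lambda_i(\tilde A_0)\lambda_j(\tilde A_0)-1\neq0$, and its inverse is $\Delta\mapsto-\sum_{t=0}^{\infty}(\tilde A_0^T)^t\Delta\,\tilde A_0^t$, which converges by stability of $\tilde A_0$.

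With invertibility established, the implicit function theorem yields an open neighborhood $\mathcal U$ of $K_0$ and a continuous (in fact real-analytic) map $K\mapsto\hat P(K)$ on $\mathcal U$ with $\hat P(K_0)=P_{K_0}$, $\mathcal R(K,\hat P(K))=0$, and $\gamma^2 I_q-D^T\hat P(K)D\succ0$. It then remains to identify $\hat P(K)$ with $P_K$. Shrinking $\mathcal U$ if necessary, continuity of $\hat P$ forces $\hat P(K)\succ0$ and forces $A-BK+D(\gamma^2 I_q-D^T\hat P(K)D)^{-1}D^T\hat P(K)(A-BK)$ to stay near $\tilde A_0$, hence stable, for all $K\in\mathcal U$; so $\hat P(K)$ is a positive-definite stabilizing solution of \eqref{eq:AREforKPrelimi}, whence (Lemma~\ref{lm:realbounded}) $K\in\mathcal W$ and $\hat P(K)=P_K$ on $\mathcal U$. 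Therefore $K\mapsto P_K$ is continuous at $K_0$, and since $K_0\in\mathcal W$ was arbitrary, the claim follows. The main obstacle is the derivative computation of the middle paragraph: one must differentiate $U_{K_0}(P)$ carefully and recognize the game identity that annihilates the $D_P\mathsf L$-terms; once that is done, invertibility is immediate from stability of $\tilde A_0$ and the rest is routine.
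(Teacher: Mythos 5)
Your proof is correct and follows essentially the same route as the paper: apply the implicit function theorem to the Riccati map defined by \eqref{eq:AREforKPrelimi} and show its partial derivative in $P$ at $(K,P_K)$ is the Stein operator $\Delta\mapsto A_{K,*}^T\Delta A_{K,*}-\Delta$, invertible because $A_{K,*}=A-BK+DL_{K,*}$ is stable by Lemma~\ref{lm:realbounded}. The only differences are cosmetic: you obtain the derivative via a completion-of-squares cancellation of the $\mathsf L$-terms instead of the paper's direct Kronecker-calculus differentiation of $U_K$, and you add the (sound, though omitted in the paper) step identifying the implicit branch with the unique stabilizing solution $P_K$.
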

\begin{proof}
The lemma follows from the implicit function theorem. Consider the function
\begin{align}
\begin{split}
    &F(K,P_K) := (A-BK)^TU_K(A-BK) - P_K + Q + K^TRK.
\end{split}
\end{align}    
Let $\mathcal{F}(\vect(K),\vect(P_K)) := \vect(F(K,P_K))$. Then,
\begin{align}\label{eq:partialF}
\begin{split}
    &\mathcal{F}(\vect(K),\vect(P_K))= [(A-BK)^T \otimes (A-BK)^T] \vect(U_K) \\
    &- \vect(P_K) + \vect(Q + K^TRK). 
\end{split}
\end{align}  
Noticing $U_K = (I_n - \gamma^{-2}P_KDD^T)^{-1}P_K$ and \cite[Theroem 9]{Magnus1985}, we have
\begin{align}\label{eq:UKderivative}
    &\frac{\partial \vect(U_K)}{\partial \vect(P_K)} = (P_K \otimes I_n) \frac{\partial \vect((I_n - \gamma^{-2}P_KDD^T)^{-1})}{\partial \vect(P_K)} \nonumber\\
    &+ [I_n \otimes (I_n - \gamma^{-2}P_KDD^T)^{-1}] \frac{\partial \vect(P_K)}{ \partial \vect(P_K)} \\ 
    &= \left\{[\gamma^{-2}P_KDD^T(I_n-\gamma^{-2}P_KDD^T)^{-1}] \right. \nonumber\\
    &\quad \left. \otimes (I-\gamma^{-2}P_KDD^T)^{-1}\right\}  + [I_n \otimes (I_n - \gamma^{-2}P_KDD^T)^{-1}] \nonumber\\
    &= (I_n - \gamma^{-2}P_KDD^T)^{-1} \otimes (I_n - \gamma^{-2}P_KDD^T)^{-1}, \nonumber
\end{align}
where the second equality follows from \cite[Equation (B.10)]{zhangarxiv2019}. Considering
\begin{align}\label{eq:A-BKLK*}
    &L_{K,*} = (\gamma^2I_q - D^TP_KD)^{-1}D^TP_K(A-BK) \\
    &A-BK + DL_{K,*} = (I_n - \gamma^{-2}DD^TP_K)^{-1}(A-BK),    \nonumber
\end{align}
and plugging  \eqref{eq:UKderivative} and \eqref{eq:A-BKLK*} into the derivative of \eqref{eq:partialF} yield
\begin{align}
\begin{split}
    &\frac{\partial \mathcal{F}(\vect(K),\vect(P_K))}{\partial \vect(P_K)} =  [(A-BK+DL_{K,*})^T \\
    &\otimes (A-BK+DL_{K,*})^T] - I_{n^2}.
\end{split}
\end{align}
Since $K \in \mathcal{W}$, by Lemma \ref{lm:realbounded}, $(A-BK+DL_{K,*})$ is stable. Then, $\frac{\partial \mathcal{F}(\vect(K),\vect(P_K))}{\partial \vect(P_K)}$ is invertible since $\sgmax(A-BK+DL_{K,*}) < 1$. By the implicit function theorem, $P_K$ is continuous with respect to $K$ for any $K \in \mathcal{W}$.
\end{proof}

\begin{lemma}\label{lm:E=0P=0}
Let $K \in \mathcal{W}$, $P_K$ be the positive-definite solution of \eqref{eq:AREforK}, and $K' = (R+B^TU_KB)^{-1}B^TU_KA$. Then, $(K - K')^T R (K - K') = 0$ is equivalent to $K = K^*$.
\end{lemma}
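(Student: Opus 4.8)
The plan is to treat the two implications separately. The ``if'' direction is immediate: if $K = K^*$, then \eqref{eq:AREforK} coincides with \eqref{eq:Popt}, so the positive-definite solution $P_K$ equals $P^*$, hence $U_K = U^*$, and \eqref{eq:Kopt} gives $K' = (R+B^TU^*B)^{-1}B^TU^*A = K^* = K$, so that $(K-K')^TR(K-K') = 0$.

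For the ``only if'' direction, suppose $(K-K')^TR(K-K') = 0$. Since $R \succ 0$, this forces $K = K'$, i.e. $K = (R+B^TU_KB)^{-1}B^TU_KA$. The key step is a standard completion-of-squares identity in the variable $\tilde K$: the quadratic form $(A-B\tilde K)^TU_K(A-B\tilde K) + \tilde K^TR\tilde K$ equals $A^TU_KA - A^TU_KB(R+B^TU_KB)^{-1}B^TU_KA$ plus the term $(\tilde K - K')^T(R+B^TU_KB)(\tilde K - K')$, which vanishes at $\tilde K = K = K'$. Substituting into \eqref{eq:AREforK} collapses it to $A^TU_KA - A^TU_KB(R+B^TU_KB)^{-1}B^TU_KA - P_K + Q = 0$; together with the definition \eqref{eq:UKExpre} of $U_K$, this is precisely the form taken by \eqref{eq:GARE} after the same completion of squares is applied to \eqref{eq:Popt}. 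Hence $P_K$ solves the GARE.

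It remains to identify $P_K$ with the \emph{stabilizing} solution of the GARE so that uniqueness applies. Since $K \in \mathcal{W}$, Lemma \ref{lm:realbounded} yields $I_q - \gamma^{-2}D^TP_KD \succ 0$ and the stability of $A - BK + DL_{K,*}$, where $L_{K,*} = (\gamma^2 I_q - D^TP_KD)^{-1}D^TP_K(A-BK)$ is exactly the $L^*$-type gain associated with $P_K$ while $K = (R+B^TU_KB)^{-1}B^TU_KA$ is the $K^*$-type gain; hence $P_K$ is a positive-definite stabilizing solution of \eqref{eq:GARE}. By uniqueness of such a solution (see, e.g., \cite[Theorem 3.8]{book_Basar}), $P_K = P^*$, and therefore $K = (R+B^TU_KB)^{-1}B^TU_KA = (R+B^TU^*B)^{-1}B^TU^*A = K^*$.

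The main obstacle is the bookkeeping in the last two paragraphs: verifying that the fixed-point condition $K=K'$ turns \eqref{eq:AREforK} exactly into the GARE via the completion of squares, and checking that the closed-loop matrix whose stability is guaranteed by $K \in \mathcal{W}$ is precisely the one certifying $P_K$ as the stabilizing GARE solution so that the uniqueness argument is legitimate. Everything else is routine algebra or a direct citation.
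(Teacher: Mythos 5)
Your proof is correct and follows essentially the same route as the paper: $R \succ 0$ forces $K = K'$, the fixed-point condition turns \eqref{eq:AREforK} into the GARE \eqref{eq:GARE}, and uniqueness of its positive-definite solution gives $P_K = P^*$ and $K = K^*$, with the converse being immediate. The only difference is that you spell out the completion-of-squares reduction and explicitly verify (via Lemma \ref{lm:realbounded}) that $P_K$ is the stabilizing solution before invoking uniqueness, details the paper leaves implicit.
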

\begin{proof}
($\Rightarrow$): Since $R \succ 0$,  $(K - K')^T R (K - K') = 0$ implies $K = K'$. From \eqref{eq:AREforK}, we have
\begin{align}
    &(A-BK)^T U_K  (A-BK) - P_K + Q + K^TRK = 0 \nonumber \\
    &K = (R + B^TUB)^{-1}B^TU_KA.
\end{align}
Hence, $P_K \succ 0$ satisfies the GARE \eqref{eq:GARE}. Due to the uniqueness of the positive-definite solution to \eqref{eq:GARE}, it is deduced that $P_K=P^*$ and $K = K^*$.

($\Leftarrow$): Since $K = K^*$ and $P_K = P^*$, it follows that $K' = (R+B^TU_KB)^{-1}B^TU_KA = K$. Hence, $K = K'$ and $(K - K')^TR(K-K')=0$. 
\end{proof}

\begin{lemma}\label{lm:compactGh}
   For any $h \in \mathcal{H}$, $\mathcal{G}_h := \{ K \in \mathcal{W}| \Tr(P_K) \leq \Tr(P^*) +h \}$ is compact.
\end{lemma}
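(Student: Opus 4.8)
The set $\mathcal{G}_h$ lies in the finite-dimensional space $\mathbb{R}^{m\times n}$, so the plan is to show that it is both bounded and closed. Boundedness is the easy part: for $K\in\mathcal{G}_h\subset\mathcal{W}$, Lemma \ref{lm:realbounded} guarantees $\gamma^2 I_q-D^TP_KD\succ 0$, hence $U_K\succeq P_K\succ 0$ by \eqref{eq:UKExpre}, and then \eqref{eq:AREforK} gives $P_K=(A-BK)^TU_K(A-BK)+Q+K^TRK\succeq K^TRK\succeq\sgmin(R)K^TK$. Consequently $\sgmin(R)\norm{K}_F^2\le\Tr(P_K)\le\Tr(P^*)+h$, which bounds $\norm{K}_F$ uniformly over $\mathcal{G}_h$.

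For closedness, let $\{K_k\}\subset\mathcal{G}_h$ with $K_k\to\bar K$; I must show $\bar K\in\mathcal{G}_h$. By Lemma \ref{lm:normInequ}, $0\prec Q\preceq P_{K_k}\preceq(\Tr(P^*)+h)I_n$, so $\{P_{K_k}\}$ lies in a compact subset of $\mathbb{S}^n$ and, along a subsequence, $P_{K_k}\to\bar P$. The decisive step is to prove $\bar K\in\mathcal{W}$. Granting this, $\mathcal{W}$ being open (both defining inequalities in \eqref{eq:feasibleSet} are strict and the quantities involved are continuous in $K$) and $P_K$ being continuous on $\mathcal{W}$ (Lemma \ref{lm:continuousPK}), we get $P_{\bar K}=\bar P$ and $\Tr(P_{\bar K})=\lim_k\Tr(P_{K_k})\le\Tr(P^*)+h$, i.e. $\bar K\in\mathcal{G}_h$; combined with boundedness this yields compactness.

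To prove $\bar K\in\mathcal{W}$, the strategy is to establish that $K\mapsto\Tr(P_K)$ is coercive on $\mathcal{W}$, meaning $\Tr(P_K)\to\infty$ as $K$ approaches $\partial\mathcal{W}$ from inside $\mathcal{W}$; then a sequence in $\mathcal{G}_h$, along which $\Tr(P_{K_k})\le\Tr(P^*)+h$, cannot converge to a point of $\partial\mathcal{W}$, forcing $\bar K\in\mathcal{W}$. Coercivity splits into the two ways the boundary can be reached (on $\partial\mathcal{W}$ at least one of $\rho(A-B\bar K)=1$ or $\norm{\mathcal{T}(\bar K)}_{\mathcal{H}_\infty}=\gamma$ must hold, by continuity of $\rho(A-BK)$ and, on the Schur region, of $\norm{\mathcal{T}(K)}_{\mathcal{H}_\infty}$). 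If $\rho(A-BK_k)\to 1$: by the inner-loop monotonicity (Lemma \ref{lm:innerloop convergence}, with $L_{i,1}=0$) one has $P_{K_k}\succeq P_{K_k,0}$, where $P_{K_k,0}$ solves $(A-BK_k)^TP_{K_k,0}(A-BK_k)-P_{K_k,0}+Q+K_k^TRK_k=0$, so $P_{K_k,0}\succeq\sgmin(Q)M_k$ with $M_k$ solving $(A-BK_k)^TM_k(A-BK_k)-M_k+I_n=0$; since $\Tr(M_k)=\sum_{t\ge 0}\norm{(A-BK_k)^t}_F^2\ge\sum_{t\ge 0}\rho(A-BK_k)^{2t}=(1-\rho(A-BK_k)^2)^{-1}\to\infty$, we obtain $\Tr(P_{K_k})\to\infty$. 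If instead $\rho(A-BK_k)$ stays away from $1$ while $\norm{\mathcal{T}(K_k)}_{\mathcal{H}_\infty}\to\gamma$: here one uses $\Tr(P_K)=\mathcal{J}_{DG}(K,\nu^*(K))$ from \eqref{eq:costDGforK} together with the Bounded Real Lemma (Lemma \ref{lm:realbounded}) to argue that the game value must diverge as the robustness margin $\gamma-\norm{\mathcal{T}(K_k)}_{\mathcal{H}_\infty}\downarrow 0$. I expect this second case to be the main obstacle: it requires tracking precisely how the stabilizing solution of the Riccati equation \eqref{eq:AREforKPrelimi} degenerates along the sequence — whether $\gamma^2 I_q-D^TP_{K_k}D$ loses positive definiteness and/or the induced closed loop $A-BK_k+D(\gamma^2 I_q-D^TP_{K_k}D)^{-1}D^TP_{K_k}(A-BK_k)$ approaches the unit circle — and then showing that such degeneration is incompatible with $\Tr(P_{K_k})$ remaining bounded, which is the delicate point to be argued via Lemmas \ref{lm:realbounded} and \ref{lm:continuousPK}.
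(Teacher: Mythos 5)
Your boundedness argument is exactly the paper's (via $K^TRK\preceq P_K$ and $\Tr(P_K)\leq\Tr(P^*)+h$), but the closedness step has a genuine gap. You reduce closedness to coercivity of $K\mapsto\Tr(P_K)$ near $\partial\mathcal{W}$, handle only the sub-case $\rho(A-BK_k)\to 1$ (that part is fine), and explicitly leave open the sub-case where $\norm{\mathcal{T}(K_k)}_{\mathcal{H}_\infty}\to\gamma$ while the spectral radius stays away from one. That sub-case is not merely "delicate": the divergence you need there is false in general. Take $n=m=q=1$ and write $a_K=a-bK$, $q_K=Q+RK^2$; then \eqref{eq:AREforKPrelimi} reduces to the quadratic $\gamma^{-2}d^2P^2-\bigl(1-a_K^2+\gamma^{-2}d^2q_K\bigr)P+q_K=0$, and the constraint $\norm{\mathcal{T}(K)}_{\mathcal{H}_\infty}<\gamma$ reads $\gamma^{-2}d^2q_K<(1-|a_K|)^2$. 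As $K$ approaches a boundary point where $\gamma^{-2}d^2q_K=(1-|a_K|)^2$ with $|a_K|<1$, the discriminant tends to zero and the stabilizing root converges to the finite value $\gamma^{2}\bigl(1-a_K^2+(1-|a_K|)^2\bigr)/(2d^2)$; there $\gamma^{2}-d^{2}P_K\to\gamma^{2}|a_K|>0$ while the maximizer's closed loop $a_K/(1-\gamma^{-2}d^{2}P_K)$ tends to modulus one. So $\Tr(P_K)$ remains bounded as the $\mathcal{H}_\infty$ margin vanishes --- this is precisely the well-documented lack of coercivity of the mixed $\mathcal{H}_2/\mathcal{H}_\infty$ cost emphasized in \cite{ZhangSICON} --- and a sublevel-set/blow-up argument therefore cannot exclude limit points on that portion of $\partial\mathcal{W}$. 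Your strategy cannot be completed as proposed.

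The paper's proof of closedness takes a different route that does not use coercivity at all: for a convergent sequence $K_s\to \bar K$ in $\mathcal{G}_h$ it uses continuity of $P_K$ (Lemma \ref{lm:continuousPK}) to obtain a limit matrix $\bar P\succeq 0$ with $\Tr(\bar P)\leq\Tr(P^*)+h$, notes that the pair $(\bar K,\bar P)$ satisfies the Riccati equation \eqref{eq:AREforK}, and then invokes the Bounded Real Lemma (Lemma \ref{lm:realbounded}) to conclude $\bar K\in\mathcal{W}$, finishing with Heine--Borel. If you want to salvage your attempt, it is this algebraic characterization of membership in $\mathcal{W}$ --- not divergence of the cost --- that must carry the argument; the real issue is then to verify that the strict conditions in Lemma \ref{lm:realbounded} (positive definiteness of $\gamma^2I_q-D^T\bar PD$ and stability of the associated worst-case closed loop) survive the passage to the limit, which is exactly the degeneration your scalar-type boundary points exhibit, so any complete proof has to engage with the Riccati equation at the limit rather than with coercivity.
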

\begin{proof}
    Firstly, we prove the boundedness of $\mathcal{G}_h$. For any $K \in \mathcal{G}_h$, $P_K$ is bounded. Since $U_K \succeq 0$, it is seen from \eqref{eq:AREforK} that $ K^TR K \preceq P_K$. As $R \succ 0$, $K$ is bounded.

    Next, we prove the closeness of $\mathcal{G}_h$. Let $\{ K_{s}\}_{s=1}^{\infty}$ denote an arbitrary convergent sequence within $\mathcal{G}_h$ and $\lim_{s \to \infty}\norm{K_s - K_{lim}}_F = 0$. Since $P_K$ is continuous with respect to $K$ (Lemma \ref{lm:continuousPK}), we have $\lim_{s \to \infty}P_{K_s} = P_{K_{lim}}$. Since $\Tr(P_{K_{s}}) \leq \Tr(P^*) + h$, it follows that $\Tr(P_{K_{lim}}) \leq \Tr(P^*) + h$. In addition, $P_{K_{lim}} \succeq 0$ is obtained by Lemma \ref{lm:realbounded} and $P_{K_s} \succeq 0$. Since the pair ($K_{lim}, P_{K_{lim}}$) satisfies \eqref{eq:AREforK}, $K_{lim} \in \mathcal{W}$ is obtained by Lemma \ref{lm:realbounded}. As a result, $K_{lim} \in \mathcal{G}_h$ and $\mathcal{G}_h$ is closed. In summary, the compactness of $\mathcal{G}_h$ is demonstrated by Heine–Borel theorem.
\end{proof}

\begin{lemma}\label{lm:EiboundFi}
For any $h \in \mathcal{H}$ and $K \in \mathcal{G}_h$, let $K' := (R+B^TU_KB)^{-1}B^TU_KA$, and $E_{K} := (K' - K)^T(R+B^TU_KB)(K' - K)$. Then, there exists $a(h)>0$, such that 
\begin{align}
\begin{split}
    \norm{P_K - P^*}_F \leq a(h)\norm{E_K}_F. 
\end{split}
\end{align}
\end{lemma}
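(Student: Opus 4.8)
The plan is to reduce the claimed estimate to a \emph{local} bound near $K^*$ together with a compactness argument on the remainder of $\mathcal{G}_h$.

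First I would establish an exact difference identity for $\delta := P_K - P^* \succeq 0$ (nonnegativity from Lemma~\ref{lm:outerloop_converge}). Subtracting the Riccati equation \eqref{eq:AREforK} for $P_K$ from the GARE \eqref{eq:Popt} for $P^*$ and completing the square in the explicit gain dependence — using that $K^* = (R+B^TU^*B)^{-1}B^TU^*A$ is the matrix minimizer of $\tilde{K}\mapsto(A-B\tilde{K})^TU^*(A-B\tilde{K})+\tilde{K}^TR\tilde{K}$ — gives
\[
P_K - P^* = (A-BK)^T(U_K-U^*)(A-BK) + (K-K^*)^T(R+B^TU^*B)(K-K^*).
\]
Substituting the expression for $U_K-U^*$ from Lemma~\ref{lm:UiUoptQuad} (the sum of $(I_n-\gamma^{-2}P^*DD^T)^{-1}\delta(I_n-\gamma^{-2}DD^TP^*)^{-1}$ and a term quadratic in $\delta$) and writing $G := (I_n-\gamma^{-2}DD^TP^*)^{-1}(A-BK)$, this rearranges into a discrete Lyapunov equation
\[
\delta - G^T\delta G = \mathcal{E}_K,\qquad \mathcal{E}_K := G^T\delta D(\gamma^2 I_q - D^TP_KD)^{-1}D^T\delta G + (K-K^*)^T(R+B^TU^*B)(K-K^*),
\]
with $\mathcal{E}_K\succeq0$ since $\gamma^2 I_q - D^TP_KD\succ0$ for $K\in\mathcal{W}$ by Lemma~\ref{lm:realbounded}.

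Next I would prove the local bound. By the identity $A^* = (I_n-\gamma^{-2}DD^TP^*)^{-1}(A-BK^*)$ derived inside the proof of Lemma~\ref{lm:K-Kopt}, $G|_{K=K^*}=A^*$, which is stable; since $G$ depends affinely on $K$ it stays stable on a neighborhood $N$ of $K^*$, so $\delta=\sum_{t\ge0}(G^T)^t\mathcal{E}_K G^t$ and $\|\delta\|_F\le c_N\|\mathcal{E}_K\|_F$ with $c_N$ finite and uniform over $N$ (continuity of the Lyapunov solution in $G$, and of $P_K$ in $K$ by Lemma~\ref{lm:continuousPK}). Over $N$ one has $\|\mathcal{E}_K\|_F\le C_1\|\delta\|_F^2+C_2\|K-K^*\|^2$; since $\|\delta\|_F\to0$ as $K\to K^*$, the first term is absorbed, giving $\|\delta\|_F\le 2c_NC_2\|K-K^*\|^2$ on a smaller neighborhood. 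Then $\|K-K^*\|\le\|K-K'\|+\|K'-K^*\|$, where $\|K'-K^*\|\le C_3\|\delta\|_F$ by Lemma~\ref{lm:K-Kopt} ($C_3$ bounding $\|R^{-1}B^T\Lambda_K^{-T}A^*\|$ near $K^*$), while $E_K\succeq\sgmin(R)(K'-K)^T(K'-K)$ — because $R+B^TU_KB\succeq R$ and $U_K\succ0$ — yields $\|K'-K\|\le\sgmin(R)^{-1/2}\|E_K\|_F^{1/2}$. Squaring, substituting, and absorbing the $\|\delta\|_F^2$ term a second time produces a constant $C_{\mathrm{loc}}$ and a ball $B(K^*,r)$ with $\|P_K-P^*\|_F\le C_{\mathrm{loc}}\|E_K\|_F$ for all $K\in B(K^*,r)\cap\mathcal{W}$.

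Finally, on $\mathcal{G}_h\setminus B(K^*,r)$ — compact as a closed subset of the compact set $\mathcal{G}_h$ (Lemma~\ref{lm:compactGh}) — the ratio $\|P_K-P^*\|_F/\|E_K\|_F$ is continuous: the numerator by Lemma~\ref{lm:continuousPK}, and the denominator because it is continuous and strictly positive there, as $E_K=0$ forces $K=K'$ (since $R+B^TU_KB\succ0$) hence $K=K^*$ by Lemma~\ref{lm:E=0P=0}. So the ratio attains a maximum $C_{\mathrm{far}}(h)$, and setting $a(h):=\max\{C_{\mathrm{loc}},C_{\mathrm{far}}(h)\}$ — together with the trivial case $K=K^*\in\mathcal{G}_h$ (valid since $h\ge0$) — finishes the proof. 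The main obstacle is the middle step: putting the difference identity into Lyapunov form with a positive semidefinite, controllable forcing term, and then handling the fact that $\mathcal{E}_K$ still contains $\delta$ through the quadratic term inherited from Lemma~\ref{lm:UiUoptQuad}, which forces the two successive ``absorb the higher-order term'' arguments and a careful nesting of the neighborhoods so that $c_N$, $C_1$, $C_2$, $C_3$ and $\|(\gamma^2 I_q - D^TP_KD)^{-1}\|$ all remain finite.
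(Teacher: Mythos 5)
Your argument is correct, but it follows a noticeably different route from the paper's. You expand the value difference around $K^*$ with the weight $U^*$, which yields the exact identity $\delta=G^T\delta G+G^T\delta D(\gamma^2I_q-D^TP_KD)^{-1}D^T\delta G+(K-K^*)^T(R+B^TU^*B)(K-K^*)$ with the $K$-dependent matrix $G=(I_n-\gamma^{-2}DD^TP^*)^{-1}(A-BK)$; since $G$ is only guaranteed stable near $K^*$, your case split is in gain space (a ball around $K^*$ versus its complement in $\mathcal{G}_h$), and you must convert $\norm{K-K^*}$ into $\norm{E_K}$ through Lemma \ref{lm:K-Kopt} plus $\norm{K-K'}^2\le\sgmin(R)^{-1}\norm{E_K}_F$, which costs you two successive absorption steps. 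The paper instead expands the Riccati equation for $P_K$ around $K^*$ keeping the weight $U_K$, uses $B^TU_KA=(R+B^TU_KB)K'$ to complete squares so that $E_K$ appears directly in the forcing term, and works with the fixed stable matrix $A^*$; the resulting Lyapunov inequality $\delta\preceq\sum_t(A^*)^{T,t}[E_K+(A^*)^T\delta D(\gamma^2I_q-D^TP_KD)^{-1}D^T\delta A^*](A^*)^t$ is valid on all of $\mathcal{G}_h$, so only the absorption of the quadratic term needs $\norm{\delta}$ small, and the case split is on $\norm{\delta}\lessgtr h_1$ rather than on distance to $K^*$; it also never needs Lemma \ref{lm:K-Kopt}. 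The compactness endgame is essentially identical in both proofs (continuity of $P_K$ and $E_K$, Lemma \ref{lm:E=0P=0}, compactness from Lemma \ref{lm:compactGh}). What your version buys is a clean exact value-difference identity and an elementary reduction; what the paper's buys is a globally valid Lyapunov bound with explicit constants $a_1,a_2,a_3$ and one fewer layer of absorption. Two small points you should make explicit if you write this up: $\Lambda_K$ in Lemma \ref{lm:K-Kopt} must be invertible with uniformly bounded inverse on your neighborhood of $K^*$ (true by continuity of $P_K$ and invertibility of $\Lambda^*$), and the exclusion set $\mathcal{G}_h\setminus B(K^*,r)$ is compact only if $B(K^*,r)$ is taken open (or you otherwise arrange the two regions to overlap), which is a trivial fix.
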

\begin{proof}
It follows from \eqref{eq:AREforKPrelimi} that
\begin{align} \label{eq:outerloopIte_Rewrite}
\begin{split}
    &(A-BK^*)^TU_K(A-BK^*) - P_K + Q + K^TRK \\
    &+ A^TU_KB(K^* - K) + (K^* - K)^TB^TU_KA \\
    &+ K^TB^TU_KBK - (K^*)^TB^TU_KBK^* = 0.
\end{split}
\end{align}
Subtracting \eqref{eq:Popt} from \eqref{eq:outerloopIte_Rewrite}, and considering $B^TU_KA = (R+B^TU_KB)K'$, we have
\begin{align}\label{eq:AoptDiffUiUopt1}
    &(A-BK^*)^T(U_K - U^*)(A-BK^*) - (P_K - P^*) \nonumber\\
    &+ K^T(R+B^TU_KB)K +(K')^T(R+B^TU_KB)(K^* - K) \nonumber\\
    & + (K^* - K)^T(R+B^TU_KB)K' \nonumber\\
    &- (K^*)^T(R+B^TU_KB)K^* = 0.
\end{align}
Completing the squares in \eqref{eq:AoptDiffUiUopt1} yields
\begin{align}\label{eq:AoptDiffUiUopt2}
    &(A-BK^*)^T(U_K - U^*)(A-BK^*) - (P_K - P^*) + E_K \nonumber\\
    &-  (K'-K^*)^T(R+B^TU_KB)(K' - K^*) = 0.
\end{align}
It follows from $(I_n - \gamma^{-2}DD^TP^*)^{-1}(A-BK^*) = (A-BK^*+DL^*) = A^*$ and Lemma \ref{lm:UiUoptQuad} that
\begin{align}\label{eq:AoptDiffUiUopt3}
    &(A^*)^T\Delta P_K(A^*) - \Delta P_K + E_K \nonumber\\
    &-  (K'-K^*)^T(R+B^TU_KB)(K' - K^*)  \\
    &+ (A^*)^T\Delta P_KD(\gamma^2I_q - D^TP_KD)^{-1}D^T \Delta P_K A^*  = 0. \nonumber
\end{align}
To simplify the notation, let $\Delta P_K := P_K - P^*$. According to Lemma \ref{lm:lyapunov}, we have 
\begin{align}\label{eq:AoptDiffUiUopt4}
\begin{split}
    \Delta P_K &\preceq \sum_{k=0}^\infty (A^*)^{T,k} \left[ E_K + (A^*)^T \Delta P_K D \right. \\
    &\left. (\gamma^2 I_q - D^T P_K D)^{-1}D^T \Delta P_K A^* \right](A^*)^{k}.
\end{split}
\end{align}
Taking the trace of \eqref{eq:AoptDiffUiUopt4} and using the cyclic property of trace and trace inequality in \cite[Lemma 1]{Wang1986} yields
\begin{align}\label{eq:AoptDiffUiUopt5}
    &\Tr(\Delta P_K) \le \Tr \left[ \sum_{k=0}^\infty (A^*)^{k}(A^*)^{T,k}E_K\right] + \Tr \left[ \sum_{k=1}^\infty (A^*)^{k} \right. \nonumber\\
    &\left. (A^*)^{T,k} \Delta P_K D(\gamma^2 I_q - D^T P_K D)^{-1}D^T \Delta P_K \right] \nonumber\\
    &\le a_1 \Tr(E_K) + a_2\gamma^{-2}\Tr[\Delta P_K \nonumber\\
    & ( I_n - \gamma^{-2}DD^T P_K )^{-1}DD^T \Delta P_K],
\end{align}
where 
\begin{align}
\begin{split}
    a_1 &= \norm{ \sum_{k=0}^\infty (A^*)^{k}(A^*)^{T,k}} , \\
    a_2 &= \norm{ \sum_{k=1}^\infty (A^*)^{k}(A^*)^{T,k}}.    
\end{split}
\end{align}

By Neumann series, for any $X \in \mathbb{R}^{n \times n}$ with $\norm{X} < 1$, it holds
\begin{align}
    \norm{(I-X)^{-1}} = \norm{\sum_{k=0}^\infty X^k} \le \frac{1}{1-\norm{X}}.
\end{align}
Therefore, for small enough $\Delta P_K$, we have
\begin{align}\label{eq:I-DDP}
    &\norm{( I_n - \gamma^{-2}DD^T P_K )^{-1}} \nonumber\\
    &=\norm{( I_n - \gamma^{-2}DD^T P^* -  \gamma^{-2}DD^T \Delta P_K)^{-1}} \\
    &\le \frac{\norm{(I_n - \gamma^{-2}DD^T P^*)^{-1}}}{1 - a_3\norm{\Delta P_K}}. \nonumber
\end{align}
where
\begin{align}
    a_3 = \gamma^{-2}\norm{(I_n - \gamma^{-2}DD^T P^*)^{-1}}\norm{DD^T}.
\end{align}
Using the trace inequality in \cite[Lemma 1]{Wang1986}, it follows from \eqref{eq:AoptDiffUiUopt5} and \eqref{eq:I-DDP} that
\begin{align}
    &\left(1-\frac{a_2a_3\norm{\Delta P_K}}{1 - a_3\norm{\Delta P_K}}\right)\Tr(\Delta P_K) \le a_1 \Tr(E_K).
\end{align}
Therefore, if 
\begin{align}
    \norm{\Delta P_K} \le \frac{1}{a_3 + 2 a_2 a_3} =: h_1,
\end{align}
it follows from Lemma \ref{lm:normInequ} that
\begin{align}
    \norm{\Delta P_K}_F \le \Tr(\Delta P_K) \le 2a_1 \Tr(E_K) \le 2a_1 \sqrt{n}\norm{E_K}_F .
\end{align}

When $ \norm{\Delta P_K} \ge h_1$, if follows from Lemma \ref{lm:E=0P=0} that $\norm{E_K}_F \neq 0$. Since $E_K$ is continuous with respect to $K$ (Lemma \ref{lm:continuousPK}) and the set $\mathcal{G}_h \cap \{K \in \mathcal{W}| \norm{\Delta P_K} \ge h_1 \}$ is compact (Lemma \ref{lm:compactGh}), there exists $a_4(h) > 0$, such that $\norm{E_K}_F \geq a_4(h)$. Hence, $\norm{\Delta P_K}_F \leq \Tr(\Delta P_K) \le \frac{h}{a_4(h)} \norm{E_K}_F$. By taking $a(h) = \max(2a_1 \sqrt{n},\frac{h}{a_4(h)} )$, we obtain that $\norm{\Delta P_K}_F \leq a(h) \norm{E_K}_F$.

\end{proof}

Now, we are ready to prove Theorem \ref{thm:outerloopGloballinear}.

\textbf{Proof of Theorem \ref{thm:outerloopGloballinear}}.  We can rewrite \eqref{eq:outloop_evalu} as
\begin{align}
\begin{split}
    &A_{i+1}^T U_i A_{i+1} + K_{i}^TB^TU_iBK_{i} - K_{i+1}^TB^TU_iBK_{i+1}  \\
    & + (K_{i+1} - K_{i})^TB^TU_iA + A^TU_iB(K_{i+1} - K_{i}) \\
    &- P_i + Q + K_i^TRK_i = 0.    
\end{split}
\end{align}
Since $(R+B^TU_iB)K_{i+1} = B^TU_iA$ from \eqref{eq:outerloop_update}, by completing the squares, we have
\begin{align}\label{eq:AnextPi}
\begin{split}
    &A_{i+1}^T U_i A_{i+1} - P_i + Q + K_{i+1}^TRK_{i+1}  + E_i  = 0,   
\end{split}
\end{align}
where $E_i = E_{K_i} = (K_{i+1} - K_{i})^T(R+B^TU_iB)(K_{i+1} - K_{i})$. Writing out \eqref{eq:outloop_evalu} for the $(i+1)$th iteration, subtracting it from \eqref{eq:AnextPi}, we can obtain that
\begin{align}\label{eq:(Ui-Ui+1)Ai+1}
\begin{split}
    &A_{i+1}^T (U_i-U_{i+1}) A_{i+1} - (P_i-P_{i+1}) + E_i  = 0.   
\end{split}      
\end{align}
From \eqref{eq:Ui}, the expression of $U_{i+1}$ is derived as
\begin{align}\label{eq:Ui-Ui+1}
\begin{split}
    U_{i+1} & = [I_n + P_{i+1}D(\gamma^2I_q - D^TP_{i+1}D)^{-1}D^T]P_{i+1} \\
    &= [I_n + P_{i+1}DD^T(\gamma^2I_n - P_{i+1}DD^T)^{-1}]P_{i+1} \\
    &= (I_n - \gamma^{-2}P_{i+1}DD^T)^{-1}P_{i+1} \\
    &= (I_n - \gamma^{-2}P_{i+1}DD^T)^{-1}P_{i+1}(I_n - \gamma^{-2}DD^TP_{i+1})(I_n - \gamma^{-2}DD^TP_{i+1})^{-1} \\
    &= (I_n - \gamma^{-2}P_{i+1}DD^T)^{-1}(P_{i} - \gamma^{-2}P_{i+1}DD^TP_{i+1})(I_n - \gamma^{-2}DD^TP_{i+1})^{-1} \\
    & \quad - (I_n - \gamma^{-2}P_{i+1}DD^T)^{-1}(P_{i} - P_{i+1})(I_n - \gamma^{-2}DD^TP_{i+1})^{-1}\\
    &\preceq U_i - (I_n - \gamma^{-2}P_{i+1}DD^T)^{-1}(P_{i} - P_{i+1})(I_n - \gamma^{-2}DD^TP_{i+1})^{-1},
\end{split}
\end{align}
where the last inequality is derived using \cite[Lemma B.1]{zhangarxiv2019}. Combining \eqref{eq:(Ui-Ui+1)Ai+1} and \eqref{eq:Ui-Ui+1}, we have
\begin{align} \label{eq:Ai+1Pdiff}
\begin{split}
    &A_{i+1}^T (I_n - \gamma^{-2}P_{i+1}DD^T)^{-1}(P_{i} - P_{i+1}) \\
    &\quad (I_n - \gamma^{-2}DD^TP_{i+1})^{-1} A_{i+1} - (P_i-P_{i+1})  + E_i \preceq 0.
\end{split}
\end{align}
Considering the expression of $L_{i+1,*}$ in \eqref{eq:LoptForK}, we have
\begin{align}\label{eq:Ai+1andOpt}
    &(I_n - \gamma^{-2}DD^TP_{i+1})^{-1} A_{i+1} \nonumber\\
    &= \left[ I_n + \gamma^{-2}DD^TP_{i+1}(I_n - \gamma^{-2}DD^TP_{i+1})^{-1} \right] A_{i+1} = A_{i+1,*},
\end{align}
As a consequence, \eqref{eq:Ai+1Pdiff} can be rewritten as
\begin{align}\label{eq:Ai+1optPdiff}
\begin{split}
    &A_{i+1,*}^T(P_{i} - P_{i+1})A_{i+1,*} - (P_i-P_{i+1}) + E_i \preceq 0.
\end{split}
\end{align}

By Lemma \ref{lm:outerloop_converge}, it follows that $\{P_{i}\}$ is monotonically decreasing and $\Tr(P_i) \leq \Tr(P_1)$ for any $i \in \mathbb{Z}_+$. Hence, given $K_1 \in \mathcal{G}_h$, $K_i \in \mathcal{G}_h$ for any $i \in \mathbb{Z}_+$. Following \eqref{eq:Ai+1optPdiff} and Lemma \ref{lm:lyapunov}, we have
\begin{align}\label{eq:DiffPiLowerBound1}
    (P_i-P_{i+1}) \succeq \sum_{t=0}^{\infty} (A_{i+1,*}^T)^tE_iA_{i+1,*}^t
\end{align}
Subtracting $P^*$ from both sides of \eqref{eq:DiffPiLowerBound1} and taking trace of \eqref{eq:DiffPiLowerBound1}, we have
\begin{align}\label{eq:DiffPiLowerBound2}
\begin{split}
    &\Tr(P_{i+1}-P^*) \le \Tr(P_{i}-P^*) - \Tr(E_i) \\
    &\le \Tr(P_{i}-P^*) - \norm{E_i}_F,
\end{split}
\end{align}
where the last inequality comes from Lemma \ref{lm:normInequ}. Considering Lemmas \ref{lm:normInequ} and \ref{lm:EiboundFi}, \eqref{eq:DiffPiLowerBound2} can be further derived as
\begin{align}
    \Tr(P_{i+1}-P^*) \leq \left(1-\frac{1}{\sqrt{n}a(h)}\right)\Tr(P_{i}-P^*).
\end{align}
The theorem is thus proved by setting $\alpha(h) = 1-\frac{1}{\sqrt{n}a(h)}$. From Lemma \ref{lm:outerloop_converge}, $P^* \preceq P_{i+1} \preceq P_{i}$, and thus $0 \le \Tr(P_{i+1}-P^*) \le \Tr(P_{i}-P^*)$. As a result, $\alpha(h) \in [0, 1)$.

\stepcounter{appidx}
\setcounter{equation}{0}

\section*{Appendix \Alph{appidx}: Proof of Theorem \ref{thm:innerloop_globallinear}}

Given an admissible feedback $K \in \mathcal{W}$, and starting from $L_{K,1}=0$, the inner-loop iteration is 
\begin{subequations}
\begin{align}
    &A_{K,j}^TP_{K,j}A_{K,j} - P_{K,j} + Q_K - \gamma^{2}L_{K,j}^T L_{K,j} = 0 \label{eq:innerloopEvaluforK}\\
    &L_{K,j+1} = (\gamma^2 I_q - D^T P_{K,j} D)^{-1}D^TP_{K,j}A_K \label{eq:innerloopUpdateforK}
\end{align}
\end{subequations}
Recall that $Q_K = Q + K^TRK$, $A_K = A-BK$ and $A_{K,j}=A-BK+L_{K,j}$. The following lemma states the monotonic convergence of the inner-loop iteration. 
\begin{lemma}\label{lm:innerloop convergence}
Suppose that the inner loop starts from the initial condition $L_{K,1}=0$. For any $K \in \mathcal{W}$, and $j \in \mathbb{Z}_+$, the following statements hold
\begin{enumerate}
    \item $A_{K,j}:=A-BK+L_{K,j}$ is stable;
    \item $P_K \succeq \cdots \succeq P_{K,j+1} \succeq P_{K,j} \succeq \cdots \succeq P_{K,1}$;
    \item $\lim_{j \to \infty} \norm{P_{K,j} - P_K}_F = 0$ and $\lim_{j \to \infty} \norm{L_{K,j} - L_{K,*}}_F = 0$.
\end{enumerate}
\end{lemma}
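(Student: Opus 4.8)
The plan is to prove the three claims by a single induction on the inner-loop index $j$, carrying the joint hypothesis ``$A_{K,j}$ is stable and $0\preceq P_{K,j}\preceq P_K$''; claims 1) and 2) then fall out of the induction, and 3) follows from monotone convergence together with uniqueness of the stabilizing solution of \eqref{eq:AREforK}. \textbf{Base case ($j=1$).} Since $K\in\mathcal{W}$, $A_{K,1}=A-BK$ is stable, so by Lemma \ref{lm:lyapunov} the evaluation equation \eqref{eq:innerloopEvaluforK} with $L_{K,1}=0$ has the unique solution $P_{K,1}=\sum_{t\ge0}(A_K^T)^tQ_KA_K^t\succeq0$. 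Rewriting the ARE \eqref{eq:AREforK} as $A_K^TP_KA_K-P_K+Q_K+A_K^T(U_K-P_K)A_K=0$ with $U_K-P_K=P_KD(\gamma^2I_q-D^TP_KD)^{-1}D^TP_K\succeq0$, the comparison statement of Lemma \ref{lm:lyapunov} gives $P_{K,1}\preceq P_K$.

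\textbf{Inductive step --- monotonicity and the upper bound.} Assume $A_{K,j}$ is stable and $0\preceq P_{K,j}\preceq P_K$. Then $\gamma^2I_q-D^TP_{K,j}D\succeq\gamma^2I_q-D^TP_KD\succ0$ by Lemma \ref{lm:realbounded}, so $L_{K,j+1}$ in \eqref{eq:innerloopUpdateforK} is well defined. The workhorse is the completing-the-square identity, valid for every $L$,
\begin{align*}
&(A_K+DL)^TP_{K,j}(A_K+DL)-P_{K,j}+Q_K-\gamma^2L^TL\\
&\qquad=\mathcal{R}_j-(L-L_{K,j+1})^T(\gamma^2I_q-D^TP_{K,j}D)(L-L_{K,j+1}),
\end{align*}
where $\mathcal{R}_j:=A_K^TU_{K,j}A_K-P_{K,j}+Q_K$ and $U_{K,j}:=P_{K,j}+P_{K,j}D(\gamma^2I_q-D^TP_{K,j}D)^{-1}D^TP_{K,j}$. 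Taking $L=L_{K,j}$ makes the left-hand side vanish (it is exactly \eqref{eq:innerloopEvaluforK}), so $\mathcal{R}_j=(L_{K,j}-L_{K,j+1})^T(\gamma^2I_q-D^TP_{K,j}D)(L_{K,j}-L_{K,j+1})\succeq0$. The same identity written at $P_K$ (with $\mathcal{R}$ replaced by the ARE residual $0$ and $L_{K,j+1}$ replaced by $L_{K,*}$), evaluated at $L=L_{K,j+1}$, reads
\begin{align*}
&A_{K,j+1}^TP_KA_{K,j+1}-P_K+Q_K-\gamma^2L_{K,j+1}^TL_{K,j+1}\\
&\qquad=-(L_{K,j+1}-L_{K,*})^T(\gamma^2I_q-D^TP_KD)(L_{K,j+1}-L_{K,*})\preceq0.
\end{align*}
\emph{Granting the stability of $A_{K,j+1}$} (next paragraph): subtracting \eqref{eq:innerloopEvaluforK} for index $j+1$ from the first identity at $L=L_{K,j+1}$ gives $A_{K,j+1}^T(P_{K,j}-P_{K,j+1})A_{K,j+1}-(P_{K,j}-P_{K,j+1})=\mathcal{R}_j\succeq0$, hence $P_{K,j+1}\succeq P_{K,j}$ by Lemma \ref{lm:lyapunov}; subtracting it instead from the $P_K$-identity gives $A_{K,j+1}^T(P_K-P_{K,j+1})A_{K,j+1}-(P_K-P_{K,j+1})\preceq0$, hence $P_{K,j+1}\preceq P_K$. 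This closes the induction for 1)--2).

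\textbf{Stability of $A_{K,j+1}$ --- the main obstacle.} Because the adversary's stage cost $Q_K-\gamma^2L^TL$ is indefinite, the stability of $A_{K,j+1}=A_K+DL_{K,j+1}=(I_n-\gamma^{-2}DD^TP_{K,j})^{-1}A_K$ cannot be read off from a naive Lyapunov-function argument. The route I would take is to subtract the two completing-the-square identities above at $L=L_{K,j+1}$, obtaining $A_{K,j+1}^T(P_K-P_{K,j})A_{K,j+1}-(P_K-P_{K,j})+N_j=0$ with $N_j:=\mathcal{R}_j+(L_{K,j+1}-L_{K,*})^T(\gamma^2I_q-D^TP_KD)(L_{K,j+1}-L_{K,*})\succeq0$; thus the PSD matrix $P_K-P_{K,j}$ solves a Lyapunov equation for $A_{K,j+1}$, and one promotes this to $\rho(A_{K,j+1})<1$ using that $Q_K\succ0$ enters $\mathcal{R}_j$, via the standard ``PSD solution of a Lyapunov equation plus detectability of $(A_{K,j+1},N_j^{1/2})$ implies stability'' lemma. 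This is precisely the stability-preservation step for the adversary's policy iteration established in \cite{zhangarxiv2019,ZhangNeuIps2020,ZhangSICON}, which I would invoke or reproduce; it is the only genuinely delicate point, since everything else is Lyapunov comparison.

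\textbf{Convergence.} The sequence $\{P_{K,j}\}$ is nondecreasing in the positive-semidefinite order and bounded above by $P_K$, hence converges in Frobenius norm to some $P_{K,\infty}=P_{K,\infty}^T$ with $0\preceq P_{K,\infty}\preceq P_K$; in particular $\gamma^2I_q-D^TP_{K,\infty}D\succ0$, so $L_{K,j+1}$ depends continuously on $P_{K,j}$ near the limit and $L_{K,j}\to L_{K,\infty}:=(\gamma^2I_q-D^TP_{K,\infty}D)^{-1}D^TP_{K,\infty}A_K$. Passing to the limit in $\mathcal{R}_j=(L_{K,j}-L_{K,j+1})^T(\gamma^2I_q-D^TP_{K,j}D)(L_{K,j}-L_{K,j+1})$ gives $\mathcal{R}_\infty=0$, i.e. $P_{K,\infty}$ solves the ARE \eqref{eq:AREforK}; by uniqueness of the stabilizing solution (Lemma \ref{lm:realbounded}) together with $P_{K,\infty}\preceq P_K$ and the standard limiting argument for monotone policy iteration on game Riccati equations, $P_{K,\infty}=P_K$, so $\|P_{K,j}-P_K\|_F\to0$. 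Continuity of the update \eqref{eq:innerloopUpdateforK} then gives $\|L_{K,j}-L_{K,*}\|_F\to0$, proving 3).
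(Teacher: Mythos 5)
Your overall skeleton---induction on $j$, the completing-the-square identity, Lyapunov comparison for the monotonicity $P_{K,j+1}\succeq P_{K,j}$ and the upper bound $P_{K,j}\preceq P_K$, and the monotone-limit-plus-uniqueness argument for statement 3)---is the same as the paper's, and the identities you write down are correct. The genuine gap is exactly the step you flag and then defer: stability of $A_{K,j+1}$, i.e.\ statement 1), which is the substance of the lemma. The mechanism you sketch does not work. In your equation $A_{K,j+1}^T(P_K-P_{K,j})A_{K,j+1}-(P_K-P_{K,j})+N_j=0$, the forcing term $N_j=\mathcal{R}_j+(L_{K,j+1}-L_{K,*})^T(\gamma^2I_q-D^TP_KD)(L_{K,j+1}-L_{K,*})$ is a sum of two quadratic forms in $q$-dimensional gain differences: it has rank at most $2q$, it shrinks to zero as the iteration converges, and the parenthetical claim that ``$Q_K\succ0$ enters $\mathcal{R}_j$'' gives no leverage, because you yourself show $\mathcal{R}_j=(L_{K,j}-L_{K,j+1})^T(\gamma^2I_q-D^TP_{K,j}D)(L_{K,j}-L_{K,j+1})$, which carries no lower bound inherited from $Q_K$. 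Consequently detectability of $(A_{K,j+1},N_j^{1/2})$ is not available (and $P_K-P_{K,j}$ may be singular), so the ``PSD Lyapunov solution plus detectability'' lemma cannot be invoked; and appealing to the stability-preservation results in the cited references amounts to assuming the statement being proved.

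The paper closes this step using the $P_K$-identity you already derived, namely \eqref{eq:Aij+1Pi}: $A_{K,j+1}^TP_KA_{K,j+1}-P_K+M_j=0$ with $M_j:=Q_K-\gamma^2L_{K,j+1}^TL_{K,j+1}+(L_{K,*}-L_{K,j+1})^T(\gamma^2I_q-D^TP_KD)(L_{K,*}-L_{K,j+1})$, and shows $M_j\succ0$ by two auxiliary facts: (i) rewriting \eqref{eq:AREforK} along $A_{K,*}$ as \eqref{eq:outloop_evalu_rewrite} and using $P_K\succ0$ together with stability of $A_{K,*}$ (Lemma \ref{lm:realbounded}) gives $Q_K-\gamma^2L_{K,*}^TL_{K,*}\succ0$, i.e.\ \eqref{eq:Q_ibound}; (ii) the induction hypothesis $P_{K,j}\preceq P_K$ gives, by comparing \eqref{eq:innerloopUpdateforK} with \eqref{eq:LoptForK}, $L_{K,j+1}^TL_{K,j+1}\preceq L_{K,*}^TL_{K,*}$, hence $Q_K-\gamma^2L_{K,j+1}^TL_{K,j+1}\succ0$. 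With $M_j\succ0$ and $P_K\succ0$, the ``naive'' Lyapunov argument you dismissed certifies $\rho(A_{K,j+1})<1$ directly, with no detectability hypothesis; the indefiniteness of the stage cost is absorbed precisely by facts (i)--(ii). If you replace your detectability step by this argument, the rest of your proposal (base case, monotonicity, upper bound, and the limiting argument for statement 3)) goes through essentially as in the paper.
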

\begin{proof}
Considering the equalities $U_K = (I_n - \gamma^{-2}P_KDD^T)^{-1}P_K$, $A_{K,*} = (I_n - \gamma^{-2}DD^TP_K)^{-1}A_K$ from \eqref{eq:Ai+1andOpt}, and $L_{K,*} = (\gamma^2I_q - D^TP_KD)^{-1}D^TP_K(A-BK)$ from \eqref{eq:LoptForK}, we can rewrite \eqref{eq:AREforK} as
\begin{align}\label{eq:outloop_evalu_rewrite}
    A_{K,*}^TP_{K}A_{K,*} - P_{K} + Q_K - \gamma^{2}L_{K,*}^TL_{K,*} = 0.
\end{align}
Since $P_K \succ 0$ and $\norm{A_{K,*}} < 1$, $P_K - A_{K,*}^TP_KA_{K,*} \succ 0$. Therefore, from \eqref{eq:outloop_evalu_rewrite}, we have
\begin{align}\label{eq:Q_ibound}
     Q_K - \gamma^2 L_{K,*}^TL_{K,*} \succ 0.
\end{align}
Considering the equality $(\gamma^{2}I_q - D^TP_{K}D)L_{K,*} = D^TP_{K}A_K$ in \eqref{eq:LoptForK} and completing the squares in $(L_{K,*} - L_{K,j})^T(\gamma^2I_q - D^TP_KD)(L_{K,*} - L_{K,j})$, we can rewrite \eqref{eq:outloop_evalu_rewrite} as
\begin{align}\label{eq:outloop_evalu_rewrite2}
\begin{split}
    &A_{K,j}^TP_{K}A_{K,j} - P_{K} + Q_K - \gamma^2 L_{K,j}^TL_{K,j} +  (L_{K,*} - L_{K,j})^T(\gamma^2I_q - D^TP_KD)(L_{K,*} - L_{K,j}) = 0.
\end{split}
\end{align}
Subtracting \eqref{eq:innerloopEvaluforK} from \eqref{eq:outloop_evalu_rewrite2} and completing the squares yield
\begin{align} \label{eq:AijPdiff}
    A_{K,j}^T(P_{K} - P_{K,j})A_{K,j} - (P_{K} - P_{K,j}) + (L_{K,*} - L_{K,j})^T(\gamma^2I_q - D^TP_KD)(L_{K,*} - L_{K,j}) = 0.
\end{align}

We prove the first statement by induction. When $j=1$, the inner loop starts from $L_{K,j} = 0$. Since $K \in \mathcal{W}$, $A_{K,1} = A-BK_K +DL_{K,1} = A-BK_K$ is stable. Now, assume $A_{K,j}$ is stable for some $j \geq 1$. Since $A_{K,j}$ is stable and $(L_{K,*} - L_{K,j})^T(\gamma^2I_q - D^TP_KD)(L_{K,*} - L_{K,j}) \succeq 0$, Lemma \ref{lm:lyapunov} and \eqref{eq:AijPdiff} result in $P_K \succeq P_{K,j} \succ 0$. In addition, following the derivation of \eqref{eq:outloop_evalu_rewrite2},  \eqref{eq:outloop_evalu_rewrite} can be rewritten as
\begin{align}\label{eq:Aij+1Pi}
\begin{split}
    &A_{K,j+1}^TP_KA_{K,j+1} - P_K + Q_{K} - \gamma^2 L_{K,j+1}^TL_{K,j+1} \\
    &+ (L_{K,*} - L_{K,j+1})^T(\gamma^2I_q - D^TP_KD)(L_{K,*} - L_{K,j+1}) = 0.    
\end{split}
\end{align}
As $P_K \succeq P_{K,j}$, $(\gamma^2 I_q - D^TP_{K}D)^{-1} \succeq (\gamma^2 I_q - D^TP_{K,j}D)^{-1} $. As a consequence, $L_{K,j}^TL_{K,j} \preceq L_{K,*}^TL_{K,*}$ is obtained by comparing \eqref{eq:LoptForK} with \eqref{eq:innerloopUpdateforK}. Then, $Q_K - \gamma^2L_{K,j+1}^TL_{K,j+1} \succ 0$ follows from \eqref{eq:Q_ibound}. From \eqref{eq:Aij+1Pi} and \cite[Theorem 8.4]{book_Hespanha}, we see that $A_{K,j+1}$ is stable. \textit{A fortiori}, the first statement holds.

For the $(j+1)$th iteration, the policy evaluation step in \eqref{eq:innerloopEvaluforK} is 
\begin{align}\label{eq:Inneri+1}
    A_{K,j+1}^TP_{K,j+1} A_{K,j+1} - P_{K,j+1} + Q_K - \gamma^{2}L_{K,j+1}^TL_{K,j+1} = 0.
\end{align}
Subtracting \eqref{eq:innerloopEvaluforK} from \eqref{eq:Inneri+1}, considering $(\gamma^2I_q - D^TP_{K,j}D)L_{K,j+1} = D^TP_{K,j}A_K$ in \eqref{eq:innerloop_update}, and completing the squares, we have
\begin{align}\label{eq:Pijdiff}
\begin{split}
    &A_{K,j+1}^T(P_{K,j+1} - P_{K,j})A_{K,j+1} - (P_{K,j+1} - P_{K,j}) \\
    &+  (L_{K,j+1} - L_{K,j})^T(\gamma^2I_q - D^TP_{K,j}D)(L_{K,j+1} - L_{K,j}) = 0.
\end{split}
\end{align}
As $K \in \mathcal{W}$, by Lemma \ref{lm:realbounded}, $\gamma^2I_q - D^TP_{K}D \succ 0$. By the fact that $P_K \succeq P_{K,j}$, we have $\gamma^2I_q - D^TP_{K,j}D \succ 0$. Since $A_{K,j+1}$ is stable, by Lemma \ref{lm:lyapunov}, $P_{K,j+1} - P_{K,j} \succeq 0$. Hence, the second statement holds.

From the second state, the sequence $\{ P_{K,j}\}_{j=1}^{\infty}$ is monotonically increasing and bounded by $P_K$. Therefore, there exists a constant matrix $P_{K,\infty} \succeq 0$ such that $\lim_{j \to \infty} P_{K,j} = P_{K,\infty}$. It can be verified that $ P_{K,\infty}$ satisfies \eqref{eq:AREforK}. Due to the uniqueness of the positive-definite solution of \eqref{eq:AREforK}, we have $P_{K,\infty} = P_K$ and the third statement holds.
\end{proof}

\begin{lemma}\label{eq:Ebound_inn}
Given $K \in \mathcal{W}$, let $L$ be admissible, i.e. $A-BK+DL$ is stable, and recall from \eqref{eq:GNInner} that $L' = (\gamma^2I_q - D^TP_{K,L}D)^{-1}D^TP_{K,L}A_K$, where $P_{K,L}$ is defined in \eqref{eq:LyaforL} .  Define $E_{K,L} := (L - L')^T(\gamma^2I_q - D^TP_{K,L}D)(L - L') $. Then, there exists a constant $b(K) > 0$, such that 
\begin{align}
    \Tr(P_K - P_{K,L}) \leq b(K)\norm{E_{K,L}},
\end{align}
and 
\begin{align}
    b(K) := \Tr \left[\sum_{t=0}^{\infty}(A_{K,*})^t (A_{K,*}^T)^t\right].
\end{align}
\end{lemma}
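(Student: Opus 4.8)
The strategy mirrors that of Lemma~\ref{lm:EiboundFi} for the outer loop: I would express both the Riccati equation for $P_K$ and the Lyapunov equation \eqref{eq:LyaforL} for $P_{K,L}$ relative to the \emph{same} closed-loop matrix $A_{K,*} := A-BK+DL_{K,*}$, and then subtract. Since $K\in\mathcal{W}$, the Bounded Real Lemma (Lemma~\ref{lm:realbounded}) guarantees $A_{K,*}$ is stable, so $\Sigma_{K,*}:=\sum_{t=0}^{\infty}(A_{K,*})^t(A_{K,*}^T)^t$ converges and $b(K)=\Tr(\Sigma_{K,*})\in(0,\infty)$. For $P_K$ the required form is already available: it is precisely \eqref{eq:outloop_evalu_rewrite}, namely $A_{K,*}^TP_KA_{K,*}-P_K+Q_K-\gamma^2L_{K,*}^TL_{K,*}=0$. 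I would also record at the outset that $P_{K,L}\preceq P_K$; this follows from the computation behind \eqref{eq:AijPdiff} carried out with the arbitrary admissible $L$ in place of $L_{K,j}$, together with Lemma~\ref{lm:lyapunov}. Consequently $\gamma^2I_q-D^TP_{K,L}D\succeq\gamma^2I_q-D^TP_KD\succ0$, so $L'$ is well defined and both $E_{K,L}\succeq0$ and $F_{K,L}:=(L'-L_{K,*})^T(\gamma^2I_q-D^TP_{K,L}D)(L'-L_{K,*})\succeq0$.

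The key step is to recast \eqref{eq:LyaforL} in the $A_{K,*}$ frame. Writing $A-BK+DL=A_{K,*}+D(L-L_{K,*})$, expanding the quadratic term $(A-BK+DL)^TP_{K,L}(A-BK+DL)$ of \eqref{eq:LyaforL}, and simplifying the cross terms via the identity $D^TP_{K,L}(A-BK)=(\gamma^2I_q-D^TP_{K,L}D)L'$ coming from the definition of $L'$ in \eqref{eq:GNInner} — which gives $D^TP_{K,L}A_{K,*}=\gamma^2L'-D^TP_{K,L}D(L'-L_{K,*})$ — one is left with terms linear and quadratic in $(L-L_{K,*})$ and $(L'-L_{K,*})$. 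Completing the square in $(L-L')$ with weight $\gamma^2I_q-D^TP_{K,L}D$ collapses these into $-E_{K,L}+F_{K,L}$, so that \eqref{eq:LyaforL} becomes
\begin{align*}
    A_{K,*}^TP_{K,L}A_{K,*}-P_{K,L}+Q_K-\gamma^2L_{K,*}^TL_{K,*}-E_{K,L}+F_{K,L}=0.
\end{align*}

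Subtracting this from \eqref{eq:outloop_evalu_rewrite} yields the clean linear equation
\begin{align*}
    A_{K,*}^T(P_K-P_{K,L})A_{K,*}-(P_K-P_{K,L})+E_{K,L}-F_{K,L}=0,
\end{align*}
which, in contrast to the outer-loop case, carries \emph{no} quadratic-in-$(P_K-P_{K,L})$ term precisely because \eqref{eq:LyaforL} is linear in $P_{K,L}$. Since $A_{K,*}$ is stable, Lemma~\ref{lm:lyapunov} gives $P_K-P_{K,L}=\sum_{t=0}^{\infty}(A_{K,*}^T)^t(E_{K,L}-F_{K,L})A_{K,*}^t\preceq\sum_{t=0}^{\infty}(A_{K,*}^T)^tE_{K,L}A_{K,*}^t$, where the inequality simply drops the nonnegative contribution of $F_{K,L}$. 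Taking traces, using the cyclic property to regroup the $A_{K,*}$ factors, and then the elementary bound $\Tr(E_{K,L}\Sigma_{K,*})\leq\norm{E_{K,L}}\Tr(\Sigma_{K,*})$ valid for positive semidefinite matrices, I obtain $\Tr(P_K-P_{K,L})\leq\norm{E_{K,L}}\Tr(\Sigma_{K,*})=b(K)\norm{E_{K,L}}$, which is the claim.

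The only genuinely delicate point is the change of frame for $P_{K,L}$: one must expand about $A_{K,*}$ rather than about $A-BK+DL$ and keep careful track of the linear terms so that, after completing the square, the residual is exactly $-E_{K,L}+F_{K,L}$ with the weighting matrix $\gamma^2I_q-D^TP_{K,L}D$ that appears in the statement's definition of $E_{K,L}$, rather than $\gamma^2I_q-D^TP_KD$. Once that identity is secured, the rest is routine manipulation of discrete Lyapunov equations and traces.
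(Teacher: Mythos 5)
Your proposal is correct and follows essentially the same route as the paper's proof: rewrite the Riccati equation for $P_K$ and the Lyapunov equation \eqref{eq:LyaforL} in the $A_{K,*}$ frame via completion of squares with weight $\gamma^2 I_q - D^T P_{K,L} D$, subtract to get $A_{K,*}^T(P_K-P_{K,L})A_{K,*}-(P_K-P_{K,L})+E_{K,L}-F_{K,L}=0$, drop the positive semidefinite term $F_{K,L}$, and conclude by stability of $A_{K,*}$, Lemma \ref{lm:lyapunov}, the cyclic property of the trace, and the bound $\Tr(E_{K,L}\Sigma_{K,*})\le\norm{E_{K,L}}\Tr(\Sigma_{K,*})$. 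Your additional remark that $P_{K,L}\preceq P_K$ (hence $\gamma^2 I_q - D^TP_{K,L}D\succ 0$) is a sound touch that the paper leaves implicit, but it does not change the argument.
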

\begin{proof}
Considering $(\gamma^2I_q - D^TP_{K,L}D)L' = D^TP_{K,L}A_K$ in \eqref{eq:GNInner} and completing the squares, \eqref{eq:LyaforL} can be rewritten as
\begin{align}
\begin{split}
    &A_{K,*}^TP_{K,L}A_{K,*} - P_{K,L} + Q_K - (L - L'
    )^T(\gamma^2I_q - D^TP_{K,L}D)(L - L') \\
    &+ (L_{K,*} - L')^T(\gamma^2I_q - D^TP_{K,L}D)(L_{K,*} - L') - \gamma^{2}L_{K,*}^TL_{K,*} = 0
\end{split}
\end{align}
Subtracting it from \eqref{eq:AREforK} results in 
\begin{align}
\begin{split}
    &A_{K,*}^T(P_K - P_{K,L})A_{K,*} - (P_K - P_{K,L}) + E_{K,L} \\
    &- (L_{K,*} - L')^T(\gamma^2I_q - D^TP_{K,L}D)(L_{K,*} - L')  = 0.
\end{split}
\end{align}
As $A_{K,*}$ is stable, by Lemma \ref{lm:lyapunov} we have
\begin{align}
    P_K - P_{K,L} \preceq \sum_{t=0}^{\infty}(A_{K,*}^T)^t E_{K,L}(A_{K,*})^t.
\end{align}
The following inequality can be derived by the cyclic property of trace and \cite[Lemma 1]{Wang1986}
\begin{align}
    \Tr(P_K - P_{K,L}) \leq \Tr \left[E_{K,L}\sum_{t=0}^{\infty}(A_{K,*})^t (A_{K,*}^T)^t\right] \leq \norm{E_{K,L}} \underbrace{ \Tr \left[\sum_{t=0}^{\infty}(A_{K,*})^t (A_{K,*}^T)^t\right]}_{b(K)} .
\end{align}
\end{proof}
Now, we are ready to prove Theorem \ref{thm:innerloop_globallinear}.

\textbf{Proof of Theorem \ref{thm:innerloop_globallinear}}.  Let $E_{K,j} = E_{K,L_j} = (L_{K,j+1} - L_{K,j})^T(\gamma^2I_q - D^TP_{K,j}D)(L_{K,j+1} - L_{K,j})$. 
By \eqref{eq:Pijdiff} and Lemma \ref{lm:lyapunov}, we have
\begin{align}
    \Tr(P_{K,j+1} - P_{K,j}) = \Tr \left[\sum_{t=0}^{\infty}(A_{K,j+1}^T)^t E_{K,j} (A_{K,j+1})^t \right].
\end{align}
Consequently, 
\begin{align}\label{eq:innerlinear}
     &\Tr(P_{K} - P_{K,j+1}) \leq \Tr(P_{K} - P_{K,j}) - \Tr(E_{K,j}) \\
     & \leq \Tr(P_{K} - P_{K,j}) - \norm{E_{K,j}} \leq \underbrace{(1-\frac{1}{ b(K)})}_{\beta(K)}\Tr(P_{K} - P_{K,j}) \nonumber
\end{align}
where the last inequality comes from Lemma \ref{eq:Ebound_inn}. Since $P_K \succeq P_{K,j}$, $\Tr(P_{K} - P_{K,j+1}) \geq 0$. Hence, $\beta(K) \in [0,1)$.

\stepcounter{appidx}
\setcounter{equation}{0}

\section*{Appendix \Alph{appidx}: Proof of Theorem \ref{thm:uniformConvergence}}

Let $M_K := \sum_{t=0}^{\infty}(A_{K,*})^t (A_{K,*}^T)^t$, where $A_{K,*} = A-BK +DL_{K,*}$ and $L_{K,*} = (\gamma^2I_q - D^TP_KD)^{-1}D^TP_K(A-BK)$. Since $K \in \mathcal{W}$, $A_{K,*}$ is stable by Lemma \ref{lm:realbounded}. By Lemma \ref{lm:lyapunov}, $M_K$ is the unique solution to 
\begin{align}
    A_{K,*} M_K A_{K,*}^T - M_K + I_n = 0.
\end{align}
Since $P_K$ is continuous in $K \in \mathcal{W}$ (Lemma \ref{lm:continuousPK}), $A_{K,*}$ is continuous in $K \in \mathcal{W}$. Hence, $M_K$ is continuous in $K \in \mathcal{W}$, and $b(K) = \Tr(M_K)$ is continuous in $K$. In addition, the set $\mathcal{G}_h:=\{K \in \mathcal{W}| \Tr(P_K) \leq \Tr(P^*) +h\}$ is compact (Lemma \ref{lm:compactGh}). Therefore, the upperbound of $b(K) = \Tr(M_K)$ exists on $K \in \mathcal{G}_h$, that is $b(K) \leq \bar{b}(h)$ for any $K \in \mathcal{G}_h$. Consequently,  for any $K \in \mathcal{G}_h$, $\beta(K) \leq \bar{\beta}(h)$, which is defined as
\begin{align}
    \bar{\beta}(h):= 1-\frac{1}{\bar{b}(h)}.
\end{align}

Given $K \in \mathcal{G}_h$, following Lemma \ref{lm:normInequ} and Theorem \ref{thm:innerloop_globallinear}, we have
\begin{align}
    &\norm{P_K - P_{K,j}}_F \leq \bar{\beta}^{j-1}(h)\Tr(P_K - P_{K,1}) \nonumber\\
    &\leq \bar{\beta}^{j-1}(h)\Tr(P_K) \leq \bar{\beta}^{j-1}(h)(\Tr(P^*)+h).
\end{align} 
Therefore, for any $K \in \mathcal{G}_h$ and $\epsilon>0$, if $j \geq \bar{j}(h,\epsilon) = \log_{\bar{\beta}}^{\frac{\epsilon}{\Tr(P^*)+h}}+1$, $\norm{P_{K,j} - P_K}_F \leq \epsilon$. Noting that $\bar{j}(h,\epsilon)$ is independent of $K$, the uniform convergence of the dual-loop algorithm follows readily.

\stepcounter{appidx}
\setcounter{equation}{0}

\section*{Appendix \Alph{appidx}: Proof of Theorem \ref{thm:outerISS}}

Recall that $\mathcal{G}_h := \{K \in \mathcal{W}| \Tr(P_K) \leq \Tr(P^*) + h \}$. The following lemma ensures that for $K \in \mathcal{G}_h$ and small perturbation $\Delta K \in \mathbb{R}^{m \times n}$, the updated policy becomes $K'+\Delta K$ that still belongs to $\mathcal{G}_h$. In other words, $\mathcal{G}_h$ is an invariant set under small disturbance.
\begin{lemma}\label{lm:staywithinSet}
 Let $K \in \mathcal{G}_h$, $K' := (R+B^TU_KB)^{-1}B^TU_KA$, and $\hat{K}' := K' + \Delta K $. Then, there exists $d(h) > 0$, such that $\hat{K}' \in \mathcal{G}_h$ if $\norm{\Delta K}_F \le d(h)$. 
\end{lemma}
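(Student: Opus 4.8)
The plan is to turn the strict contraction of Theorem~\ref{thm:outerloopGloballinear} into a ``buffer'' inside $\mathcal{G}_h$, and then use a uniform continuity argument on a suitable compact set to absorb the perturbation $\Delta K$ into that buffer.

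First I would show the exact update sits strictly inside $\mathcal{G}_h$. Since $K \in \mathcal{G}_h \subseteq \mathcal{W}$, Lemma~\ref{lm:outerloop_converge} (applied with $K_1 = K$, so that $K_2 = K'$) gives $K' \in \mathcal{W}$, and Theorem~\ref{thm:outerloopGloballinear} gives
\begin{align}
\Tr(P_{K'} - P^*) \le \alpha(h)\,\Tr(P_K - P^*) \le \alpha(h)\,h ,
\end{align}
with $\alpha(h) \in [0,1)$. Hence $\Tr(P_{K'}) \le \Tr(P^*) + \alpha(h) h$, i.e. there is a strictly positive slack $(1 - \alpha(h))h$ between $\Tr(P_{K'})$ and the threshold $\Tr(P^*) + h$ defining $\mathcal{G}_h$.

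Next I would set up the compact sets. The map $\Phi : K \mapsto (R + B^T U_K B)^{-1} B^T U_K A$ is continuous on $\mathcal{W}$: $P_K$ is continuous there by Lemma~\ref{lm:continuousPK}, $U_K$ is continuous in $P_K$ since $\gamma^2 I_q - D^T P_K D \succ 0$ on $\mathcal{W}$, and $R + B^T U_K B \succ 0$. Since $\mathcal{G}_h$ is compact (Lemma~\ref{lm:compactGh}), $\mathcal{S}_h := \Phi(\mathcal{G}_h)$ is a compact subset of $\mathcal{W}$, and by the previous paragraph $\Tr(P_{K'}) \le \Tr(P^*) + \alpha(h)h$ for every $K' \in \mathcal{S}_h$. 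The strict LMI characterization in Lemma~\ref{lm:realbounded}(4) is preserved under small perturbations of the gain (keep $W$ fixed, replace $V$ by $\hat{K}W$), so $\mathcal{W}$ is open; hence there is $\delta_1 > 0$ such that the set $\mathcal{S}_h^{\delta_1}$ of all $K$ within Frobenius distance $\delta_1$ of $\mathcal{S}_h$ is still a compact subset of $\mathcal{W}$. On the compact set $\mathcal{S}_h^{\delta_1}$ the map $K \mapsto P_K$ is uniformly continuous (Lemma~\ref{lm:continuousPK}), so for every $\eta > 0$ there is $\delta(\eta) \in (0, \delta_1]$ with $\norm{P_{K_a} - P_{K_b}}_F \le \eta$ whenever $K_a, K_b \in \mathcal{S}_h^{\delta_1}$ and $\norm{K_a - K_b}_F \le \delta(\eta)$.

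Finally I would choose the radius. Put $\eta_0 := (1-\alpha(h))h/\sqrt{n}$ and $d(h) := \delta(\eta_0) > 0$. If $\norm{\Delta K}_F \le d(h)$, then $\hat{K}' = K' + \Delta K$ lies in $\mathcal{S}_h^{\delta_1} \subseteq \mathcal{W}$, and, using $|\Tr(M)| \le \sqrt{n}\,\norm{M}_F$,
\begin{align}
\Tr(P_{\hat{K}'}) \le \Tr(P_{K'}) + \sqrt{n}\,\norm{P_{\hat{K}'} - P_{K'}}_F \le \Tr(P^*) + \alpha(h)h + \sqrt{n}\,\eta_0 = \Tr(P^*) + h ,
\end{align}
so $\hat{K}' \in \mathcal{G}_h$. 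The main obstacle is uniformity: the single radius $d(h)$ must serve every $K \in \mathcal{G}_h$, which is exactly why the argument is routed through the compact sets $\mathcal{S}_h \subseteq \mathcal{S}_h^{\delta_1}$ rather than argued pointwise; and it is the contraction factor $\alpha(h) < 1$ of Theorem~\ref{thm:outerloopGloballinear} that makes the buffer $(1-\alpha(h))h$, and therefore $d(h)$, strictly positive.
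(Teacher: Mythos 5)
Your proof is correct, but it takes a genuinely different route from the paper. The paper proceeds quantitatively: it supposes $\hat{K}'\in\mathcal{W}$, writes the Riccati equation at $\hat{K}'$, completes squares to obtain a Lyapunov-type inequality involving $E_K$ and $\Delta K^T(R+B^TU_KB)\Delta K$, and then combines Lemma~\ref{lm:EiboundFi} with trace estimates to get the explicit bound $\Tr(\hat{P}_{K'}-P^*)\le\bigl(1-\tfrac{1}{\sqrt{n}a(h)}\bigr)\Tr(P_K-P^*)+\bar{c}(h)d_1(h)\norm{\Delta K}_F^2$, from which $d(h)$ is read off in closed form (with a final contradiction argument to justify $\hat{K}'\in\mathcal{W}$). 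You instead make a soft argument: the one-step contraction of Theorem~\ref{thm:outerloopGloballinear} creates a buffer $(1-\alpha(h))h$, the image $\mathcal{S}_h=\Phi(\mathcal{G}_h)$ is compact, openness of $\mathcal{W}$ (cleanly justified via the strict LMI of Lemma~\ref{lm:realbounded}) yields a compact fattening $\mathcal{S}_h^{\delta_1}\subset\mathcal{W}$, and Heine--Cantor uniform continuity of $K\mapsto P_K$ on that fattening absorbs the perturbation into the buffer. Your route is more elementary, and it is actually tidier on one point where the paper is delicate: you guarantee $\hat{K}'\in\mathcal{W}$ (hence that $P_{\hat{K}'}$ exists) \emph{before} ever using it, whereas the paper first assumes it and closes the loop by contradiction. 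What you give up is exactly what the paper's computation buys: an explicit, in-principle computable radius $d(h)$ and, more importantly, the intermediate inequality \eqref{eq:PKPk'trace2}, which is reused verbatim in the proof of Theorem~\ref{thm:outerISS} to produce the $\mathcal{KL}$/$\mathcal{K}$ bounds of small-disturbance ISS; your $d(h)=\delta(\eta_0)$ is non-constructive and would not feed that later argument. (Both your buffer $(1-\alpha(h))h$ and the paper's $d(h)$ degenerate when $h=0$, so that edge case is shared and not a defect of your proof specifically.)
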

\begin{proof}

Since $K \in \mathcal{G}_h$, it follows from Lemma \ref{lm:outerloop_converge} that $K' \in \mathcal{W}$. Suppose that $\hat{K}' \in \mathcal{W}$. According to Lemma \ref{lm:realbounded}, there exists a unique solution $\hat{P}_{K'} = \hat{P}_{K'}^T \succ 0$ to
\begin{subequations}
\begin{align}
    &(A-B\hat{K}')^T \hat{U}_{K'}(A-B\hat{K}') - \hat{P}_{K'} \nonumber\\
    &+ Q + (\hat{K}' )^TR\hat{K}' = 0, \label{eq:UK'}\\
    &\hat{U}_{K'} = \hat{P}_{K'} + \hat{P}_{K'}D(\gamma^2I_q - D^T\hat{P}_{K'}D)^{-1}D^T\hat{P}_{K'}.
\end{align}
\end{subequations}
In addition, we can rewrite \eqref{eq:AREforK} as
\begin{align}\label{eq:UKatK'}
\begin{split}
    &(A-B\hat{K}')^TU_K(A-B\hat{K}') - P_K + Q \\
    &+ (\hat{K}' - K)^TB^TU_KA + A^TU_KB(\hat{K}' - K) \\
    &+ K^T(R+B^TU_KB)K - (\hat{K}')^TB^TU_KB\hat{K}' = 0.  
\end{split}
\end{align}
Noticing $(R+B^TU_KB)K'=B^TU_KA$ and $\hat{K}' = K' + \Delta K$, \eqref{eq:UKatK'} implies
\begin{align}\label{eq:UKatK'2}
    &(A-B\hat{K}')^TU_K(A-B\hat{K}') - P_K + Q \nonumber\\
    &+ ({K}' - K)^T(R+B^TU_KB)K'  \\
    &+ (K')^T(R+B^TU_KB)({K}' - K)  \nonumber\\
    & + \Delta K^T(R+B^TU_KB)K' + (K')^T(R+B^TU_KB)\Delta K   \nonumber\\
    & + K^T(R+B^TU_KB)K - (\hat{K}')^TB^TU_KB\hat{K}' = 0.  \nonumber
\end{align}
Subtracting \eqref{eq:UK'} from \eqref{eq:UKatK'2} and completing the squares yield
\begin{align}\label{eq:UKatK'3}
    &(A-B\hat{K}')^T(U_K - \hat{U}_{K'})(A-B\hat{K}') \\
    &- (P_K - \hat{P}_{K'}) + E_K - \Delta{K}^T(R + B^TU_KB)\Delta{K} = 0.  \nonumber
\end{align}
From  Lemma \ref{lm:EiboundFi}, $E_K = (K'-K)^T(R+B^TU_KB)(K'-K)$. Using \cite[Lemma B.1]{zhangarxiv2019} and following the derivation of \eqref{eq:Ai+1optPdiff}, we have
\begin{align}\label{eq:PKPK'}
\begin{split}
    &(A-B\hat{K}'+D{L}_{\hat{K}',*})^T(P_K - \hat{P}_{K'})(A-B\hat{K}'+D{L}_{\hat{K}',*})  \\
    &- (P_K - \hat{P}_{K'})+ E_K - \Delta{K}^T(R + B^TU_KB)\Delta{K} \preceq 0,  
\end{split}
\end{align}
where ${L}_{\hat{K}',*} = (\gamma^2I_q - D^T\hat{P}_{K'}D)^{-1}D^T\hat{P}_{K'}(A-B\hat{K}')$. Since $\hat{K}'\in \mathcal{W}$, by Lemma \ref{lm:realbounded}, $(A-B\hat{K}'+D{L}_{\hat{K}',*})$ is stable. Using Lemma \ref{lm:lyapunov}, we have
\begin{align}\label{eq:PKPk'trace}
    &\Tr(P_K - \hat{P}_{K'}) \geq \Tr\left\{ \sum_{t=0}^\infty\left\{ (A-B\hat{K}'+D{L}_{\hat{K}',*})^{T,t}[E_K \right. \right. \nonumber\\
    &\left.\left. -\Delta{K}^T(R + B^TU_KB)\Delta{K}](A-B\hat{K}'+D{L}_{\hat{K}',*})^t \right\} \right\} 
\end{align}
Let 
\begin{align}
\begin{split}
c(\hat{K}') =& \|\sum_{t=0}^\infty(A-B\hat{K}'+D{L}_{\hat{K}',*})^t \\ 
&(A-B\hat{K}'+D{L}_{\hat{K}',*})^{T,t}\|
\end{split}
\end{align}
and $d_1(h) = \sup_{K\in\mathcal{G}_h}\norm{R+B^TU_KB}$. Then, by Lemmas \ref{lm:normInequ} and \ref{lm:EiboundFi}, and \cite[Lemma 1]{Wang1986}, \eqref{eq:PKPk'trace} implies
\begin{align}\label{eq:PKPk'trace2}
\begin{split}
    &\Tr(\hat{P}_{K'} - P^*) \leq (1-\frac{1}{\sqrt{n}{a}(h)})\Tr({P}_{K} - P^*) \\
    &+c(\hat{K}') d_1(h)\norm{\Delta K}_F^2.
\end{split}
\end{align}
Therefore, if $\norm{\Delta K}_F^2 \leq \frac{h}{c(\hat{K}') d_1(h)\sqrt{n}{a}(h)}$, it is ensured that $\Tr(\hat{P}_{K'} - P^*) \leq h$, i.e. $\hat{K}' \in \mathcal{G}_h$. Let $\bar{c}(h) = \sup_{K \in \mathcal{G}_h }c(K)$. Since $P_K$ is continuous in $K$ (Lemma \ref{lm:continuousPK}) and $L_{K,*}$ defined in \eqref{eq:LoptForK} is continuous in $P_K$ and $K$,  $c(K)$ is continuous with respect to $K$ and $\bar{c}(h) < \infty$. Therefore, if 
\begin{align}
    \norm{\Delta K}_F \leq \left(\frac{h}{\bar{c}(h) d_1(h)\sqrt{n}{a}(h)}\right)^{\frac{1}{2}} =: d(h),
\end{align}
it is ensured that $\hat{K}' \in \mathcal{G}_h$. In other words, $\mathcal{B}(K',d(h)) = \{K \in \mathbb{R}^{m \times n}| \norm{K - K'}_F \le d(h)\} \subset \mathcal{G}_h$.

Next, we prove that $\hat{K}' \in \mathcal{W}$ by contradiction. If $\hat{K}' \notin \mathcal{W}$, it follows that $\hat{K}' \notin \mathcal{B}(K',d(h))$. Hence, $\norm{\Delta K} > d(h)$, which contradicts with the condition $\norm{\Delta K} < d(h)$.

\end{proof}

Now, we prove Theorem \ref{thm:outerISS} and Corollary \ref{cor:dual-loopaccu}.

\textbf{Proof of Theorem \ref{thm:outerISS}.}
From Lemma \ref{lm:staywithinSet} and given an initial admissible policy $\hat{K}_1 \in \mathcal{G}_h$, it is seen that if $\norm{\Delta K}_\infty \leq d(h)$, $\hat{K}_i < \mathcal{G}_{h}$ for any $i \in \mathbb{Z}_+$. In \eqref{eq:PKPk'trace2}, considering $\hat{P}_i$ and $\hat{P}_{i+1}$ as $P_K$ and $\hat{P}_{K'}$, respectively,  we have
\begin{align}
\begin{split}
    & \Tr(\hat{P}_{i+1} - P^*) \leq (1-\frac{1}{\sqrt{n}{a}(h)})\Tr(\hat{P}_{i} - P^*) \\
    &+ \bar{c}(h)d_1(h)\norm{\Delta K_{i+1}}_F^2.
\end{split}
\end{align}
Repeating the above inequality from $i=1$ yields
\begin{align}
    &\Tr(\hat{P}_{i+1} - P^*) \leq (1-\frac{1}{\sqrt{n}{a}(h)})^i\Tr(\hat{P}_{1} - P^*)  \nonumber\\
    &  + \sqrt{n}{a}(h) \bar{c}(h)d_1(h)\norm{\Delta K}_\infty^2 
\end{align}
From Lemma \ref{lm:normInequ}, it follows that 
\begin{align}
\begin{split}
    &\norm{\hat{P}_{i} - P^*}_F \leq (1-\frac{1}{\sqrt{n}{a}(h)})^{i-1}\sqrt{n} \norm{\hat{P}_{1} - P^*}_F \\
    &+ \sqrt{n}{a}(h) \bar{c}(h)d_1(h)\norm{\Delta K}_\infty^2.
\end{split}
\end{align}
Thus, $\kappa_1(\cdot,\cdot)$ defined by  $\kappa_1(\norm{\hat{P}_{1} - P^*}_F,i) := (1-\frac{1}{\sqrt{n}{a}(h)})^{i-1}\sqrt{n} \norm{\hat{P}_{1} - P^*}_F$ is a $\mathcal{KL}$-function, and $\xi_1(\cdot)$ defined by $\xi_1(\norm{\Delta K}_\infty) = \sqrt{n}{a}(h) \bar{c}(h)d_1(h)\norm{\Delta K}_\infty^2$ is a $\mathcal{K}$-function. Therefore, the inexact outer-loop iteration is small-disturbance ISS.

\textbf{Proof of Corollary \ref{cor:dual-loopaccu}.}
For each outer-loop iteration of Algorithm \ref{alg:IteAlg_model}, $P_{i,\Bar{j}}$ instead of $P_i$ is used to update the policy. This leads to the disturbance at each iteration
\begin{align}
\begin{split}
    \Delta K_{i+1} & = (R+B^TU_{i,\Bar{j}}B)^{-1}B^TU_{i,\Bar{j}}A \\
    &-(R+B^TU_iB)^{-1}B^TU_iA.
\end{split}
\end{align}
As $K' = (R+B^TU_KB)^{-1}B^TU_KA$ is continuously differentiable in $U_K$, and $U_K$ is continuously differentiable in $P_K$, $K'$ is Lipschitz continuous in $P_K \in \{P\succ 0| \Tr(P) \leq \Tr(P^*)+h\}$. Consequently, there exists $d_2(h)>0$, 
\begin{align}\label{eq:DeltaKLipchi}
     \norm{\Delta K_{i+1}}_F \le d_2(h) \norm{P_i - P_{i,\Bar{j}}}_F.
\end{align}

By Theorem \ref{thm:outerISS},  for any $\epsilon>0$, there exist $d_3(h,\epsilon)>0$ and $\bar{i} \in \mathbb{Z}_+$, such that, if $K_1 \in \mathcal{G}_h$ and $\norm{\Delta K}_\infty \leq d_3(h,\epsilon)$, $K_i \in \mathcal{G}_h$ for all $i \in \mathbb{Z}_+$ and 
\begin{align}\label{eq:PbariPopt}
    \norm{P_{\bar{i}} - P^*}_F \leq (1/2)\epsilon. 
\end{align}
By Theorem \ref{thm:uniformConvergence}, there exists $\bar{j}(h,\epsilon) \in \mathbb{Z}_+$, such that for any $i \in \mathbb{Z}_+$
\begin{align}\label{eq:PijclosetoPi}
    \norm{P_i - P_{i,\Bar{j}}}_F \leq \min[{d_3}/{d_2},(1/2)\epsilon].
\end{align}
Therefore, \eqref{eq:DeltaKLipchi} and \eqref{eq:PijclosetoPi} imply $\norm{\Delta K}_\infty \leq d_3(h,\epsilon)$. By norm's triangle inequality, \eqref{eq:PbariPopt} and \eqref{eq:PijclosetoPi}, we have
\begin{align}
\begin{split}
     &\norm{P_{\bar{i},\bar{j}} - P^*}_F\leq \\
     & \norm{P_{\bar{i},\bar{j}} -P_{\bar{i}}}_F + \norm{P_{\bar{i}} - P^*}_F \leq \epsilon.
\end{split}
\end{align}

\stepcounter{appidx}
\setcounter{equation}{0}

\section*{Appendix \Alph{appidx}: Proof of Theorem \ref{thm:innerISS}}

The following lemma ensures the stability of the closed-loop system with the feedback gain $\hat{L}_{i,j}$ generated from the inexact inner loop. 
\begin{lemma}\label{lm:innerloop_smalldistur}
Given $\hat{K}_i \in \mathcal{W}$, there exists a constant $e(\hat{K}_i)>0$, such that $\hat{A}_{i,j} = A-B\hat{K}_i + D\hat{L}_{i,j}$ is stable for all $j \in \mathbb{Z}_+$, as long as $\norm{\Delta L_i}_\infty < e(\hat{K}_i)$.
\end{lemma}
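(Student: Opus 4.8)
The plan is to fix $K=\hat K_i\in\mathcal W$, abbreviate $Q_K:=Q+K^TRK$, let $P_K\succ0$ be the stabilizing solution of \eqref{eq:AREforKPrelimi}, $L_{K,*}$ as in \eqref{eq:LoptForK}, and $\mathcal L(P):=(\gamma^2I_q-D^TPD)^{-1}D^TP(A-BK)$, and to prove by induction on $j$ the stronger claim that, whenever $\|\Delta L_i\|_\infty<e(K)$ for a suitably small $e(K)>0$, for every $j\in\mathbb Z_+$ both $\hat A_{i,j}=A-BK+D\hat L_{i,j}$ is stable and $\hat P_{i,j}\in\mathcal S:=\{P\in\mathbb S^n\mid 0\preceq P\preceq P_K\}$. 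The base case is immediate: $\hat L_{i,1}=0$, so $\hat A_{i,1}=A-BK$ is stable and $\hat P_{i,1}=P_{K,1}$, and Lemma \ref{lm:innerloop convergence} gives $0\preceq P_{K,1}\preceq P_K$. The point of tracking membership in the compact order-interval $\mathcal S$ is that it serves as a disturbance-invariant set, which is exactly what keeps the per-step errors from accumulating across the unbounded inner index $j$ and makes $e(K)$ independent of $j$.

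The algebraic tool is the identity obtained from \eqref{eq:outloop_evalu_rewrite2} by replacing $L_{K,j}$ with an arbitrary $L\in\mathbb R^{q\times n}$ (the derivation in Appendix C completes the square in \eqref{eq:outloop_evalu_rewrite} and uses nothing special about $L_{K,j}$):
\begin{align*}
&(A-BK+DL)^TP_K(A-BK+DL)-P_K \\
&\quad +Q_K-\gamma^2L^TL+(L_{K,*}-L)^T(\gamma^2I_q-D^TP_KD)(L_{K,*}-L)=0 .
\end{align*}
Since $K\in\mathcal W$, Lemma \ref{lm:realbounded} gives $\gamma^2I_q-D^TP_KD\succ0$, so the last term is always PSD, and \eqref{eq:Q_ibound} gives $Q_K-\gamma^2L_{K,*}^TL_{K,*}\succ0$. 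On $\mathcal S$ I would record two uniform bounds: because $D^TPD\preceq D^TP_KD\prec\gamma^2I_q$ there, $\mathcal L$ is continuous on the compact set $\mathcal S$, hence $\|\mathcal L(P)\|\le L_{\max}$ on $\mathcal S$; and, by the monotonicity observation used in the proof of Lemma \ref{lm:innerloop convergence} (which only relies on $P\preceq P_K$), $Q_K-\gamma^2\mathcal L(P)^T\mathcal L(P)\succ0$ for all $P\in\mathcal S$, so by continuity and compactness $Q_K-\gamma^2\mathcal L(P)^T\mathcal L(P)\succeq 2c_0I_n$ for some $c_0>0$. Then fix $e(K)>0$ so small that $\gamma^2\bigl(2L_{\max}e(K)+e(K)^2\bigr)<c_0$.

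For the inductive step, assume $\hat A_{i,j}$ is stable and $\hat P_{i,j}\in\mathcal S$. Then $\mathcal L(\hat P_{i,j})$ is well defined, \eqref{eq:innerloopInext_update} reads $\hat L_{i,j+1}=\mathcal L(\hat P_{i,j})+\Delta L_{i,j+1}$ with $\|\Delta L_{i,j+1}\|\le\|\Delta L_{i,j+1}\|_F\le\|\Delta L_i\|_\infty<e(K)$, and expanding the quadratic gives $\gamma^2\hat L_{i,j+1}^T\hat L_{i,j+1}\preceq\gamma^2\mathcal L(\hat P_{i,j})^T\mathcal L(\hat P_{i,j})+\gamma^2\bigl(2L_{\max}e(K)+e(K)^2\bigr)I_n$, whence $Q_K-\gamma^2\hat L_{i,j+1}^T\hat L_{i,j+1}\succeq c_0I_n\succ0$. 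Plugging $L=\hat L_{i,j+1}$ (so $A-BK+DL=\hat A_{i,j+1}$) into the identity above, the two terms following $-P_K$ sum to a strictly positive definite matrix while $P_K\succ0$, so the standard converse Lyapunov theorem (\cite[Theorem 8.4]{book_Hespanha}, cf.\ Lemma \ref{lm:lyapunov}) forces $\hat A_{i,j+1}$ to be stable; hence $\hat P_{i,j+1}$ is the well-defined unique solution of \eqref{eq:innerloopInext_eval}, and $\hat P_{i,j+1}\succeq0$ by Lemma \ref{lm:lyapunov} since $Q_K-\gamma^2\hat L_{i,j+1}^T\hat L_{i,j+1}\succeq0$. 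Finally, subtracting \eqref{eq:innerloopInext_eval} at index $j+1$ from the identity with $L=\hat L_{i,j+1}$ yields
\begin{align*}
&\hat A_{i,j+1}^T(P_K-\hat P_{i,j+1})\hat A_{i,j+1}-(P_K-\hat P_{i,j+1}) \\
&\quad +(L_{K,*}-\hat L_{i,j+1})^T(\gamma^2I_q-D^TP_KD)(L_{K,*}-\hat L_{i,j+1})=0 ,
\end{align*}
and since the last term is PSD and $\hat A_{i,j+1}$ is stable, Lemma \ref{lm:lyapunov} gives $P_K-\hat P_{i,j+1}\succeq0$, i.e.\ $\hat P_{i,j+1}\in\mathcal S$, closing the induction.

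I expect the only genuinely delicate point to be the uniform positivity $Q_K-\gamma^2\mathcal L(P)^T\mathcal L(P)\succeq 2c_0I_n$ over the whole of $\mathcal S$, i.e.\ the monotonicity of $P\mapsto\mathcal L(P)^T\mathcal L(P)$ on $\{0\preceq P\preceq P_K\}$; this is invoked in the proof of Lemma \ref{lm:innerloop convergence} but merits a careful, if routine, justification along the lines of the monotonicity results for $U_P$ in Appendix B and \cite{zhangarxiv2019}. If one preferred to avoid it, the alternative would be a perturbation/Gronwall-type estimate showing the inexact iterates stay in a fixed neighbourhood of the bounded, monotone exact trajectory $\{P_{K,j}\}$, which is substantially messier precisely because the disturbances a priori accumulate with $j$; the invariant-set formulation above is what keeps the argument short and $j$-uniform.
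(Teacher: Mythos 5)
Your argument is essentially the paper's own proof: an induction on $j$ in which stability of $\hat A_{i,j}$ yields $0\preceq\hat P_{i,j}\preceq \hat P_i$, the Riccati equation for $\hat P_i$ is rewritten by completing the square at the perturbed updated gain, the perturbation is absorbed using a bound on the exact update that is uniform in $j$ (uniform precisely because $\hat P_{i,j}\preceq\hat P_i$) together with the strict margin $\hat Q_i-\gamma^2\hat L_{i,*}^T\hat L_{i,*}\succ0$, and stability of $\hat A_{i,j+1}$ follows from the Lyapunov criterion of \cite[Theorem 8.4]{book_Hespanha}; your compactness-based constants $L_{\max}$ and $c_0$ merely replace the paper's explicit $e_1,e_2,e_3$ and its quadratic-root threshold $e(\hat K_i)$. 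The one step you flag as delicate---the monotonicity giving $\mathcal{L}(P)^T\mathcal{L}(P)\preceq L_{K,*}^TL_{K,*}$ for $0\preceq P\preceq P_K$---is exactly the step the paper itself invokes (in Lemma \ref{lm:innerloop convergence} and again in this proof), so your proposal is faithful to the paper's route rather than a genuinely different one.
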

\begin{proof}
This lemma is proven by induction. To simplify the notation, the following variables are defined to denote the inner-loop update without disturbance
\begin{align}\label{eq:LUpdate}
\begin{split}
     &\tilde{L}_{i,j+1} := (\gamma^2 I_q - D^T\hat{P}_{i,j}D)^{-1}D^T\hat{P}_{i,j}\hat{A}_i, \\
    &\hat{A}_i = A - B \hat{K}_i.   
\end{split}
\end{align}
Then, $\hat{L}_{i,j+1} = \tilde{L}_{i,j+1} + \Delta L_{i,j+1}$. Since $\hat{K}_i \in \mathcal{W}$ and $\hat{L}_{i,1}=0$, $\hat{A}_{i,1} = A - B\hat{K}_i + D\hat{L}_{i,1}$ is stable by Lemma \ref{lm:realbounded}. By induction, assume that $\hat{A}_{i,j}=A - B\hat{K}_i + D\hat{L}_{i,j}$ is stable for some $j\in \mathbb{Z}_+$. Considering $\hat{U}_i = (I_n - \gamma^{-2}\hat{P}_iDD^T)^{-1}\hat{P}_i$, $\hat{L}_{i,*} = (\gamma^2I_q - D^T\hat{P}_iD)^{-1}D^T\hat{P}_i(A-B\hat{K}_i)$, and $(A-B\hat{K}_i+D\hat{L}_{i,*}) = (I_n - \gamma^{-2}DD^T\hat{P}_i)^{-1}(A-B\hat{K}_i)$, we can rewrite \eqref{eq:outloopInext_evalu} as
\begin{align}\label{eq:AREforKi1}
\begin{split}
     &(A-B\hat{K}_i+D\hat{L}_{i,*})^T\hat{P}_i(A-B\hat{K}_i+D\hat{L}_{i,*}) \\
    &- \hat{P}_i + \hat{Q}_i - \gamma^{2}\hat{L}_{i,*}^T\hat{L}_{i,*} = 0.   
\end{split}
\end{align}
Subtracting the $j$th iteration of \eqref{eq:innerloopInext_eval} from \eqref{eq:AREforKi1} and completing the squares, we have
\begin{align} \label{eq:hatAijPdiff}
\begin{split}
    &\hat{A}_{i,j}^T(\hat{P}_{i} - \hat{P}_{i,j})\hat{A}_{i,j} - (\hat{P}_{i} - \hat{P}_{i,j}) + \\
    &(\hat{L}_{i,*} - \hat{L}_{i,j})^T(\gamma^2I_q - D^T\hat{P}_iD)(\hat{L}_{i,*} - \hat{L}_{i,j}) = 0.    
\end{split}
\end{align}
Since $ \hat{A}_{i,j}$ is stable, from Lemma \ref{lm:lyapunov}, $\hat{P}_i \succeq \hat{P}_{i,j}$, where $\hat{P}_i$ is from \eqref{eq:outloopInext_evalu}. 
By completing the squares, \eqref{eq:AREforKi1} implies
\begin{align}\label{eq:hatAj+1}
    & 0=\hat{A}_{i,j+1}^T\hat{P}_{i}\hat{A}_{i,j+1} - \hat{P}_{i} + \hat{Q}_i - \gamma^{2}\tilde{L}_{i,j+1}^T\tilde{L}_{i,j+1}-\hat{\Omega}_{i,j+1}  \nonumber\\
    &+(\tilde{L}_{i,j+1} - \hat{L}_{i,*})^T(\gamma^2I_q - D^T\hat{P}_{i}D)(\tilde{L}_{i,j+1} - \hat{L}_{i,*}), 
\end{align}
where
\begin{align}
    & \hat{\Omega}_{i,j+1} =  \Delta L_{i,j+1}^T D^T\hat{P}_iD \tilde{L}_{i,j+1} + \tilde{L}_{i,j+1}^T D^T\hat{P}_iD \Delta L_{i,j+1} \nonumber\\
    &+ \Delta{L}_{i,j+1}^T D^T\hat{P}_iD \Delta L_{i,j+1} \\
    &+ \hat{A}_i^T \hat{P_i}D\Delta{L}_{i,j+1} + \Delta{L}_{i,j+1}^TD^T\hat{P_i}\hat{A}_i. \nonumber   
\end{align}
Since $\hat{P}_{i} \succeq \hat{P}_{i,j}$, it follows that $\hat{L}^T_{i,*}\hat{L}_{i,*} \succeq \tilde{L}^T_{i,j+1}\tilde{L}_{i,j+1}$ and $\norm{\hat{L}_{i,*}} \geq \norm{\tilde{L}_{i,j+1}}$. As a consequence,  
\begin{align}
    &\norm{\hat{\Omega}_{i,j+1}} \leq e_1(\hat{K}_i)\norm{\Delta L_{i,j+1}}   + e_2(\hat{K}_i)\norm{\Delta L_{i,j+1}}^2.    
\end{align}
where
\begin{align}
    e_1(\hat{K}_i) &= (2\norm{D^T\hat{P}_iD}\norm{\hat{L}_{i,*}} + 2\norm{D^T\hat{P}_i\hat{A}_i}), \nonumber\\
    e_2(\hat{K}_i) &= \norm{D^T\hat{P}_iD}. \nonumber
\end{align}
Following Lemma \ref{lm:realbounded}, we know that $\hat{P}_i \succ 0$ and $\hat{A}_{i,*} = A-B\hat{K}_i + D\hat{L}_{i,*}$ is stable. Therefore, by Lemma \ref{lm:lyapunov} and \eqref{eq:AREforKi1}, $ \hat{Q}_i - \gamma^2\hat{L}^T_{i,*}\hat{L}_{i,*} \succ 0$, and $e_3(\hat{K}_i) := \sgmin( \hat{Q}_i - \gamma^2\hat{L}^T_{i,*}\hat{L}_{i,*})>0$. Hence, if $\norm{\Delta L_{i,j+1}}$ satisfies
\begin{align}
    \norm{\Delta L_{i,j+1}} \leq \frac{-e_1+\sqrt{e_1^2 + 2e_2e_3}}{2e_2} := e(\hat{K}_i),
\end{align}
we have
\begin{align}\label{eq:Qibound}
    \hat{Q}_i - \gamma^2\hat{L}^T_{i,*}\hat{L}_{i,*} - \hat{\Omega}_{i,j+1} \succ \frac{1}{2}e_3(\hat{K}_i) I_n. 
\end{align}
As $\hat{L}^T_{i,*}\hat{L}_{i,*} \succeq \tilde{L}^T_{i,j+1}\tilde{L}_{i,j+1}$, $\hat{Q}_i - \gamma^2\tilde{L}^T_{i,j+1}\tilde{L}_{i,j+1} - \hat{\Omega}_{i,j+1} \succ 0$. $\hat{A}_{i,j+1}$ is stable as a result of \eqref{eq:hatAj+1} and \cite[Theorem 8.4]{book_Hespanha}. Therefore, for any $j \in \mathbb{Z}_+$, $\hat{A}_{i,j}$ is stable.

\end{proof}

\textbf{Proof of Theorem \ref{thm:innerISS}}. 
For the $j$th iteration, \eqref{eq:innerloopInext_eval} can be rewritten as
\begin{align}\label{eq:hatAj+1Pj}
     &\hat{A}_{i,j+1} ^T\hat{P}_{i,j}\hat{A}_{i,j+1} - \hat{P}_{i,j} + \hat{Q}_i  - \gamma^2\tilde{L}_{i,j+1}^T\tilde{L}_{i,j+1} \nonumber \\
     &- (\hat{L}_{i,j} - \tilde{L}_{i,j+1})^T(\gamma^2I_q - D^T\hat{P}_{i,j}D)(\hat{L}_{i,j} - \tilde{L}_{i,j+1}) \nonumber \\
     &  - \gamma^{2} \Delta L_{i,j+1}^T \tilde{L}_{i,j+1} - \gamma^{2} \tilde{L}_{i,j+1}^T\Delta L_{i,j+1} 
 \nonumber\\
     &- \Delta L_{i,j+1}^TD^T \hat{P}_{i,j}D \Delta{L}_{i,j+1} = 0.
\end{align}
Subtracting \eqref{eq:hatAj+1Pj} from the $(j+1)$th iteration of \eqref{eq:innerloopInext} yields
\begin{align}\label{eq:hatAj+1Pdiff}
     &\hat{A}_{i,j+1} ^T(\hat{P}_{i,j+1} - \hat{P}_{i,j})\hat{A}_{i,j+1} - (\hat{P}_{i,j+1} - \hat{P}_{i,j})  \nonumber\\
     &+ \underbrace{(\hat{L}_{i,j} - \tilde{L}_{i,j+1})^T(\gamma^2I_q - D^T\hat{P}_{i,j}D)(\hat{L}_{i,j} - \tilde{L}_{i,j+1})}_{\hat{E}_{i,j}}  \nonumber\\
     &  - \Delta L_{i,j+1}^T ( \gamma^2I_q - D^T \hat{P}_{i,j}D ) \Delta{L}_{i,j+1} = 0. 
\end{align}

When $\norm{\Delta L_{i}}_\infty < e(\hat{K}_i)$, by Lemma \ref{lm:innerloop_smalldistur}, $\hat{A}_{i,j+1}$ is stable. Following Lemma \ref{lm:lyapunov}, we have
\begin{align}
\begin{split}
     &\Tr(\hat{P}_{i,j+1} - \hat{P}_{i,j}) = \Tr \left\{\sum_{t=0}^{\infty} (\hat{A}_{i,j+1}^T)^t \left[\hat{E}_{i,j}  \right.\right. \\
    & \left.\left. - \Delta L_{i,j+1}^T( \gamma^2I_q - D^T \hat{P}_{i,j}D ) \Delta{L}_{i,j+1} \right](\hat{A}_{i,j+1})^t  \right\}.    
\end{split}
\end{align}
Consequently, by Lemma \ref{eq:Ebound_inn},
\begin{align}\label{eq:Phatdiffbound}
\begin{split}
    &\Tr(\hat{P}_i - \hat{P}_{i,j+1} ) \leq (1-\frac{1}{b(\hat{K}_i)})\Tr(\hat{P}_i - \hat{P}_{i,j}) \\
    &  +  \gamma^2 \norm{\Delta L_{i}}^2_\infty \Tr \left[\sum_{t=0}^{\infty} (\hat{A}_{i,j+1}^T)^t  (\hat{A}_{i,j+1})^t  \right].  
\end{split}
\end{align}

Let $\hat{M}_{i,j+1} := \sum_{t=0}^{\infty} (\hat{A}_{i,j+1}^T)^t  (\hat{A}_{i,j+1})^t$, and by Lemma \ref{lm:lyapunov}, $\hat{M}_{i,j+1}$ satisfies
\begin{align}\label{eq:Mij}
    \hat{A}_{i,j+1}^T\hat{M}_{i,j+1}\hat{A}_{i,j+1} - \hat{M}_{i,j+1} + I_n = 0.
\end{align}
Multiplying both sides of \eqref{eq:Mij} by $\frac{1}{2}e_3(\hat{K}_i)$ and subtracting it from \eqref{eq:hatAj+1}, we have
\begin{align}
     &\hat{A}_{i,j+1}^T(\hat{P}_{i} - \frac{1}{2}e_3 \hat{M}_{i,j+1} )\hat{A}_{i,j+1} - (\hat{P}_{i} - \frac{1}{2}e_3 \hat{M}_{i,j+1} )  \nonumber\\
     &+ \hat{Q}_i  - \gamma^{2}\tilde{L}_{i,j+1}^T\tilde{L}_{i,j+1}  - \hat{\Omega}_{i,j+1} - \frac{1}{2}e_3 I_n\\
     &+(\tilde{L}_{i,j+1} - \hat{L}_{i,*})^T(\gamma^2I_q - D^T\hat{P}_{i}D)(\tilde{L}_{i,j+1} - \hat{L}_{i,*}) = 0. \nonumber
\end{align}
By \eqref{eq:Qibound} and Lemma \ref{lm:lyapunov}, $\hat{P}_{i} - \frac{1}{2}e_3(\hat{K}_i) \hat{M}_{i,j+1} \succeq 0$. As a consequence $\Tr(\hat{M}_{i,j+1}) \leq 2/e_3\Tr(\hat{P}_i)$.

From \eqref{eq:Phatdiffbound}, we have
\begin{align}
\begin{split}
     &\Tr(\hat{P}_i - \hat{P}_{i,j+1} ) \leq  (1-\frac{1}{b(\hat{K}_i)})\Tr(\hat{P}_i - \hat{P}_{i,j}) \\
     &+ \frac{2}{e_3(\hat{K}_i)}\Tr(\hat{P}_i)\gamma^2\norm{\Delta L_{i}}_\infty^2.
\end{split}
\end{align}
Using Lemma \ref{lm:normInequ} and repeating the above argument for $j,j-1,\cdots,1$, it follows that
\begin{align}
\begin{split}
    &\norm{\hat{P}_i - \hat{P}_{i,j}}_F \leq(1-\frac{1}{b(\hat{K}_i)})^{j-1}\sqrt{n}\norm{\hat{P}_i - \hat{P}_{i,1}}_F \\
    &+ \frac{2}{e_3(\hat{K}_i)}b(\hat{K}_i)\Tr(\hat{P}_i)\gamma^2\norm{\Delta L_{i}}_\infty^2.     
\end{split}
\end{align}
Clearly, $\kappa_2(\cdot,\cdot)$ defined as $\kappa_2(\norm{\hat{P}_i - \hat{P}_{i,1}}_F,j) = (1-\frac{1}{b(\hat{K}_i)})^{j-1}\sqrt{n}\norm{\hat{P}_i - \hat{P}_{i,1}}_F$ is a $\mathcal{KL}$-function, and $\xi_2(\cdot)$ defined as $\xi_2(\norm{\Delta L_{i}}_\infty) = \frac{2}{e_3(\hat{K}_i)}b(\hat{K}_i)\Tr(\hat{P}_i)\gamma^2\norm{\Delta L_{i}}_\infty^2$ is a $\mathcal{K}$-function. Therefore, we can conclude that the inexact inner-loop iteration is ISS.

\stepcounter{appidx}
\setcounter{equation}{0}

\section*{Appendix \Alph{appidx}: Proof of Theorem \ref{thm:initialController}}

Since when $\epsilon=\mu=0$ and $\tau \to \infty$, the feasible set of the LMIs \eqref{eq:hinfLMISysID1} and \eqref{eq:hinfLMISysID2} is nonempty, by continuity, \eqref{eq:hinfLMISysID1} and \eqref{eq:hinfLMISysID2} has a solution for sufficiently small $\epsilon$ and $\mu$ and sufficiently large $\tau$.

Equation \eqref{eq:hinfLMISysID1} implies that
\begin{align} \label{eq:hinfLMISysID3}
\begin{split}
&\begin{bmatrix}
-{W} &* &* &* \\
0  &-\gamma ^2  I_q &* &* \\
{A}{W} - {B}{V} &{D} &-{W} &* \\
C{W} - E{V} &0 &0 &-I_p
\end{bmatrix} \\
&+    \begin{bmatrix}
\epsilon I_n &* &* &* \\
0  &\epsilon I_q &* &* \\
\tilde{A}_\tau{W} - \tilde{B}_\tau{V} &\tilde{D}_\tau &\epsilon I_n &* \\
0 &0 &0 &\epsilon I_p
\end{bmatrix} \prec 0,
\end{split}
\end{align}
where $\tilde{A}_\tau = \hat{A}_\tau - A$, $\tilde{B}_\tau = \hat{B}_\tau - B$, and $\tilde{D}_\tau = \hat{D}_\tau - D$. By Schur complement lemma, it follows from \eqref{eq:hinfLMISysID2} that
\begin{align}\label{eq:hinfLMISysID4}
    I_n - \mu^2[W, -V^T]\begin{bmatrix}
        W \\
        -V
    \end{bmatrix} \succ 0.
\end{align}

Since the following relation holds almost surely 
\begin{align}
    \lim_{\tau \to \infty }\tilde{A}_\tau = 0, \quad \lim_{\tau \to \infty }\tilde{B}_\tau = 0, \quad \lim_{\tau \to \infty }\tilde{D}_\tau = 0, 
\end{align}
there exists $\tau^*(\epsilon,\mu) > 0$, such that for all $\tau > \tau^*(\epsilon,\mu)$, the following inequality holds \textit{almost surely}:
\begin{align}
\norm{\tilde{A}_\tau}_F \le \frac{\epsilon\mu}{\sqrt{2}},\quad \norm{\tilde{B}_\tau}_F \le \frac{\epsilon\mu}{\sqrt{2}}, \quad \norm{\tilde{D}_\tau}_F \le \frac{\epsilon}{{2}}.
\end{align}    
Consequently, $\norm{[\tilde{A}, \tilde{B}]} \le \frac{\epsilon\mu}{\sqrt{2}}$, and
\begin{align}\label{eq:WVDWV}
    &[0, {W}\tilde{A}^T - {V}^T\tilde{B}^T]
    \begin{bmatrix}
        \epsilon I_q &\tilde{D}^T \\
        \tilde{D} & \epsilon I_n
    \end{bmatrix}^{-1}
    \begin{bmatrix}
        0 \\
        \tilde{A}{W} - \tilde{B}{V}
    \end{bmatrix} \nonumber\\
    &= [W, -V^T]
    \begin{bmatrix}
        \tilde{A}^T \\
        \tilde{B}^T
    \end{bmatrix}
    (\epsilon I_n -\frac{1}{\epsilon}\tilde{D}\tilde{D}^T)^{-1}
    \begin{bmatrix}
        \tilde{A} &\tilde{B}
    \end{bmatrix}
    \begin{bmatrix}
        W \\
        -V
    \end{bmatrix} \nonumber\\
    &\preceq 
    \frac{2}{\epsilon}[W, -V^T]
    \begin{bmatrix}
        \tilde{A}^T \\
        \tilde{B}^T
    \end{bmatrix}
    \begin{bmatrix}
        \tilde{A} &\tilde{B}
    \end{bmatrix}
    \begin{bmatrix}
        W \\
        -V
    \end{bmatrix} \nonumber \\
    &\preceq \epsilon \mu^2 [W, -V^T]\begin{bmatrix}
        W \\
        -V
    \end{bmatrix}.
\end{align}    
Therefore, by combining \eqref{eq:hinfLMISysID4} and \eqref{eq:WVDWV}, one can obtain
\begin{align}\label{eq:hinfLMISysID5}
    \epsilon I_n - [0, {W}\tilde{A}^T - {V}^T\tilde{B}^T]
    \begin{bmatrix}
        \epsilon I_q &\tilde{D}^T \\
        \tilde{D} & \epsilon I_n
    \end{bmatrix}^{-1}
    \begin{bmatrix}
        0 \\
        \tilde{A}{W} - \tilde{B}{V}
    \end{bmatrix} \succeq 0.
\end{align}
Using Schur complement lemma, we can obtain that 
the second term in \eqref{eq:hinfLMISysID3} is positive semi-definite, and therefore, the first term in \eqref{eq:hinfLMISysID3} is negative definite. By Lemma \ref{lm:realbounded}, it follows that $K = VW^{-1} \in \mathcal{W}$.
    

\bibliographystyle{ieeetr}        
\bibliography{IEEEabrv,mybibfile.bib}

\begin{thebibliography}{10}

\bibitem{book_sutton}
R.~S. Sutton and A.~G. Barto, {\em Reinforcement Learning: An Introduction}.
\newblock MIT press, 2nd ed.~ed., 2018.

\bibitem{Bu2020_lqr}
J.~{Bu}, A.~{Mesbahi}, and M.~{Mesbahi}, ``{Policy gradient-based algorithms
  for continuous-time linear quadratic control},'' {\em arXiv e-prints},
  p.~arXiv:2006.09178, June 2020.

\bibitem{Bu2019_lqrdiscrete}
J.~{Bu}, A.~{Mesbahi}, M.~{Fazel}, and M.~{Mesbahi}, ``{{LQR} through the lens
  of first order methods: Discrete-time case},'' {\em arXiv e-prints},
  p.~arXiv:1907.08921, July 2019.

\bibitem{Mohammadi2022}
H.~Mohammadi, A.~Zare, M.~Soltanolkotabi, and M.~R. Jovanović, ``Convergence
  and sample complexity of gradient methods for the model-free
  linear–quadratic regulator problem,'' {\em {IEEE} Trans. Autom. Control},
  vol.~67, no.~5, pp.~2435--2450, 2022.

\bibitem{Fazel_2018}
M.~Fazel, R.~Ge, S.~Kakade, and M.~Mesbahi, ``Global convergence of policy
  gradient methods for the linear quadratic regulator,'' in {\em Proceedings of
  the 35th International Conference on Machine Learning}, vol.~80 of {\em
  Proceedings of Machine Learning Research}, pp.~1467--1476, PMLR, 10--15 Jul
  2018.

\bibitem{Gravell2021}
B.~Gravell, P.~M. Esfahani, and T.~Summers, ``Learning optimal controllers for
  linear systems with multiplicative noise via policy gradient,'' {\em {IEEE}
  Trans. Autom. Control}, vol.~66, no.~11, pp.~5283--5298, 2021.

\bibitem{Li2021}
Y.~Li, Y.~Tang, R.~Zhang, and N.~Li, ``Distributed reinforcement learning for
  decentralized linear quadratic control: A derivative-free policy optimization
  approach,'' {\em {IEEE} Trans. Autom. Control}, vol.~67, no.~12,
  pp.~6429--6444, 2022.

\bibitem{Hewer1971}
G.~Hewer, ``An iterative technique for the computation of the steady state
  gains for the discrete optimal regulator,'' {\em {IEEE} Trans. Autom.
  Control}, vol.~16, no.~4, pp.~382--384, 1971.

\bibitem{Kleinman1968}
D.~Kleinman, ``On an iterative technique for {R}iccati equation computations,''
  {\em {IEEE} Trans. Autom. Control}, vol.~13, no.~1, pp.~114--115, 1968.

\bibitem{book_Jiang}
Y.~Jiang and Z.~P. Jiang, {\em Robust Adaptive Dynamic Programming}.
\newblock Wiley-IEEE Press, 2017.

\bibitem{Book_Lewis}
F.~L. Lewis, D.~Vrabie, and V.~L. Syros, {\em Optimal Control}.
\newblock John Wiley \& Sons, 2012.

\bibitem{Sontag2008}
E.~Sontag, {\em Input to state stability: Basic concepts and results},
  pp.~163--220.
\newblock Lecture Notes in Mathematics, Germany: Springer Verlag, 2008.

\bibitem{Pang2021}
B.~Pang and Z.~P. Jiang, ``Robust reinforcement learning: a case study in
  linear quadratic regulation,'' {\em Proceedings of the AAAI Conference on
  Artificial Intelligence}, vol.~35, pp.~9303--9311, May 2021.

\bibitem{Pang2022}
B.~Pang, T.~Bian, and Z.~P. Jiang, ``Robust policy iteration for
  continuous-time linear quadratic regulation,'' {\em {IEEE} Trans. Autom.
  Control}, vol.~67, no.~1, pp.~504--511, 2022.

\bibitem{Sontag2022}
E.~D. Sontag, ``Remarks on input to state stability of perturbed gradient
  flows, motivated by model-free feedback control learning,'' {\em Systems \&
  Control Letters}, vol.~161, p.~105138, 2022.

\bibitem{Jacobson1973}
D.~Jacobson, ``Optimal stochastic linear systems with exponential performance
  criteria and their relation to deterministic differential games,'' {\em
  {IEEE} Trans. Autom. Control}, vol.~18, no.~2, pp.~124--131, 1973.

\bibitem{Whittle1981}
P.~Whittle, ``Risk-sensitive linear/quadratic/{Gaussian} control,'' {\em
  Advances in Applied Probability}, vol.~13, no.~4, pp.~764--777, 1981.

\bibitem{glover1988state}
K.~Glover and J.~C. Doyle, ``State-space formulae for all stabilizing
  controllers that satisfy an $\mathcal{H}_\infty$-norm bound and relations to
  relations to risk sensitivity,'' {\em Systems \& control letters}, vol.~11,
  no.~3, pp.~167--172, 1988.

\bibitem{book_Basar}
T.~Başar and P.~Bernhard, {\em $H_\infty$-optimal Control and Related Minimax
  Design Problems: A Dynamic Game Approach}.
\newblock New York, USA: Springer, 1995.

\bibitem{BORKAR2001339}
V.~Borkar, ``A sensitivity formula for risk-sensitive cost and the
  actor–critic algorithm,'' {\em Systems \& Control Letters}, vol.~44, no.~5,
  pp.~339--346, 2001.

\bibitem{Borkar2002}
V.~S. Borkar, ``Q-learning for risk-sensitive control,'' {\em Mathematics of
  Operations Research}, vol.~27, no.~2, pp.~294--311, 2002.

\bibitem{mihatsch2002risk}
O.~Mihatsch and R.~Neuneier, ``Risk-sensitive reinforcement learning,'' {\em
  Machine learning}, vol.~49, pp.~267--290, 2002.

\bibitem{Tamimi2007}
A.~Al-Tamimi, F.~L. Lewis, and M.~Abu-Khalaf, ``Model-free {Q}-learning designs
  for linear discrete-time zero-sum games with application to h-infinity
  control,'' {\em Automatica}, vol.~43, no.~3, pp.~473--481, 2007.

\bibitem{Yufeng2021}
Y.~Zhang, Z.~Yang, and Z.~Wang, ``Provably eﬃcient actor-critic for
  risk-sensitive and robust adversarial {RL}: A linear-quadratic case,'' in
  {\em Proceedings of The 24th International Conference on Artificial
  Intelligence and Statistics} (A.~Banerjee and K.~Fukumizu, eds.), vol.~130 of
  {\em Proceedings of Machine Learning Research}, pp.~2764--2772, PMLR, 13--15
  Apr 2021.

\bibitem{ZhangSICON}
K.~Zhang, B.~Hu, and T.~Ba\c{s}ar, ``Policy optimization for $\mathcal{H}_2$
  linear control with $\mathcal{H}_\infty$ robustness guarantee: implicit
  regularization and global convergence,'' {\em SIAM J. Control Optim.},
  vol.~59, no.~6, pp.~4081--4109, 2021.

\bibitem{Zhang2019_game}
K.~{Zhang}, Z.~{Yang}, and T.~{Ba{\c{s}}ar}, ``{Policy optimization provably
  converges to {Nash} equilibria in zero-sum linear quadratic games},'' {\em
  arXiv e-prints}, p.~arXiv:1906.00729, May 2019.

\bibitem{Bu2019_game}
J.~{Bu}, L.~J. {Ratliff}, and M.~{Mesbahi}, ``{Global convergence of policy
  gradient for sequential zero-sum linear quadratic dynamic games},'' {\em
  arXiv e-prints}, p.~arXiv:1911.04672, Nov. 2019.

\bibitem{ZhangNeuIps2020}
K.~Zhang, B.~Hu, and T.~Ba\c{s}ar, ``On the stability and convergence of robust
  adversarial reinforcement learning: A case study on linear quadratic
  systems,'' in {\em Advances in Neural Information Processing Systems},
  vol.~33, pp.~22056--22068, 2020.

\bibitem{Lewis2007}
A.~Al-Tamimi, F.~L. Lewis, and M.~Abu-Khalaf, ``Model-free {Q}-learning designs
  for linear discrete-time zero-sum games with application to ${H}_\infty$
  control,'' {\em Automatica}, vol.~43, no.~3, pp.~473--481, 2007.

\bibitem{Lewis_2006}
M.~Abu-Khalaf, F.~L. Lewis, and J.~Huang, ``Policy iterations on the
  {Hamilton–Jacobi–Isaacs equation} for ${H}_{\infty}$ state feedback
  control with input saturation,'' {\em {IEEE} Trans. Autom. Control}, vol.~51,
  no.~12, pp.~1989--1995, 2006.

\bibitem{Kyriakos2012}
K.~G. Vamvoudakis and F.~Lewis, ``Online solution of nonlinear two-player
  zero-sum games using synchronous policy iteration,'' {\em International
  Journal of Robust and Nonlinear Control}, vol.~22, no.~13, pp.~1460--1483,
  2012.

\bibitem{tutorial_Jiang}
Z.~P. Jiang, T.~Bian, and W.~Gao, ``Learning-based control: A tutorial and some
  recent results,'' {\em Foundations and Trends® in Systems and Control},
  vol.~8, no.~3, pp.~176--284, 2020.

\bibitem{book_Bertsekas_Neu}
D.~P. Bertsekas and J.~N. Tsitsiklis, {\em Neuro-Dynamic Programming.}
\newblock Belmont, MA: Athena Scientific, 1996.

\bibitem{CuiL4DC2023}
L.~Cui, T.~Ba\c{s}ar, and Z.~P. Jiang, ``A reinforcement learning look at
  risk-sensitive linear quadratic gaussian control,'' in {\em Proceedings of
  The 5th Annual Learning for Dynamics and Control Conference} (N.~Matni,
  M.~Morari, and G.~J. Pappas, eds.), vol.~211 of {\em Proceedings of Machine
  Learning Research}, pp.~534--546, PMLR, 15--16 Jun 2023.

\bibitem{zhangarxiv2019}
K.~Zhang, B.~Hu, and T.~Başar, ``Policy optimization for $\mathcal{H}_2 $
  linear control with $\mathcal{H}_\infty$ robustness guarantee: implicit
  regularization and global convergence,'' {\em arXiv:1910.09496}, 2019.

\bibitem{book_zhou}
K.~Zhou, J.~C. Doyle, and K.~Glover, {\em Robust and Optimal Control}.
\newblock Princeton, New Jersey: Prentice Hall, 1996.

\bibitem{Jiang2018}
Z.~P. Jiang and T.~Liu, ``Small-gain theory for stability and control of
  dynamical networks: A survey,'' {\em Annual Reviews in Control}, vol.~46,
  pp.~58--79, 2018.

\bibitem{Zames1966}
G.~Zames, ``On the input-output stability of time-varying nonlinear feedback
  systems part one: Conditions derived using concepts of loop gain, conicity,
  and positivity,'' {\em {IEEE} Trans. Autom. Control}, vol.~11, no.~2,
  pp.~228--238, 1966.

\bibitem{book_Hahn}
W.~Hahn, {\em Stability of Motion}.
\newblock Springer Berlin, Heidelberg, 1967.

\bibitem{Jiang2001}
Z.~P. Jiang and Y.~Wang, ``Input-to-state stability for discrete-time nonlinear
  systems,'' {\em Automatica}, vol.~37, no.~6, pp.~857--869, 2001.

\bibitem{Tsitsiklis1994}
J.~N. Tsitsiklis, ``Asynchronous stochastic approximation and {Q}-learning,''
  {\em Machine Learning}, vol.~16, no.~3, pp.~185--202, 1994.

\bibitem{Tsitsiklis1997}
J.~Tsitsiklis and B.~Van~Roy, ``An analysis of temporal-difference learning
  with function approximation,'' {\em {IEEE} Trans. Autom. Control}, vol.~42,
  no.~5, pp.~674--690, 1997.

\bibitem{Jiang_book2021}
Z.~P. Jiang, C.~Prieur, and A.~Astolfi~(Editors), {\em {Trends in Nonlinear and
  Adaptive Control: A Tribute to Laurent Praly for His 65th Birthday}}.
\newblock NY, USA: Springer Nature, 2021.

\bibitem{astrom1997}
K.~J. Åström and B.~Wittenmark, {\em Adaptive control, 2nd Edition}.
\newblock MA, USA: Addison-Wesley, 1997.

\bibitem{Book_Lewis2013}
F.~L. Lewis and D.~Liu, {\em Reinforcement Learning and Approximate Dynamic
  Programming for Feedback Control}.
\newblock NJ, USA: Wiley-IEEE Press, 2013.

\bibitem{book_Koralov}
L.~Koralov and Y.~G.~Sinai, {\em Theory of Probability and Random Processes}.
\newblock Springer Berlin, Heidelberg, 2nd~ed., 2007.

\bibitem{Zhang2021_NIPS}
K.~Zhang, X.~Zhang, B.~Hu, and T.~Ba\c{s}ar, ``Derivative-free policy
  optimization for linear risk-sensitive and robust control design: Implicit
  regularization and sample complexity,'' in {\em Advances in Neural
  Information Processing Systems}, vol.~34, pp.~2949--2964, 2021.

\bibitem{Anderson1989}
C.~Anderson, ``Learning to control an inverted pendulum using neural
  networks,'' {\em IEEE Control Systems Magazine}, vol.~9, no.~3, pp.~31--37,
  1989.

\bibitem{book_chen}
C.-T. Chen, {\em Linear System Theory and Design}.
\newblock New York, USA: Oxford University Press, 3rd~ed., 1999.

\bibitem{book_Magnus}
J.~R. Magnus and H.~Neudecker, {\em Matrix Differential Calculus with
  Applications in Statistics and Econometrics}.
\newblock New York: Wiley, 2007.

\bibitem{Magnus1985}
J.~R. Magnus and H.~Neudecker, ``Matrix differential calculus with applications
  to simple, {H}adamard, and {K}ronecker products,'' {\em Journal of
  Mathematical Psychology}, vol.~29, no.~4, pp.~474--492, 1985.

\bibitem{Wang1986}
S.-D. Wang, T.-S. Kuo, and C.-F. Hsu, ``Trace bounds on the solution of the
  algebraic matrix {R}iccati and {L}yapunov equation,'' {\em IEEE Transactions
  on Automatic Control}, vol.~31, no.~7, pp.~654--656, 1986.

\bibitem{book_Hespanha}
J.~P. Hespanha, {\em Linear Systems Theory}.
\newblock Princeton, New Jersey: Princeton Press, Feb. 2018.
\newblock ISBN13: 9780691179575.

\end{thebibliography}

\end{document}